\newtheorem{thm}{Theorem}
\newtheorem{example}{Example}
\newtheorem{defi}{Definition}
\newtheorem{lem}{Lemma}
\newtheorem{assumption}{Assumption}
\newtheorem{cond}{Condition}
\newcommand{\vertiii}[1]{{\left\vert\kern-0.25ex\left\vert\kern-0.25ex\left\vert #1 
    \right\vert\kern-0.25ex\right\vert\kern-0.25ex\right\vert}}
\newcommand{\hp}{{\hat{\bm p}}}
\newcommand{\pb}{{\bm p(\beta)}}
\renewcommand{\leq}{\leqslant}
\renewcommand{\geq}{\geqslant}
\newcommand{\sech}{\mathrm{sech}}
\newcommand{\eps}{{\varepsilon}}
\newcommand{\e}{{\mathbb{E}}}
\newcommand{\E}{\mathbb{E}}
\newcommand{\R}{{\mathbb{R}}}
\newcommand{\p}{{\mathbb{P}}}
\title[Robust Estimation for Dependent Binary Network Data]{Robust Estimation for Dependent Binary Network Data}
\author[Liu]{Tianyu Liu} 
\address{Department of Statistics and Data Science, National Institute of Singapore, Singapore, {\tt tianyu.liu@u.nus.edu}}
\author[Mukherjee]{Somabha Mukherjee}
\address{Department of Statistics and Data Science, National Institute of Singapore, Singapore, {\tt  somabha@nus.edu.sg}}
\author[Ghosh]{Abhik Ghosh}
\address{Interdisciplinary Statistical Research Unit, Indian Statistical Institute, Kolkata, India, {\tt abhik.ghosh@isical.ac.in}}
\begin{document}
\begin{abstract}
    We consider the problem of learning the interaction strength between the nodes of a network based on dependent binary observations residing on these nodes, generated from a Markov Random Field (MRF). Since these observations can possibly be corrupted/noisy in larger networks in practice, it is important to robustly estimate the parameters of the underlying true MRF to account for such inherent contamination in observed data. However, it is well-known that classical likelihood and pseudolikelihood based approaches are highly sensitive to even a small amount of data contamination. So, in this paper, we propose a density power divergence (DPD) based robust generalization of the computationally efficient maximum pseudolikelihood (MPL) estimator of the interaction strength parameter, and derive its rate of consistency under the pure model. Along the way, we establish consistency and asymptotics for a class of
    general $Z$-estimators, covering our proposed DPD based estimators, 
    under flexible assumptions that hold for a substantial class of standard models. 
    To the best of our knowledge, these are the first central limit theorems for the class of 
    general $Z$-estimators in such settings.  
    Moreover, we show that the gross error sensitivities of the proposed DPD based estimators are significantly smaller than that of the MPL estimator, thereby theoretically justifying the greater (local) robustness of the former under contaminated settings. Finally, we demonstrate the superior (finite sample) performance of the DPD based variants over the traditional MPL estimator in a number of synthetically generated contaminated network datasets, and apply them to learn the network interaction strength in several real datasets from diverse domains of social science, neurobiology and genomics.  
\end{abstract}

\keywords{Network Data, Robust Estimation, Markov Random Field, Pseudolikelihood, Density Power Divergence}

\maketitle

\section{Introduction}

The rapid emergence of complex network data over the past few decades across a wide range of disciplines encompassing machine learning, biological and medical sciences, economics, marketing and finance industries, and social sciences, has created a pressing need for the development of computationally tractable algorithms for learning network and graphical models. Peer effects have long been recognized as influential in shaping decision-making across a variety of domains, such as voting behavior, consumer choices, recommender systems, online platforms, financial decision-making, and adolescent risk-taking. In spite of the presence of a rich body of literature in social sciences on peer influences, rigorously estimating the underlying mechanisms by which peer effects operate, remains a challenging and compelling problem in settings involving dependent network data. In many practical scenarios, one does not even have access to multiple samples over the same network under consideration and has to rely on just a single observation (on each node) for learning the underlying model. For instance, in an epidemic network at a specific time, it's not possible to have multiple independent outcomes from the same outbreak. Similarly, in an election, only one set of voting results is possible across a population network. In a criminal network, authorities usually have to rely on a single realization (the activities of criminals in a specific region) to predict future crimes.

In most practical settings, the outcomes associated with the nodes of a network are binary or discrete in nature and also dependent across the edges of the network. Classical statistical methods, such as logistic or multinomial regression, prove inadequate for modeling such scenarios, as they generally rely on the assumption of independent responses. A natural and widely used class of network models for such settings is that of Markov Random Fields (MRFs), which provide a flexible framework for modeling network dependencies. This class includes the Ising model (for binary outcomes) and the Potts model (for discrete observations) as two popular variants.
In particular, the Ising model is a discrete exponential family for modeling network-dependent binary data which was originally coined by physicists as a model for ferromagnetism \cite{ising}, and has found immense applications since then in a variety of areas, such as image processing, neural networks, spatial statistics, computational biology, epidemiology, social sciences and political sciences \cite{geman_graffinge, neural, spatial, ising_dna_correlations, disease,  innovations, election_spin}. For a network with $N$ nodes, it is a probability distribution on the hypercube $\{-1,1\}^N$, given by:
\begin{equation}\label{modeldef}
    \p_{\beta}(\bm x) := \frac{1}{2^N Z(\beta)}e^{\beta H(\bm x)}\quad, 
    ~~~\bm x = (x_1, \cdots, x_N)^\top \in \{-1,1\}^N,
\end{equation}
where $x_i$ is the observed outcome at the $i$-th node of the network, 
$\beta > 0$ is a model parameter, 
$H$ is the sufficient statistic having the form: 
$$H(\bm x) := \sum_{1\le i<j\le N} J_{i,j}x_i x_j,$$
for some symmetric interaction matrix $\bm J = \bm J_N := ((J_{i,j}))_{1\le i,j\le N}$ with zeros on the diagonal, and $Z(\beta)$ is the normalizing constant needed to ensure that the probabilities in \eqref{modeldef} add up to $1$. 
In physics, the quantities $\beta$, $H$ and $Z$ are often referred to as  the \textit{inverse temperature}, the \textit{Hamiltonian} and the \textit{partition function}, respectively. 

As noted earlier, one often does not have the luxury of accessing multiple samples from network models for learning, and must instead rely on a single observed sample ($\bm x$).
In the context of the Ising model \eqref{modeldef}, learning the parameter $\beta$ from only one sample $\bm x \sim \p_{\beta}$ by a simple maximum likelihood approach is not a viable option, due to the presence of the computationally expensive normalizing term $Z(\beta)$. An alternative, computationally efficient approach suggested in \cite{chatterjee}, was to use Julian Besag's \textit{maximum pseudolikelihood (MPL)} estimator \cite{besag_lattice, besag_nl}:
\begin{equation}\label{MPL}
  \hat{\beta}_{MPL} := \arg \max_{\beta} \mathcal{L}(\beta) 
\end{equation}
where the \textit{pseudolikelihood function} $\mathcal{L}(\beta)$ is defined as:
$$\mathcal{L}(\beta) := \prod_{i=1}^N \p_{\beta}\left(X_i|\bm X_{-i}\right),$$
with $\bm X_{-i} := (X_{j})_{j\ne i}$. 
The presence of the conditional probabilities makes the computation of the pseudolikelihood function easier, as they are free of any term involving the partition function.  In fact, the pseudolikelihood function has an explicit form, and one can use methods as simple as Newton-Raphson or even grid search to maximize it. Consistency and rate of convergence of the MPL estimator were established in \cite{chatterjee} under certain assumptions; this result was subsequently extended by \cite{BM16} and \cite{joint} to obtain different rates of convergence of the MPL estimator for Ising models on weighted graphs and in the context of joint estimation of parameters in Ising models. Rates of convergence of the MPL estimator in Ising-type generalizations of the logistic regression model with network-dependent responses were established in \cite{cd_ising_II, cd_ising_I,mukherjee_jmlr}. Asymptotic distributions of the MPL estimators in the fully-connected Ising model (i.e. when all the off-diagonal entries of the matrix $\bm J$ are $1/N$) were established in \cite{comets}, which can be used to construct confidence intervals for the model parameters. Generalizations of some of these results for the tensor version of the Ising model \eqref{modeldef} can be found in \cite{psm1, psm2, psm3}.

Although throughout this paper, we are going to assume that the matrix $\bm J$ is known and the task is to estimate the parameter $\beta$ only, we would like to mention that a closely related area of interest, known as \textit{structure learning}, assumes that the matrix $\bm J$ is unknown, and the task is to recover it (see \cite{bresler, bms,highdim_ising,vml20,vmlc16}). Structure learning involves estimating $\binom{N}{2}$ free parameters (upper-diagonal entries of $\bm J$), which inherently demands access to multiple i.i.d. samples from the model. However, such data is typically unavailable or unrealistic in many practical scenarios as discussed earlier. 

As noted in \cite{chen2011learning}, classical algorithms for parameter estimation and structure learning in network data do not account for the challenges of missing or contaminated data. For instance, in social networks, it is often the case that only a partial set of factors is observed, with certain behaviors or interactions left unrecorded. Moreover, as mentioned in \cite{katiyar2020robust}, malfunctioning sensors in networks can lead to the reporting of corrupted/flipped outcomes. As a result, developing methods for learning graphical models from incomplete or contaminated data is of paramount interest. In spite of the existence of a considerable amount of literature on parameter estimation and structure learning in Ising models based on full and uncontaminated data, there is a relative scarcity of analogous robust learning procedures when the underlying data is corrupted or missing. And, most, if not all, of the available body of work in the latter direction focuses on robust structure learning using multiple samples (see, for example, \cite{hubers_contamination_ising, dobrushin_condition_robust_ising, ising_models_independent_failures, robust_learning_ising, semi_supervised_ising, ising_models_independent_failures, katiyar2020robust} and the references therein), but the problem of robustly estimating the parameter $\beta$ using only one sample (and known $\bm J$) has been largely unexplored. 

In this paper, we partly fill this gap in the literature of network data modeling by developing a robust estimator of the parameter $\beta$ in \eqref{modeldef} based only on one sample at risk of possible contamination. 
For this purpose, we adopt the approach of minimum divergence inference, which estimates model parameters by minimizing a suitable statistical divergence between the data and the model. The choice of a particular divergence governs the properties of the resulting estimator, such as its asymptotic efficiency and robustness. The use of appropriate density based divergence measures leads to estimators which are both highly efficient under pure data and also highly robust under possible data contamination \cite{basu2011statistical}, making this approach popular for generating robust estimators under different parametric set-ups. Here we use one such popular family of divergences, called the \textit{density power divergence (DPD)}, originally introduced in \cite{basu1998robust} for independent and identically distributed data. 
This divergence family is defined in terms of a positive-valued tuning parameters that controls the trade-offs between the efficiency and robustness of the resulting minimum divergence estimator. Due to its nice interpretations as a generalization of the maximum likelihood estimator and a model-dependent $M$-estimator, relatively easier computational burden, and high efficiency and robustness properties, this minimum DPD (MDPD) estimator has become extremely popular in the literature of robust parametric inference in recent times, and has been successfully applied to a wide variety of data types and parametric models. 
These include regression models and general non-homogeneous set-ups  \cite{ghosh2013robust,ghosh2015influence,ghosh2016robust,ghosh2017robust} 
as well as high dimensional regression models \cite{ghosh2020ultrahigh, ghosh2023robustness,ghosh2024robust}.

\subsection{Our Contributions}
\textcolor{black}{ 
In this paper, we extend the framework of MDPD estimation to the Ising model \eqref{modeldef} based on only one sample observed from this model, and study the theoretical properties of the MDPD estimators. In particular, we show that the resulting MDPD estimator of the parameter $\beta$ in the model \eqref{modeldef} is asymptotically $\sqrt{N}$-consistent (as $N \rightarrow\infty$) under standard assumptions (Section \ref{sec:sccons}), and also locally B-robust (Section \ref{sec:locrob8}) with significantly smaller gross error sensitivities.
The underlying conditions are quite general and are verified for a wide variety of network structures, including common graph ensembles such as regular graphs, Erd\H{o}s-R\'enyi random graphs, general dense graphs, as well as for other types of interaction structures, such as spin glass models in statistical physics, encompassing the celebrated Sherrington-Kirkpatrick model and the Hopfield model.} 

We also establish asymptotics of a class of general $Z$-estimators 
(which encompasses our proposed MDPD estimators) for pure data under a set of flexible assumptions 
that can be verified for Ising models on a wide variety of approximately regular graphs that are not too sparse (Section \ref{sec:cltasp8}). To the best of our knowledge, 
this is the very first attempt towards proving consistency and central limit theorems for $Z$-estimators in such general settings, without assuming any specific network structure such as the complete graph or the Erd\H{o}s-R\'enyi graph. 
For such Ising models, particularly those on slightly dense Erd\H{o}s-R\'enyi and stochastic block models, 
the MDPD estimators are seen to exhibit improved robustness under contamination 
in spite of having full asymptotic relative efficiency with respect to the MPL estimator (see Section \ref{sec:tradeoffn8}). 
In the literature of robust statistics, this is also a remarkable discovery 
of statistical model set-ups where MDPD estimators show such a favorable behavior, 
making them at par with the classical minimum disparity estimators 
(such as the minimum Hellinger distance estimator).

Further, we examine finite sample performances of the MDPD estimators of $\beta$ under both pure and contaminated synthetic data generated from different commonly used network structures through extensive simulation studies (Section \ref{sec:simulatons}). Finally, we demonstrate the applicability and advantages of our proposed methodology in real-world datasets from the domains of social networks, neurobiology, and genomics (Section \ref{sec:real data}). Our findings are consistent with domain knowledge: for datasets that are pre-cleaned or have little risk of contamination, the MDPD estimators perform on par with the MPL estimator, whereas in datasets with plausible contamination, MDPD estimators exhibit remarkably better performances across domains. 
Finally, we conclude with a discussion on potential extensions of our work to more general MRFs, such as the Potts model and tensor Ising models, and further to logistic regression models with network-dependent responses.

\section{Proposed Methodology and Theoretical Guarantees}

In this section, we introduce the minimum density power divergence (MDPD) estimator and describe its theoretical guarantees.

\subsection{The Minimum Density Power Divergence Estimator}

We start by describing the minimum DPD inference framework in the context of the Ising model \eqref{modeldef}. Following the construction in \cite{basu2017wald}, let us consider the probability vectors:
$$\bm p(\beta) := \left(\frac{\p_\beta(X_1=1|\bm x_{-1})}{N}, \frac{\p_\beta(X_1=-1|\bm x_{-1})}{N},\cdots,\frac{\p_\beta(X_N=1|\bm x_{-N})}{N}, \frac{\p_\beta(X_N=-1|\bm x_{-N})}{N}\right)^\top$$
and
$$\hat{\bm p} := \left(\frac{y_1}{N},\frac{1-y_1}{N},...,\frac{y_N}{N},\frac{1-y_N}{N}\right)^\top,$$
where $\bm x_{-i} := (x_j)_{j\ne i}$ and $y_i := (1+x_i)/2$. Then, the Kullback-Leibler (KL) divergence between these probability vectors $\hat{\bm p}$ and $\bm p(\beta)$ turns out to be
\begin{eqnarray*}
    d_{KL}(\hp, \pb) &=& \frac{1}{N}\sum_{i=1}^N \left[y_i \log \frac{y_i}{\p_{\beta}(X_i=1|\bm x_{-i})} + (1-y_i) \log \frac{1-y_i}{\p_{\beta}(X_i=-1|\bm x_{-i})}\right]\\&=& -\frac{1}{N} \sum_{i=1}^N \left[y_i \log \p_{\beta}(X_i=1|\bm x_{-i}) + (1-y_i) \log \p_{\beta}(X_i=-1|\bm x_{-i})\right] + I(\bm y)\\&=& -\frac{1}{N} \sum_{i=1}^N \log \p_\beta(x_i|\bm x_{-i}) + I(\bm y) = -\frac{1}{N} \log \mathcal{L}(\beta) + I(\bm y),
\end{eqnarray*}
where $I(\bm y) := N^{-1}\sum_{i=1}^N [y_i \log y_i + (1-y_i)\log (1-y_i)]$. Consequently, the MPL estimator of $\beta$, previously defined in \eqref{MPL}, can equivalently be defined as :
$$\hat{\beta}_{MPL} = \arg \min_{\beta} d_{KL}(\hp, \pb).$$

This gives us a natural way to define suitable generalizations of the MPL estimator of $\beta$ by minimizing any valid divergence between $\hp$ and $\pb$, instead of the KL divergence. In particular, to achieve the desired robustness under data contamination, we use the density power divergence (DPD) of \cite{basu1998robust} due to its excellent robustness properties. The DPD between the vectors $\hp$ and $\pb$ is given by:
\begin{equation}\label{dpd}
    d_{\lambda}(\hp, \pb)=\frac{1}{N^{1+\lambda}}\left\{ \sum_{i=1}^N\left(\sum_{j=1}^2 \pi_{ij}^{1+\lambda} -\left(1+\frac{1}{\lambda}\right)\sum_{j=1}^2 y_{ij}\pi_{ij}^{\lambda}\right) + \frac{N}{\lambda}\right\}, ~~~~~~~~~\lambda > 0,~~~~
\end{equation}
where $y_{i1} = y_i, y_{i2}=1-y_i$, $\pi_{i1} = \p_\beta(X_i=1|\bm X_{-i})$ and $\pi_{i2} = \p_\beta(X_i=-1|\bm X_{-i})$.
The corresponding \textit{minimum density power divergence (MDPD)} estimator of $\beta$ is then defined as: 
$$ \hat{\beta}_{\lambda}=\arg\min_{\beta} d_{\lambda}(\hp, \pb).$$
Here the tuning parameter $\lambda$ controls the trade-off between robustness and efficiency. 
In particular, it is easy to see that 
$\lim\limits_{\lambda\rightarrow0}~d_{\lambda}(\hp, \pb) = - N^{-1}\log \mathcal{L}(\beta)$, and hence the MDPD estimator coincides (in a limiting sense) with the efficient but highly non-robust MPL estimator as $\lambda\rightarrow0$.
The MDPD estimators achieve greater robustness, at a price of loss in pure data efficiency, as $\lambda$ increases; 
however, such a loss in efficiency is often not quite significant at smaller $\lambda>0$ as illustrated in Section \ref{sec:simulatons}.

In order to compute the MDPD estimator $\hat{\beta}_\lambda$
at any given $\lambda$, we can either numerically minimize the associated objective function $d_{\lambda}(\hp, \pb)$ 
given in \eqref{dpd} or alternatively solve the corresponding estimating equations derived below. For this purpose, it is more convenient to rewrite \eqref{dpd} as follows:
$$
d_{\lambda}(\hp, \pb)=\frac{1}{N^{1+\lambda}}\bigg\{ \sum_{i=1}^N\bigg(\pi_{i}^{1+\lambda} + (1-\pi_{i})^{1+\lambda} \\
-\left(1+\frac{1}{\lambda}\right)\big(y_i \pi_i^{\lambda}+(1-y_i)(1-\pi_i)^{\lambda}\big)\bigg) + \frac{N}{\lambda}\bigg\}$$
where 
$$
\pi_i := \p_\beta(X_i=1|\bm x_{-i}) = \frac{e^{2\beta m_i(\bm x)}}{1+ e^{2\beta m_i(\bm x)}}~,\quad\text{with} ~ m_i(\bm x) :=\sum_{j=1}^N J_{i,j} x_j.
$$
Note that $\frac{\partial \pi_i}{\partial \beta} = 2\pi_i(1-\pi_i) m_i(\bm x).$
Using this, we have:

\begin{eqnarray*}
    &&\frac{\partial d_{\lambda}(\hp,\pb)}{\partial \beta}\\ &=& 
     \frac{2(1+\lambda)}{N^{\lambda+1}}\sum_{i=1}^N m_i(\bm x)\bigg(\pi_i^{\lambda+1}(1-\pi_i)-\pi_i(1-\pi_i)^{\lambda+1}
     \\
     &&\quad\quad\quad\quad\quad\quad\quad\quad\quad\quad\quad
     -\pi_i^{\lambda}(1-\pi_i)y_i+\pi_i(1-\pi_i)^{\lambda}(1-y_i)\bigg) \\
    &= &\frac{2(1+\lambda)}{N^{\lambda+1}}\sum_{i=1}^N m_i(\bm x)(e^{2\lambda \beta m_i(\bm x)}+ e^{2\beta m_i(\bm x)})\cdot \frac{e^{2\beta m_i(\bm x)} -y_i(1+e^{2\beta m_i(\bm x)})}{(1+e^{2\beta m_i(\bm x)})^{\lambda+2}}\\
    &= &  -\frac{1+\lambda}{\textcolor{black}{2^\lambda} N^{\lambda+1}}\sum_{i=1}^N \frac{\cosh((\lambda-1)\beta m_i(\bm x))}{\cosh^{\lambda+1}(\beta m_i(\bm x))}m_i(\bm x)(x_i - \tanh(\beta m_i(\bm x))).
\end{eqnarray*}

Therefore, the estimating equation for $\hat{\beta}_\lambda$ is given by:
\begin{equation}\label{esteql}
    \sum_{i=1}^N \frac{\cosh((\lambda-1)\beta m_i(\bm x))}{\cosh^{\lambda+1}(\beta m_i(\bm x))}m_i(\bm x)(x_i - \tanh(\beta m_i(\bm x))) = 0.
\end{equation}
Clearly, the above estimating equation for the MDPD estimator  is also valid at $\lambda=0$ (which is nothing but the MPL estimator). Furthermore, at any $\lambda>0$, it is easy to see that these are indeed unbiased estimating equations for the assumed Ising model \eqref{modeldef}.   
However, the summands in \eqref{esteql} are not independent, and hence, simple M-estimation techniques cannot be applied to study the asymptotic properties of $\hat{\beta}_\lambda$.

\subsection{Consistency of the MDPD estimators under Ising models}\label{sec:sccons}

We justify the asymptotic validity of the proposed MDPD estimators $\hat{\beta}_\lambda$ under the Ising model in \eqref{modeldef} by proving its consistency for large networks (i.e., $N\rightarrow \infty$). 
The following theorem presents the specific consistency result under rather general conditions on the interaction matrix $\bm J$ and the rate of growth of the partition function $Z(\beta)$.

\begin{thm}\label{cons}
    Let $\hat{\beta}_{\lambda}$ be the MDPD estimator of $\beta$ for some fixed $\lambda>0$, based on a single observation $\bm X$ from the  Ising model \eqref{modeldef}. Suppose that the following conditions hold:
    \begin{enumerate}
        \item $\sup_{N\geq 1} \|\bm J\|<\infty$,
        \item $\liminf_{N\rightarrow \infty}\frac{1}{N} \log Z(\beta)>0$,
    \end{enumerate}
    where $\|\cdot\|$ denotes the spectral norm. Then $\hat{\beta}_{\lambda}$ is a $\sqrt{N}$-consistent sequence of estimators for $\beta$, i.e. $\sqrt{N}(\hat{\beta}_{\lambda}-\beta) = O_P(1)$.
\end{thm}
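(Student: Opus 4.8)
The plan is to treat $\hat\beta_\lambda$ as a root of the estimating equation \eqref{esteql}, writing
$$\Psi_N(b) := \sum_{i=1}^N \frac{\cosh((\lambda-1)b\,m_i)}{\cosh^{\lambda+1}(b\,m_i)}\,m_i\big(x_i-\tanh(b\,m_i)\big), \qquad m_i := m_i(\bm X),$$
so that $\Psi_N(\hat\beta_\lambda)=0$ and, since $\partial_\beta d_\lambda \propto -\Psi_N$, a root where $\Psi_N'<0$ is a local minimizer of the DPD objective. I would establish $\sqrt N$-consistency through the classical two-ingredient scheme for $M$-estimation, adapted to the dependence: (a) the ``score'' at the true value is $O_P(\sqrt N)$, and (b) the ``information'', i.e.\ $-N^{-1}\Psi_N'$ near the truth, is bounded away from $0$ with probability tending to $1$. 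A one-term Taylor expansion of $\Psi_N$ around $\beta$ then localizes the consistent root to an $O_P(N^{-1/2})$ neighborhood. The difficulty throughout is that the summands of $\Psi_N$ are dependent, so the usual i.i.d.\ $M$-estimation machinery is unavailable and every moment bound must exploit the conditional structure of the Ising measure together with the two hypotheses.

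For step (a) I would use that, writing $W_i:=x_i-\tanh(\beta m_i)$ and $a_i:=\cosh((\lambda-1)\beta m_i)\cosh^{-(\lambda+1)}(\beta m_i)\,m_i$, the coefficient $a_i$ is $\bm X_{-i}$-measurable (since $J_{ii}=0$ makes $m_i$ free of $x_i$), while $\E[W_i\mid\bm X_{-i}]=\E[x_i\mid\bm X_{-i}]-\tanh(\beta m_i)=0$; this gives $\E_\beta[\Psi_N(\beta)]=0$. For the variance, the diagonal terms are controlled by $\var(x_i\mid\bm X_{-i})=\sech^2(\beta m_i)\le 1$, by the uniform bound $|a_i|\le C_\lambda|m_i|$ (the ratio $\cosh((\lambda-1)t)/\cosh^{\lambda+1}(t)$ is bounded on $\R$), and by the key inequality $\sum_i m_i^2=\|\bm J\bm X\|^2\le\|\bm J\|^2 N$ from condition (1); this yields $\sum_i\E[a_i^2 W_i^2]=O(N)$. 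The cross terms $\E[a_ia_jW_iW_j]$ I would handle by conditioning on $\bm X_{-i}$ and writing the resulting conditional covariance as a finite difference in $x_i$: flipping $x_i$ changes $m_j$ by $2J_{ij}$, so each cross term is $O(|J_{ij}|)$ up to a moment of $m$, and summing against the $\|\bm J\|$-norms keeps the total at $O(N)$. Hence $N^{-1/2}\Psi_N(\beta)=O_P(1)$.

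Step (b) is the crux. Differentiating (the term produced by differentiating the coefficient is conditionally centered and drops out in expectation), $-N^{-1}\E_\beta[\Psi_N'(\beta)]=N^{-1}\sum_i\E_\beta[\phi_\lambda(\beta m_i)m_i^2]$ with $\phi_\lambda(t):=\cosh((\lambda-1)t)\cosh^{-(\lambda+3)}(t)>0$, and I must bound this below by a positive constant. This is where condition (2) enters, through the thermodynamic identities $\tfrac{d}{d\beta}\log Z(\beta)=\E_\beta[H]$ and $\tfrac{d^2}{d\beta^2}\log Z(\beta)=\var_\beta(H)\ge 0$. Convexity makes $\E_s[H]$ nondecreasing, so from $\log Z(0)=0$ we get $\log Z(\beta)=\int_0^\beta\E_s[H]\,ds\le\beta\,\E_\beta[H]$, whence $N^{-1}\E_\beta[H]\ge\beta^{-1}N^{-1}\log Z(\beta)$, which is bounded below by a positive constant by condition (2). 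Since $H=\tfrac12\sum_i x_im_i$ and $\E[x_i\mid\bm X_{-i}]=\tanh(\beta m_i)$, this reads $N^{-1}\sum_i\E_\beta[m_i\tanh(\beta m_i)]\ge c_1>0$. Finally I would transfer this ``energy'' bound to the information: using $\sum_i m_i^2\le\|\bm J\|^2N$, the contribution of indices with $|m_i|>M$ to $\sum_i m_i\tanh(\beta m_i)$ is $O(N/M)$, so for $M$ large the bulk $\{|m_i|\le M\}$ still carries mass $\ge c_1/2$; and on that set $\phi_\lambda(\beta m_i)\ge\delta_M>0$ together with $m\tanh(\beta m)\le\beta m^2$ give $m_i^2\phi_\lambda(\beta m_i)\ge(\delta_M/\beta)\,m_i\tanh(\beta m_i)$. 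Hence $N^{-1}\sum_i\E_\beta[\phi_\lambda(\beta m_i)m_i^2]\ge c'>0$. A variance bound of order $N$ for $\Psi_N'(\beta)$ — obtained as in step (a), the mean-zero part being conditionally centered and the remaining part $-\sum_i\phi_\lambda(\beta m_i)m_i^2$ concentrating by a finite-difference argument — upgrades this to $-N^{-1}\Psi_N'(\beta)\ge c'/2$ with probability $\to1$.

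To conclude, I would bound $N^{-1}\sup_{|b-\beta|\le\eps}|\Psi_N''(b)|=O_P(1)$ using smoothness of the summands and the moment control on the $m_i$ from condition (1), so that over an $O(N^{-1/2})$ window the linear term dominates the quadratic remainder. Writing $N^{-1/2}\Psi_N(\beta+tN^{-1/2})=N^{-1/2}\Psi_N(\beta)+t\,N^{-1}\Psi_N'(\beta)+o_P(1)=O_P(1)-\tfrac{c'}{2}t+o_P(1)$ for fixed $t$, continuity and a sign-change (intermediate value) argument place a root — the one at which $\Psi_N'<0$, i.e.\ the local minimizer of $d_\lambda$ — within distance $O_P(N^{-1/2})$ of $\beta$, giving $\sqrt N(\hat\beta_\lambda-\beta)=O_P(1)$. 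I expect step (b), and within it the passage from condition (2) to the information lower bound together with the concentration of the dependent sums, to be the main obstacle; steps (a) and (c) are technical but follow the same conditioning-plus-operator-norm template.
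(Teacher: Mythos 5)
Your overall architecture is the same as the paper's: the heuristic $\hat\beta_\lambda-\beta=-S_{\bm X,\lambda}(\beta)/S'_{\bm X,\lambda}(\xi)$ with the score $S_{\bm X,\lambda}$ of \eqref{Sxl}, a second-moment bound of order $1/N$ for the score at the truth, and a split of the derivative into a conditionally centered piece plus the negative term $-\frac1N\sum_i\phi_\lambda(\beta m_i)m_i^2$, exactly as in \eqref{sp1}--\eqref{sp2} and Lemmas \ref{sx}--\ref{sbeta1}. Your step (a), organized as direct conditional-covariance computations with a first-order expansion in the flip (change $2J_{ij}$ in $m_j$) plus an $O(J_{ij}^2)$ remainder, paired in $\ell^2$ against $\|\bm J\|$ and $\|\bm J_2\|$, is the same estimate the paper bookkeeps via Stein's exchangeable pairs; note only that the per-term bound ``$O(|J_{ij}|)$'' summed in $\ell^1$ would give $O(N^{3/2})$, so the bilinear-form pairing you allude to is essential, not optional.

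The genuine gap is the expectation-to-probability upgrade in your step (b). Your thermodynamic-integration argument ($\log Z(\beta)\le\beta\,\E_\beta[H]$, hence $N^{-1}\sum_i\E_\beta[m_i\tanh(\beta m_i)]\ge c_1>0$, then truncation) is correct but only lower-bounds $\E\big[\sum_i\phi_\lambda(\beta m_i)m_i^2\big]$; the proof needs a high-probability bound, and your proposed variance bound ``as in step (a)'' does not transfer: that machinery rests entirely on the conditional centering $\E[W_i\mid\bm X_{-i}]=0$ (equivalently, the antisymmetry of the exchangeable-pair function), and $\sum_i\phi_\lambda(\beta m_i)m_i^2$ has no such structure. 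A bounded-differences route fails as well, since flipping $x_j$ changes this sum by up to $C\sum_i|J_{ij}|$, which under a spectral-norm hypothesis alone can be of order $\sqrt N$, giving a vacuous $O(N^2)$ variance; and because condition (2) holds precisely in the low-temperature phase, no generic Dobrushin/Poincar\'e-type concentration is available as a substitute. The paper's Lemma \ref{mx} avoids variance altogether with a one-line tail estimate from the partition function: $\p_\beta(H\le\alpha N)\le e^{\alpha\beta N}/Z(\beta)\le e^{(\alpha\beta-c)N}$, then $2H\le\sum_i|m_i|$, truncation at level $M$ using $\sum_i m_i^2\le\|\bm J\|^2N$, and Cauchy--Schwarz give $\sum_i m_i^2\,\ind\{|m_i|\le M\}\ge\varepsilon N$ with high probability; you should replace your concentration step by this argument. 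A secondary issue: your intermediate-value step only produces \emph{a} consistent root, while $\hat\beta_\lambda$ is a global minimizer that a priori may lie far from $\beta$, outside any fixed Taylor window; the paper handles this by integrating the derivative envelope involving $\sech^{\lambda+3}$ over the entire (possibly long) interval between $\beta$ and $\hat\beta_\lambda$ and inverting the resulting exponential inequality via Lemma \ref{prokh}, so its conclusion applies to any stationary point regardless of distance — your localization needs an analogous global monotonicity argument to identify the located root with the estimator.
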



The proof of Theorem \ref{cons} is based on some non-trivial extensions of the arguments used in proving the consistency of the MPL estimators under similar model assumptions (see, e.g., \cite{chatterjee}). 
Before presenting the detailed proof, which is deferred to Section \ref{sec:proof} for ease of presentation, let us now discuss the required assumptions and their validity for commonly used network models. 

In fact, both assumptions of Theorem \ref{cons} are satisfied by a wide variety of interaction structures. The most commonly used interaction structure is probably the adjacency matrix ($\bm A$) of a graph scaled by its average degree, i.e.
$$\bm J_N = \frac{1}{d_{avg}(G_N)}\bm A (G_N),$$
where $G_N$ is a graph on $N$ vertices, and $d_{avg}(G_N) := \frac{1}{N}\sum_{i=1}^N d_i(G_N)$ is the average degree of $G_N$. In this case, $$\|\bm J_N\| \le \frac{\max_{1\le i\le N} d_i(G_N)}{d_{avg}(G_N)}$$ and hence, Condition (1) is satisfied as long as the maximum degree of the graph is of the same order as its average degree. This encompasses a wide variety of graph ensembles such as bounded degree graphs, dense graphs, regular graphs, and even irregular graphs which do not have any particular node with atypical high degree. 

Verifying Condition (2) is slightly more tricky, and two kinds of phenomena arise. It holds for all values of $\beta>0$ for bounded degree graphs, but for dense graph ensembles, there typically exists a threshold $\beta^*$, such that it is violated for all $\beta <\beta^*$ and holds true for all $\beta > \beta^*$ (often referred to as the \textit{high-temperature} and \textit{low-temperature} regimes, respectively,  in statistical physics). Therefore, the MDPD estimators are consistent at all $\beta$ for Ising models on bounded degree graphs, and in the low-temperature regime for Ising models on dense graphs. Although Theorem \ref{cons} does not say anything about what happens in the high-temperature regime in the latter case, one can show by a simple contiguity argument that no consistent estimator for $\beta$ exists in this regime (see \cite{BM16}). Below, we give some common examples of interaction structures that satisfy these two required conditions.

\begin{example}[Non mean-field interactions/bounded-degree graphs]
   If the interaction matrix $\bm J_N$ satisfies 
   \begin{equation}\label{example:bdded7}
       \sup_{N\ge 1} \|\bm J_N\| <\infty\quad \text{and}\quad \sum_{1\le i<j\le N} J_{i,j}^2 \gg N,
   \end{equation}
   then it follows from the proof of Corollary 2.6 in \cite{psm2}  that Conditions (1) an (2) of Theorem \ref{cons} are satisfied, which makes the MDPD estimators $\sqrt{N}$-consistent for all $\beta >0$. If $\bm J$ is the adjacency matrix of a graph (scaled by its average degree), then \eqref{example:bdded7} is equivalent to the maximum degree of the graph being bounded.
\end{example}

\begin{example}[Regular graphs] If $G_N$ is a sequence of $d_N$-regular graphs and $\bm J_N := \frac{1}{d_N} \bm A(G_N)$, then Condition (1) is satisfied as discussed above.
Condition (2) is also satisfied for $\beta > 1$ (see Corollary 3.1 in \cite{BM16}) . In this case, the MDPD estimators are $\sqrt{N}$-consistent for all $\beta>1$ .  
\end{example}

\begin{example}[Sequence of dense graphs converging to a Graphon] Suppose that $G_N$ is a sequence of simple graphs converging in cut-metric to a graphon $W$ (see \cite{Borgs2008, Borgs2012}), such that $\int_{[0,1]^2}W(x,y)\text{d}x\text{d}y>0$. 
Consider the family of probability distributions \eqref{modeldef} with $\bm J_N := \frac{1}{N}\bm A(G_N)$, where $\bm A(G_N)$ is the adjacency matrix of graph $G_N$.  Then Condition (1) is satisfied by the discussion above. Also, Condition (2) is satisfied for $\beta>\frac{1}{\|W\|}$, where $\|\cdot\|$ is the operator norm (see Theorem 3.3 in \cite{BM16}).
This example encompasses Ising models on many well-known random graph ensembles, such as Erd\H{o}s-Rényi graphs $G(N,p)$ with fixed $p >0$ and stochastic block models with fixed block connection probabilities.
For example, if $G_N \sim G(N,p)$ with $0< p \leq 1$ fixed, then $G_N$ converges in cut-metric to the constant graphon $p$, and hence, the MDPD estimators are $\sqrt{N}$-consistent for $\beta > \frac{1}{p}$.
\end{example}

\begin{example}[Spin glass models]\label{skmodel_ex} 
Probably the most celebrated Ising spin glass model is the Sherrington–Kirkpatrick model. Here, we have a family of probability distributions of the form \eqref{modeldef} with $J_{ij} := N^{-\frac{1}{2}}g_{ij}$, where $(g_{ij})_{1\leq i<j<\infty}$ is a fixed realization of a sequence of independent standard normal random variables (and $g_{ij}=g_{ji}$). It is shown in Section 1.4 of \cite{chatterjee} that both conditions (1) and (2) are satisfied for every $\beta>0$. Hence, the MDPD estimators are $\sqrt{N}$-consistent for all $\beta>0$ in the Sherrington–Kirkpatrick model.

Another popular spin glass model is the Hopfield model of neural networks, with $N$ particles affected by $M$ attractors. Here, the underlying probability distributions are again of the form \eqref{modeldef} with $J_{ij}= \frac{1}{N}\sum_{k=1}^M w_{ik}w_{jk}$, where $(w_{ik})_{i\leq N,k \leq M}$ is a fixed realization of a collection of independent random variables with $P(w_{ik} = \pm 1) = 1/2$. Condition (1) can be shown to be satisfied almost surely when $M/N$ remains bounded as $N\rightarrow \infty$ and Condition (2) is verified in \cite{chatterjee} for $\beta>0$. Hence, the MDPD estimators are $\sqrt{N}$-consistent in this regime.
\end{example}

\subsection{Asymptotics for a general class of $Z$-estimators}\label{sec:cltasp8}
In this section, we derive the asymptotic distribution of the MDPD estimators under the pure model.
When the interaction structure is given by the adjacency matrix of an underlying graph, the standard practice is to scale this matrix by the average degree of the graph. Under this convention, $\beta^*=1$ serves as a natural estimation threshold for Ising models on a variety of dense graph ensembles, in the sense that  no estimator of $\beta$ is consistent in the high-temperature ($\beta <1$) regime, while consistent estimation becomes feasible for $\beta > 1$.
 Consequently, we will focus only on the low-temperature $(\beta >1)$ regime throughout the rest of this section.
Under this regime of Ising models, we will derive a central limit theorem (CLT) 
for a class of general $Z$-estimators $\hat{\beta}_g$, defined as solutions of the following estimating equation:
\begin{equation}
        \sum_{i=1}^N g(m_i(\bm X))(X_i - \tanh(\beta m_i(\bm X))) = 0,
        \label{z-est-ee}
    \end{equation}
    where $g = g_\beta: \mathbb{R} \mapsto \mathbb{R}$ is any  differentiable odd function satisfying:
    \begin{enumerate}
        \item For every $\beta\ge 1$, $g_\beta'$ is bounded on bounded subsets of $\mathbb{R}$.\
        \item The function $\beta \mapsto g_\beta(m_i(\bm X))$ is twice-differentiable, and both derivatives are uniformly bounded on some bounded neighborhood of the true parameter $\beta$.
    \end{enumerate}

Note that, our proposed MDPD estimators clearly belong to the above-mentioned class of $Z$-estimators
for a particular choice of the function $g$ having the form:
$$
g_\beta(x) := \frac{x \cosh((\lambda-1)\beta x)}{\cosh^{\lambda+1} (\beta x)},
$$ 
which is an odd function satisfying all necessary assumptions on boundedness of derivatives.
So, the asymptotic distribution of these MDPD estimators will follow directly from the general CLT 
for the $Z$-estimators $\hat{\beta}_g$ proved below.  

A crucial first step in establishing central limit theorems for $Z$-estimators is to verify their consistency under the pure model. The following assumption plays a key role in this direction.

\begin{assumption}\label{asd0}
    The random variable $M(\bm X) := m_+\mathbbm{1}_{\bar{X} \ge 0} - m_+ \mathbbm{1}_{\bar{X} < 0}$, where $m_+$ is the unique positive solution of the equation (in $m$) $\tanh(\beta m) =m$, satisfies:
    $$\sum_{i=1}^N (m_i(\bm X)-M(\bm X))^2 = o_P(N)\quad\text{and}\quad \bar{X}-M(\bm X) = o_P(1).$$ 
\end{assumption}

Below, we state a result on the consistency of $Z$-estimators in the model \eqref{modeldef};
the proof can be found in Section \ref{sec:cltproof921}.

\begin{thm}\label{thm:consiZ}
   Under Assumption \ref{asd0}, $\hat{\beta}_g \xrightarrow{P} \beta$ under the pure model \eqref{modeldef}.
\end{thm}

Note that Assumption \ref{asd0} is more stringent than Conditions (1) and (2) of Theorem \ref{cons} needed to ensure $\sqrt{N}$-consistency for the MDPD estimators. For instance, Ising models on dense, irregular graphs with maximum degree of the same order as the average degree (for example dense, unbalanced stochastic block models with blocks of fixed, unequal proportional sizes) are covered by Conditions (1) and (2) of Theorem \ref{cons}, but do not satisfy Assumption \ref{asd0}. We will later verify Assumption \ref{asd0} for Ising models on certain approximately regular graphs with diverging average degree, while Conditions (1) and (2) of Theorem \ref{cons} apply to much more general settings. This distinction is natural, 
because the former is used to establish consistency of general class of $Z$-estimators, whereas the latter pertain only to the special class of MDPD estimators. Another important point is that Theorem \ref{cons}, unlike Theorem \ref{thm:consiZ} for general $Z$-estimators, establishes a $\sqrt{N}$-rate of consistency of the MDPD estimators, 
so the techniques required in proving the former are much more involved.

Next, in order to establish the CLT for $\hat{\beta}_g$, 
we need the following additional assumptions:

\begin{assumption}\label{asd1}
   For all functions $c_i: \R\mapsto \R$ satisfying $|c_i(x)| \le C$ for all $x\in \R$ and some universal constant $C>0$, we have the following:
   \begin{enumerate}
       \item For all $\alpha_1,\ldots,\alpha_N \in [0,1]$, $$\sum_{i=1}^N c_i(\alpha_i m_i(\bm X) + (1-\alpha_i)M(\bm X))\cdot (m_i(\bm X)-M(\bm X))(X_i -\tanh(\beta m_i(\bm X))) = o_P(\sqrt{N}),$$

       \item $\sum_{i=1}^N c_i(m_i(\bm X))(X_i -\tanh(\beta m_i(\bm X))) = o_P(N).$
   \end{enumerate}
    
\end{assumption}

\begin{assumption}\label{asd2}
There exists $\sigma_\beta >0$, such that:
$$\frac{1}{\sqrt{N}}\sum_{i=1}^N (X_i - \tanh(\beta m_i(\bm X))) \xrightarrow{D} \mathcal{N}\left(0, \sigma_{\beta}^2\right)$$  
\end{assumption}

\begin{assumption}\label{asd3}
The following CLT holds:
$$\frac{g(M(\bm X))}{\sqrt{N}}\sum_{i=1}^N (X_i - \tanh(\beta m_i(\bm X))) \xrightarrow{D} \mathcal{N}\left(0, g(m_+)^2 \sigma_{\beta}^2\right)$$  
\end{assumption}


The problem of proving CLT for the Hamiltonian and parameter estimators in general Ising models is inherently difficult, unless one imposes additional simplifying structural conditions on the interaction matrix satisfying the above assumptions. One can, for example, follow the framework developed in \cite{pivotal_clt,meanfield}, and assume the following structural properties of the interaction matrix:

\begin{cond}\label{bounded_sum}
    The interaction matrix $\bm J_N$ satisfies: 
    $$\limsup_{N\rightarrow \infty} \max_{1\leq i \leq N} \sum_{j=1}^N J_{i,j} <\infty.$$
\end{cond}

\begin{cond}\label{regular}
    The interaction matrix is approximately regular, i.e., $\bm J_N$ has nonnegative entries, is symmetric and satisfies:
    $$
        \lambda_1(\bm J_N)\xrightarrow{N\rightarrow\infty} 1, \frac{1}{N}\sum_{i=1}^N \delta_{R_i}\rightarrow 1,\text{ where } R_i:=\sum_{j=1}^N J_{i,j},
    $$
    where $\lambda_1(\bm J_N)\geq \lambda_2(\bm J_N)\geq \cdots \geq \lambda_N(\bm J_N)$ are the eigenvalues of $\bm J_N$.
\end{cond}

\begin{cond}\label{mean_field}
    The Frobenius norm of the interaction matrix $\bm J_N$ satisfies:
    $$
        \|\bm J_N\|^2_F := \sum_{1\leq i,j \leq N} J_{i,j}^2 = o(N).
    $$
\end{cond}

\begin{cond}\label{spectral_gap}
    The interaction matrix $\bm J_N$ satisfies the spectral gap condition, i.e.,
    $$\limsup_{N\rightarrow \infty} \frac{\lambda_2(\bm J_N)}{\lambda_1(\bm J_N)} <1.$$
\end{cond}

We will show in Section \ref{sec:struct} of the Appendix, that Assumptions \ref{asd0}, \ref{asd1} and \ref{asd2} are satisfied with $\sigma_\beta^2 := (1-m_+^2)(1-\beta(1-m_+^2))$ for interaction structures satisfying Conditions \ref{bounded_sum}, \ref{regular}, \ref{mean_field} and \ref{spectral_gap}. This only leaves Assumption \ref{asd3} open for verification in common examples. It is known to hold in Curie-Weiss models (by Proposition 2.1 in \cite{comets} and a straightforward delta-method argument), and is very likely to hold for Ising models on interaction structures satisfying Conditions \ref{bounded_sum}, \ref{regular}, \ref{mean_field} and \ref{spectral_gap} too, as is evident from the framework of \cite{pivotal_clt} (one can, for example, prove this rigorously, if a conditional version of Theorem 5.3 in \cite{pivotal_clt} is proved given $\bar{X} \ge 0$ and $\bar{X}\le 0$).

Let us now say a few words about Conditions \ref{bounded_sum}, \ref{regular}, \ref{mean_field} and \ref{spectral_gap}. Of course, the most popular choice of the interaction matrix $\bm J_N$ is the adjacency matrix of a graph $G_N$ scaled by its average degree. In this case, Condition \ref{bounded_sum} does \textit{not} impose any sparsity restriction on $G_N$, but simply says that the maximum degree of $G_N$ is of the same order as its average degree, i.e. there is no vertex with atypically high degree. Conditions \ref{bounded_sum}, \ref{regular} and \ref{mean_field} are satisfied by the scaled adjacencies of deterministic/random regular graphs with degree going to $\infty$, and also encompass Erd\H{o}s-R\'enyi graphs $G(N,p_N)$ with $p_N \gg \log N/N$, balanced stochastic block models with sum of between and within group connection probabilities $\gg \log N/N$, and even sparse inhomogeneous random graphs generated from regular graphons (see \cite{pivotal_clt} for a complete discussion on this).

We now  establish the asymptotics of general $Z$-estimators of $\beta$ in the low-temperature ($\beta>1$) regime
under the above-mentioned assumptions, which in particular, applies to the MDPD estimators.
The main result is given in the following theorem, 
while its proof is  given in Section \ref{sec:cltproof92}.
\begin{thm}\label{cltthm8}
    Suppose that Assumptions \ref{asd0}, \ref{asd1}, \ref{asd2} and \ref{asd3} hold. Then, under the pure Ising model \eqref{modeldef}, we have:
  
    \begin{equation}\label{z-est92}
        \sqrt{N}(\hat{\beta}_g -\beta)\xrightarrow{D} \mathcal{N}\left(0, \frac{\sigma_\beta^2}{m_+^2(1-m_+^2)^2}\right)
    \end{equation}
   where $m_+$ is the unique positive solution (in $m$) of equation $\tanh(\beta m)=m$. In particular, MDPD estimators also satisfy the CLT \eqref{z-est92} for all $\lambda > 0$.
\end{thm}

For Ising models on interaction structures satisfying Conditions \ref{bounded_sum}, \ref{regular}, \ref{mean_field} and \ref{spectral_gap}, we thus have:
\begin{equation}\label{eq:clt_un_asm}
    \sqrt{N}(\hat{\beta}_g -\beta)\xrightarrow{D} \mathcal{N}\left(0, \frac{1-\beta(1-m_+^2)}{m_+^2(1-m_+^2)}\right)
\end{equation}
which, in fact, matches exactly the asymptotic distribution of the maximum likelihood estimator $\hat{\beta}_{MLE}$ of $\beta$ in the Curie-Weiss model with $\beta>1$ as stated in Theorem 1.4 (c) of \cite{comets}.
Theorem \ref{cltthm8} additionally implies that this same asymptotic distribution also holds for a large class 
of $Z$-estimators under a wide range of Ising models.

As a special case, Theorem \ref{cltthm8} also states that all MDPD estimators share the same asymptotic variance under the pure model in our regime of interest (which is already shown to encompass a large class of interaction structures). In other words, none of the MDPD estimators loses asymptotic efficiency compared to the MPL estimator. 
Even more appealing is that MDPD estimators still retain their enhanced robustness property under contamination, as will be evident in the subsequent parts of the paper (see, e.g., Figures \ref{DERM242} and \ref{SBM142}).
To our knowledge, this is the first such example of a statistical set-up 
where MDPD estimators can produce robust inference in noisy samples 
without losing asymptotic pure data efficiency, making Theorem \ref{cltthm8} truly remarkable.

Finally, the CLT \eqref{eq:clt_un_asm} can be used to construct asymptotic confidence intervals 
for the parameter $\beta$ in general Ising models satisfying Assumptions \ref{asd0} to \ref{asd3}.
Towards this, note that by Assumption \ref{asd0}, we have $\bar{X} =(1+o_P(1))M(\bm X)$ and hence, $\bar{X}^2 = (1+o_P(1))m_+^2$. It now follows from \eqref{eq:clt_un_asm} and an application of Slutsky's theorem that:
$$\sqrt{\frac{N\bar{X}^2(1-\bar{X}^2)}{1-\hat{\beta}_g (1-\bar{X}^2)}} (\hat{\beta}_g-\beta)\xrightarrow{D} \mathcal{N}(0,1).$$ Thus, a $(1-\alpha)$ coverage confidence interval for $\beta$ can be given as:
$$\left[\hat{\beta}_g - |\bar{X}|^{-1} z_{\alpha/2} \sqrt{\frac{1-\hat{\beta}_g (1-\bar{X}^2)}{N (1-\bar{X}^2)}} ~,~\hat{\beta}_g + |\bar{X}|^{-1} z_{\alpha/2} \sqrt{\frac{1-\hat{\beta}_g (1-\bar{X}^2)}{N (1-\bar{X}^2)}}\right],$$
where $z_{\alpha/2}$ is defined by $\mathbb{P}(\mathcal{N}(0,1)>z_{\alpha/2}) = \alpha/2$.

\subsection{Local robustness of the MDPD estimators}\label{sec:locrob8}

In this section, we analyze the classical influence function (see \cite{hampel1,hampel2, basu2011statistical, ghosh2015influence, maronna2019robust}) to study the local (B-)robustness of the MDPD estimators under the Ising model. 
Toward this, we first need to define a suitable  statistical functional corresponding to the MDPD estimator $\hat{\beta}_\lambda$. 
We model the distribution of $\bm X := (X_1, X_2,...,X_N)^\top\in \{-1,+1\}^N$ by the family of probability distributions of the form $\p_{\beta}$ as defined in \eqref{modeldef}. 
If $G$ denotes the true distribution of $\bm X$, which may or may not be equal to $\p_{\beta}$, we define the corresponding MDPD functional for $\beta$ at $G$ as follows.  

\begin{defi}\label{functional}
The MDPD functional $T_{\lambda}(G)$, corresponding to the
MDPD estimator $\hat{\beta}_{\lambda}$, at the true  distribution $G$ is defined as the minimizer of 
$d_{\lambda}(\bm \pi, \pb)$ with respect to $\beta$,
where $\bm\pi_G = \frac{1}{N}(g_1, \ldots,g_N)^\top$,
with 
$$
g_{i} :=\big(\p_G(X_i=1|\bm x_{-i}), 1- \p_G(X_i=1|\bm x_{-i})\big)
$$
being the probability mass function associated with $G_i$, which is the (true) conditional distribution of $X_i$ given $\bm x_{-i}$ under the true joint distribution $G$ of $\bm X$, for each $i=1, \ldots, N$. 
It can equivalently be defined as the  solution of the system of equations
\begin{equation}\label{ee_func}
  \sum_{i=1}^N \E_{G_i} [\Psi_{\lambda}(\bm X, \beta)]=\bm 0,
\end{equation}
with respect to $\beta$, whenever the solution exists, where
$$
    \Psi_{\lambda}(\bm X, \beta) := \frac{\cosh((\lambda-1)\beta m_i(\bm X))}{\cosh^{\lambda+1}(\beta m_i(\bm X))}m_i(\bm X)(X_i - \tanh(\beta m_i(\bm X))).
$$
\end{defi}

Note that the solution of the estimating equation \eqref{ee_func} at $\lambda=0$ is nothing but the MPL functional. Further, the resulting functional $T_{\lambda}(\cdot)$ is Fisher consistent \cite{fisher1922mathematical,cox1974theoretical} for all $\lambda\geq 0$, since 
$$
    T_{\lambda}(\p_\beta) = \beta,\text{ for all } \beta >0.
$$

Now, in order to derive the influence function (IF)
for the MDPD and MPL estimators of $\beta$,
we consider the contaminated distribution 
$G_\eps=(1-\eps)G+ \eps \wedge_{\bm t}$ of $\bm X$,
where $\eps$ is the contamination proportion
and $\wedge_{\bm t}$ is the degenerate distribution at the contaminating point ${\bm t} \in \mathcal{X} :=\{-1, 1\}^N$.
Note that, not all these points ${\bm t}$ produce contamination
in the observed data; a given ${\bm t}$ gives rise to contamination in $i$-th observation $x_i$ if the $i$-th component of ${\bm t}$ takes the value opposite to the one expected under $G_i$ (for the given true distribution $G$). 
Then, the  IF of the functional $T_\lambda$ at 
the true distribution $G$ is defined as \cite{hampel2}
\begin{equation}
    \mathcal{IF}({\bm t};T_\lambda,G)
    =\lim_{\eps\downarrow 0}\frac{T_\lambda(G_\eps)-T_\lambda(G)}{\eps}=\frac{\partial}{\partial\eps}T_\lambda(G_{\eps})\bigg|_{\eps=0}.
\end{equation}
in those ${\bm t}\in \mathcal{X}$ where the limit exists. 

Now, based on \eqref{ee_func}, the functional $T_{\lambda}(G_\eps)$ must satisfy
\begin{equation}\label{tg}
    \sum_{i=1}^N \int \Psi_{\lambda}(\bm x, T_\lambda(G_\eps)) ~dG_{i,\eps}(x_i) = 0,
\end{equation}
where $G_i = (1-\eps) G_i + \eps \wedge_{t_i^*}$
is the contaminated distribution for the $i$-th component induced by $G$; $t_i^*$ may not necessarily be the same as $t_i$
but ${\bm t^*}=(t_1^*, \ldots, t_N^*)^T\in\mathcal{X}$. Differentiating this equation with respect to $\eps$ at $\eps=0$, we obtain:
\begin{equation*}
    0 = \sum_{i=1}^N \bigg[ \int \Psi_{\lambda}(\bm x, T_\lambda(G)) d(\wedge_{t_i^*} - G_i) + \int \frac{\partial}{\partial \beta}\Psi_{\lambda}(\bm x, \beta)\bigg|_{\beta=T_\lambda(G)} 
    \frac{\partial}{\partial\eps}T_\lambda(G_{\eps})\bigg|_{\eps=0} dG_i \bigg].
\end{equation*}
The IF of $T_\lambda$ at $G$ is thus given by: 
\begin{equation*}
\mathcal{IF}(\bm t; T_{\lambda}, G) = - \frac{\sum_{i=1}^N \Psi_{\lambda}(t_i^*, T_{\lambda}(G))}{\sum_{i=1}^N \int \frac{\partial}{\partial \beta}\Psi_\lambda(\bm x, \beta)\big|_{\beta=T_\lambda(G)} dG_i}.
\end{equation*}
This IF can be further simplified
if $G$ belongs to the assumed model family \eqref{modeldef},
as presented in the following theorem. 

\begin{thm}\label{IF}
The influence function of the minimum DPD functional $T_{\lambda}$, as
defined in Definition \ref{functional}, at the model distribution $G=\p_\beta$, for some $\beta>0$, is given by
\begin{equation*}
    \mathcal{IF}(\bm t; T_{\lambda}, \p_\beta) = -\frac{1}{J_\lambda(\beta)} \sum_{i=1}^N\Psi_{\lambda}(t_i^*, \beta), 
\end{equation*}
with 
$$
J_{\lambda}(\beta) = \sum_{i=1}^N\e_{G_i}\left[\frac{\partial \Psi_{\lambda}(\bm X, \beta)}{\partial \beta} \right] = N S_{\bm X,\lambda}^{\prime}(\beta)_2.
$$
where $S_{\bm X,\lambda}^{\prime}(\beta)_2 :=-\frac{1}{N}\sum_{i=1}^N \frac{\cosh((\lambda-1)\beta m_i(\bm X)) m_i^2(\bm X)}{\cosh^{\lambda+3}(\beta m_i(\bm X))}$ (as defined in \eqref{sp2}).

\end{thm}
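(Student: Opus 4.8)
The plan is to specialize the general influence-function formula
\[
\mathcal{IF}(\bm t; T_{\lambda}, G) = - \frac{\sum_{i=1}^N \Psi_{\lambda}(t_i^*, T_{\lambda}(G))}{\sum_{i=1}^N \int \frac{\partial}{\partial \beta}\Psi_\lambda(\bm x, \beta)\big|_{\beta=T_\lambda(G)}\, dG_i}
\]
derived just above the statement, to the case $G=\p_\beta$. The first step is to invoke Fisher consistency, $T_\lambda(\p_\beta)=\beta$, already recorded after Definition \ref{functional}. This immediately turns the numerator into $\sum_{i=1}^N \Psi_\lambda(t_i^*,\beta)$ and identifies the denominator as $J_\lambda(\beta)=\sum_{i=1}^N \E_{G_i}[\partial_\beta \Psi_\lambda(\bm X,\beta)]$, with $G_i=\p_\beta(\cdot\mid \bm x_{-i})$. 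Thus the theorem reduces to evaluating the single conditional expectation $\E_{G_i}[\partial_\beta \Psi_\lambda(\bm X,\beta)]$ under the model.

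The second step is the differentiation. Writing $m_i:=m_i(\bm X)$ and factoring $\Psi_\lambda(\bm X,\beta)=A_i(\beta)\,(X_i-\tanh(\beta m_i))$ with $A_i(\beta):=m_i\cosh((\lambda-1)\beta m_i)/\cosh^{\lambda+1}(\beta m_i)$, the product rule gives
\[
\partial_\beta \Psi_\lambda(\bm X,\beta)=A_i'(\beta)\,(X_i-\tanh(\beta m_i))-A_i(\beta)\,\frac{m_i}{\cosh^2(\beta m_i)},
\]
using $\partial_\beta \tanh(\beta m_i)=m_i/\cosh^2(\beta m_i)$. The key structural observation is that, because the diagonal of $\bm J$ vanishes, $m_i=\sum_{j\neq i}J_{ij}X_j$ is a function of $\bm X_{-i}$ alone; hence $A_i(\beta)$, $A_i'(\beta)$ and $\cosh(\beta m_i)$ are all $\bm X_{-i}$-measurable and behave as constants under the conditional law $G_i$ of $X_i$ given $\bm X_{-i}$.

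The third step takes the conditional expectation. Under the model, $\E_{G_i}[X_i]=\tanh(\beta m_i)$, so $\E_{G_i}[X_i-\tanh(\beta m_i)]=0$ and the entire first term above vanishes by unbiasedness, regardless of the (messy) explicit form of $A_i'(\beta)$. The surviving term gives
\[
\E_{G_i}[\partial_\beta \Psi_\lambda(\bm X,\beta)]=-A_i(\beta)\,\frac{m_i}{\cosh^2(\beta m_i)}=-\frac{\cosh((\lambda-1)\beta m_i)\,m_i^2}{\cosh^{\lambda+3}(\beta m_i)},
\]
and summing over $i$ yields $J_\lambda(\beta)=N\,S_{\bm X,\lambda}'(\beta)_2$, which substituted into the simplified IF completes the proof.

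The computation is essentially routine once the correct reduction is made; the only genuinely important point---and the step I would flag as the crux---is recognizing that differentiating the prefactor $A_i$ contributes nothing after conditioning, thanks to the martingale-type identity $\E_{G_i}[X_i-\tanh(\beta m_i)]=0$ together with the $\bm X_{-i}$-measurability of $m_i$. This lets one bypass computing $A_i'(\beta)$ altogether and explains why the final expression depends on $\lambda$ only through the simple $\cosh$-ratio.
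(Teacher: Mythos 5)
Your proposal is correct and follows essentially the same route as the paper: the theorem is exactly the specialization of the general influence-function formula derived just before it, using Fisher consistency $T_\lambda(\p_\beta)=\beta$ and then computing $\E_{G_i}[\partial_\beta \Psi_\lambda(\bm X,\beta)]$, where the $\bm X_{-i}$-measurability of $m_i(\bm X)$ and the conditional identity $\E_{G_i}[X_i-\tanh(\beta m_i(\bm X))]=0$ annihilate the term coming from differentiating the prefactor — precisely mirroring the paper's decomposition of $S'_{\bm X,\lambda}(\beta)$ into the parts \eqref{sp1} and \eqref{sp2}, so that only $N S'_{\bm X,\lambda}(\beta)_2$ survives in the denominator. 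Your identification of the vanishing first term as the crux is exactly the structural point the paper relies on (it is also why Lemma \ref{sbeta1} shows $S'_{\bm X,\lambda}(\beta)_1$ concentrates around zero), and your computation of the surviving term agrees with the stated $J_\lambda(\beta)=N S'_{\bm X,\lambda}(\beta)_2$.
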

Interestingly, here we only have finitely many possible values for the contamination point ${\bm t}\in\{-1, 1\}^N$.
Thus, the IF of the MDPD estimators remains bounded at all $\lambda\geq 0$, including the MPL estimator. 
So, in order to compare the extent of robustness of these estimators, we study their gross error sensitivity (GES) defined as the maximum of the absolute value of the IF at the model, i.e., 
\begin{equation*}
\color{black}
GES_\lambda(\beta) := \max\limits_{\bm t \in \{-1, 1\}^N}
\big| \mathcal{IF}(\bm t; T_{\lambda}, \p_\beta) \big| = \frac{1}{|J_{\lambda}(\beta)|}\cdot \max\limits_{\bm t^* \in \{-1, 1\}^N} \left|\sum_{i=1}^N\Psi_{\lambda}(t_i^*, \beta)\right|
\end{equation*}
For simplicity of analysis, we can rewrite the function $\Psi_{\lambda}(t_i^*, \beta)$ as a function of $\beta m_i(\bm t^*)$, and refactor the fraction by $\frac{1}{N}$ as follows:
$$
\color{black}
GES_\lambda(\beta) = \frac{1}{\beta|J_{\lambda}(\beta)/N|}\cdot \max\limits_{\bm t^* \in \{-1, 1\}^N} \left|\frac{1}{N}\sum_{i=1}^N\Tilde{\Psi}_{\lambda}(\beta m_i(\bm t^*),t_i^*)\right|,
$$
where $\Tilde{\Psi}_{\lambda}(m,t_i):=\frac{\cosh((\lambda-1)m)}{\cosh^{\lambda+1}( m)}(t_i - \tanh(m))m$. 
Note that, by Lemma \ref{mx} we know that $|J_{\lambda}(\beta)/N|=|S_{\bm X,\lambda}^{\prime}(\beta)_2|$ is bounded away from 0 with high probability, and from Lemma \ref{f_bound} we have $|J_{\lambda}(\beta)/N|\leq 1$.

We compute these GES numerically and plot them over $\lambda\geq 0$ in Figure \ref{GES}, 
for different values of the true model parameter $\beta>0$ and the following three interaction structures:
\begin{enumerate}
    \item $G(N,p)$ with $N=100$ and $p= 0.5\log(N)/N$,
    \vspace{0.05in}
    \item Stochastic Block model of equal communities with $N=100$, $p= 0.8 \log(N)/N$, and $q=0.2 \log(N)/N$.
    \vspace{0.05in}
    
    \item The Sherrigton-Kirkpatrick model on $100$ nodes (Example \ref{skmodel_ex}).
\end{enumerate}
It is evident that in these cases the GES of the MDPD estimators decreases as $\lambda$ increases, 
indicating their greater robustness under possible data contamination.

\begin{figure}[!h]
    \centering
    \begin{subfigure}[b]{0.49\textwidth}
        \centering
        \includegraphics[width=7cm]{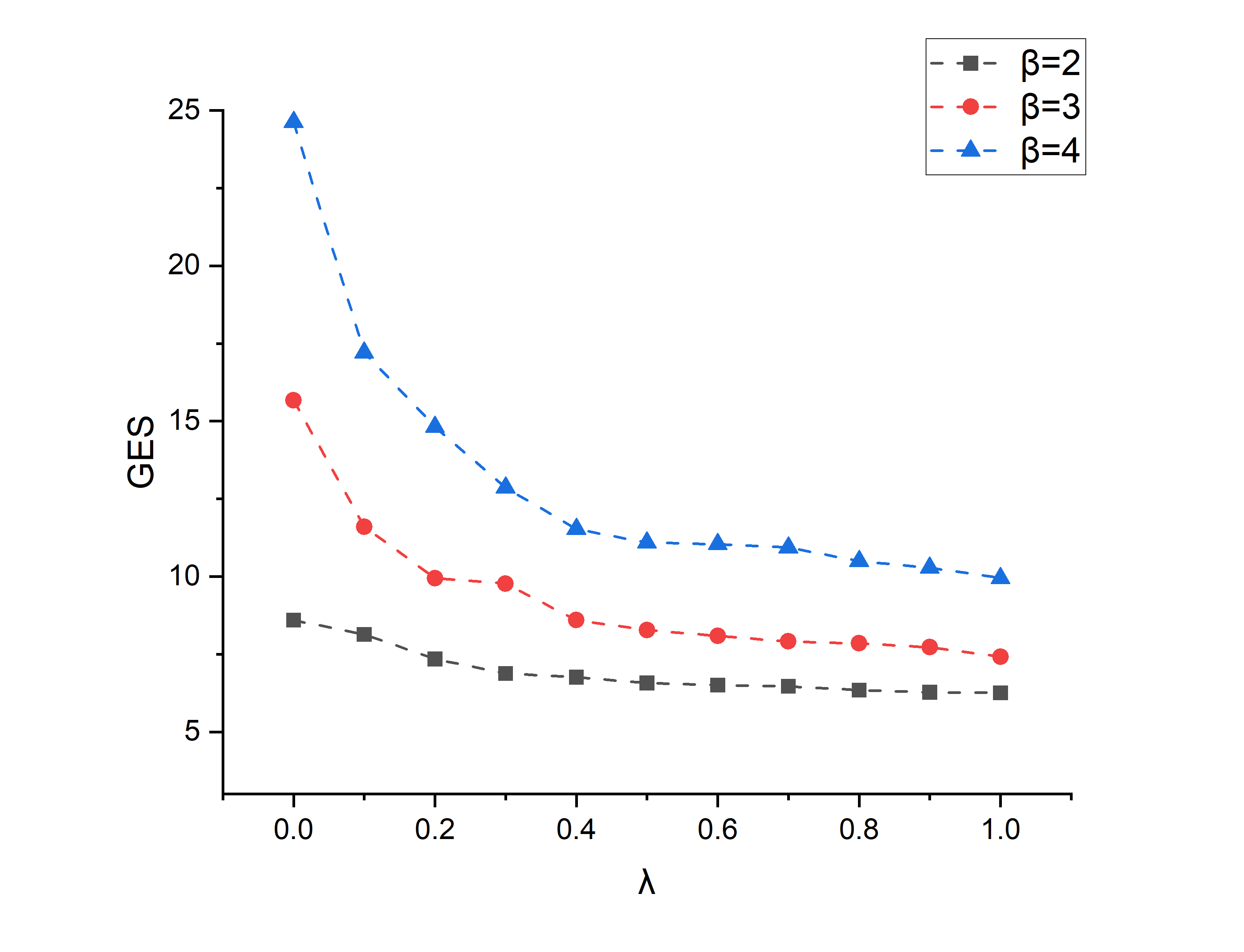}
        \caption[]{{Ising model on $G\left(N,\frac{0.5\log N}{N}\right)$}}  
    \end{subfigure}
    \hfill
    \begin{subfigure}[b]{0.49\textwidth}
        \centering
        \includegraphics[width=7cm]{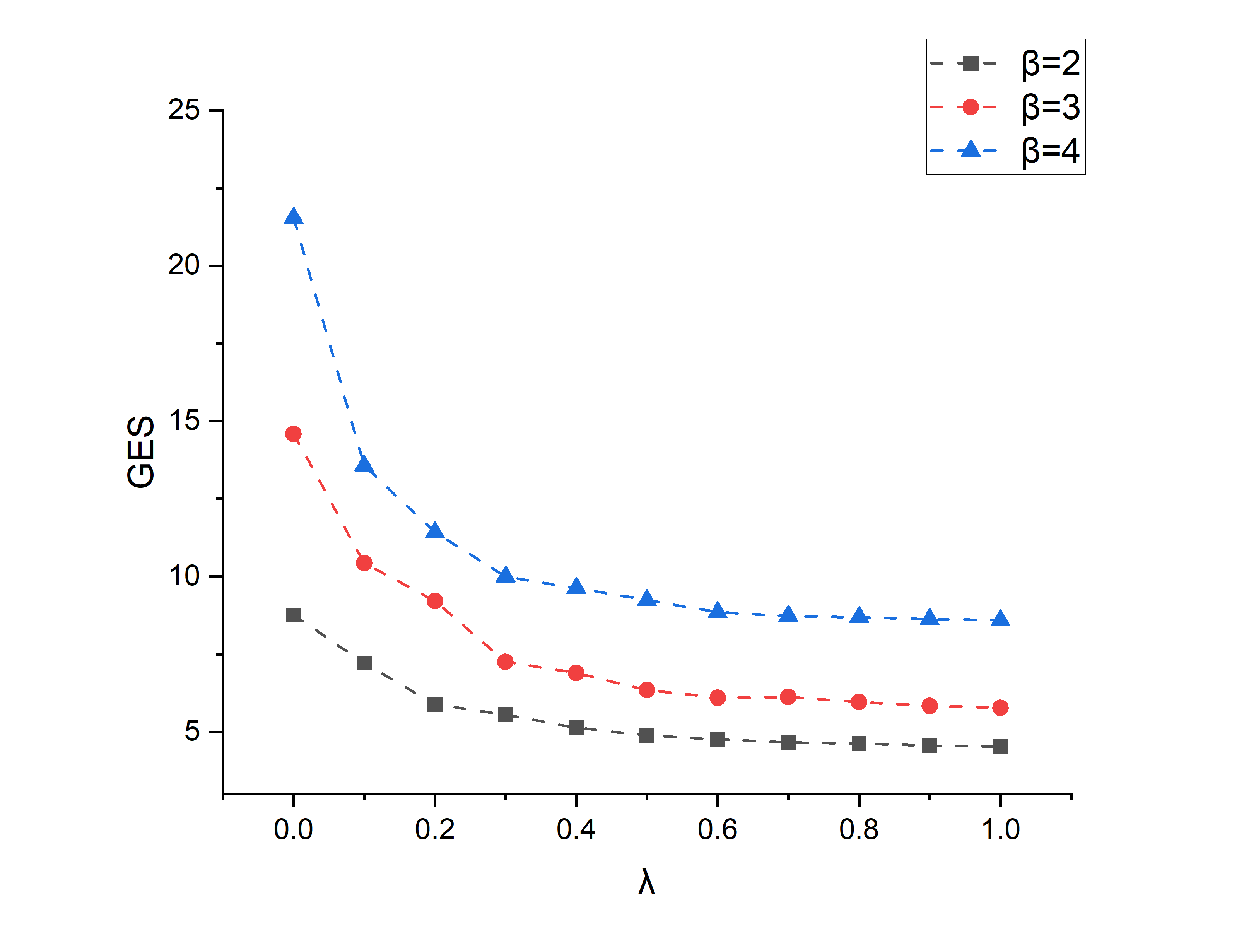}
        \caption[]{{Stochastic block model with $p= \frac{0.8\log N}{N}$ and $q= \frac{0.2\log N}{N}$}}  
    \end{subfigure}

    \begin{subfigure}[b]{0.49\textwidth}  
        \centering 
        \includegraphics[width=7cm]{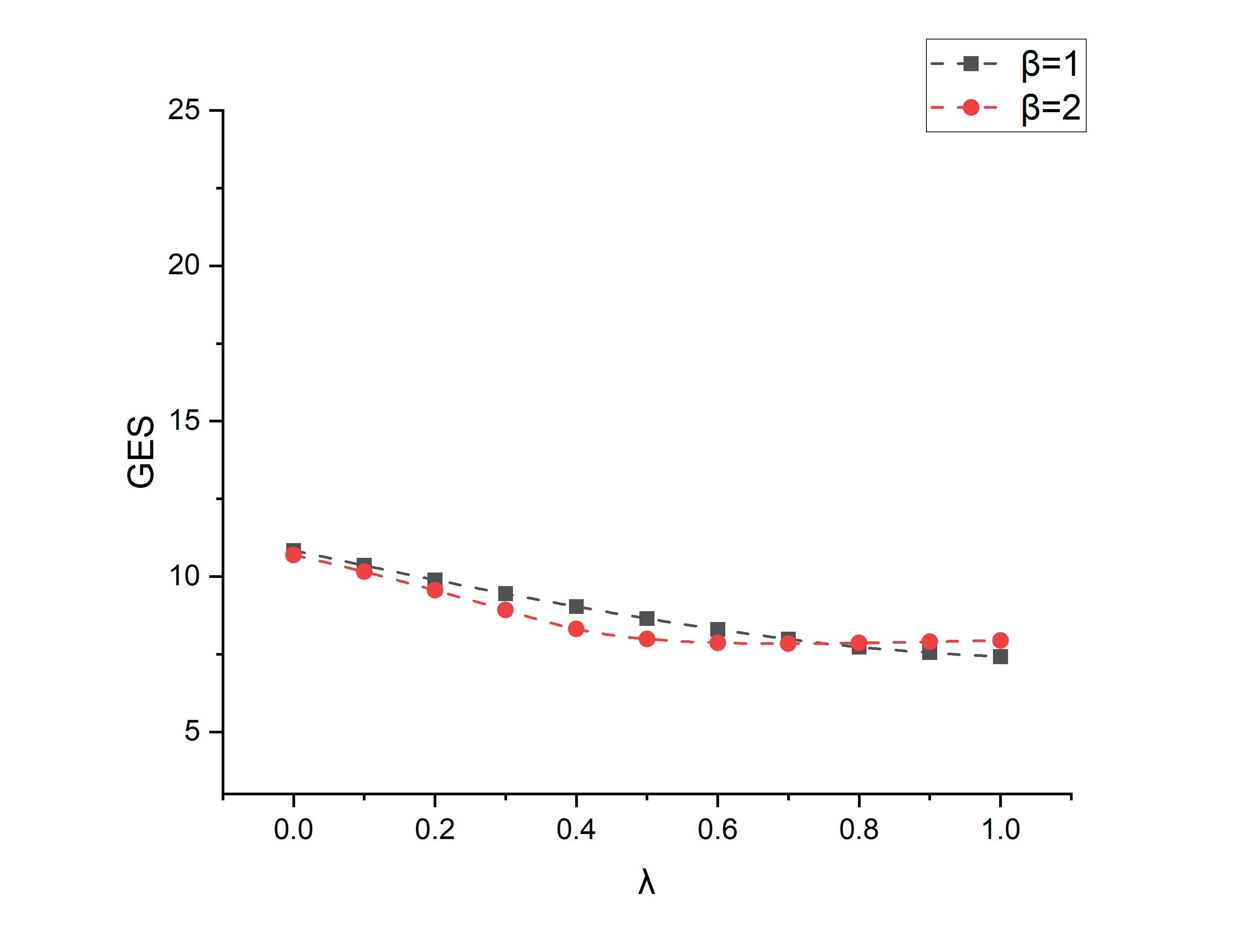}
        \caption[]{{Sherrington Kirkpatrick model}}  
    \end{subfigure}
    \caption{The GES for MDPD estimates from Ising models on different network structures,
    obtained numerically at $N=100$.}
    \label{GES}
\end{figure}

\newpage
\subsection{Proof of the Theorem \ref{cons}}
\label{sec:proof}

Let us now present the detailed proof of Theorem \ref{cons}.
To begin with, we define: 
\begin{equation}\label{Sxl}
\begin{aligned}
     S_{\bm X,\lambda}(\beta) & := -\frac{\textcolor{black}{(2N)}^{\lambda}}{1+\lambda}\frac{\partial d_{\lambda}(\hat{p},p(\beta))}{\partial \beta} \\
     &= \frac{1}{N}\sum_{i=1}^N \frac{\cosh((\lambda-1)\beta m_i(\bm X))}{\cosh^{\lambda+1}(\beta m_i(\bm X))}m_i(\bm X)(X_i - \tanh(\beta m_i(\bm X))),
\end{aligned}
\end{equation}
and
\begin{equation}\label{Sx_prime}
    \begin{aligned}
        S_{\bm X, \lambda}^{\prime}(\beta) &:= -\frac{(2N)^{\lambda}}{1+\lambda}\frac{\partial^2 d_{\lambda}(\hat{p},p(\beta))}{\partial \beta^2} \\
        & = \frac{1}{N}\sum_{i=1}^N \bigg[\frac{\partial f(\beta, m_i(\bm X))}{\partial \beta} m_i(\bm X)(X_i - \tanh(\beta m_i(\bm X)))-f(\beta, m_i(\bm X))\frac{m_i^2(\bm X)}{\cosh^2(\beta m_i(\bm X))}\bigg],
    \end{aligned}
\end{equation}
where
\begin{equation}\label{fx}
    f(\beta, t)=\frac{\cosh((\lambda-1)\beta t)}{\cosh^{\lambda+1}(\beta t)},
\end{equation}
and hence,
\begin{equation}\label{symmbt}
    \frac{\partial f(\beta, t)}{\partial \beta}= \frac{\lambda t\sinh((\lambda-2)\beta t)-t\sinh(\lambda\beta t)}{\cosh^{\lambda+2}(\beta t)} = \frac{t}{\beta}\frac{\partial f(\beta, t)}{\partial t}.
\end{equation}

Following \cite{chatterjee}, we need to accomplish the following two steps in order to show that $\hat{\beta}_{\lambda}$ converges to $\beta$:
\begin{itemize}
    \item \textbf{Step 1:} Show that $\e_\beta S_{\bm X,\lambda}(\beta)^2$ converges to $0$ at a certain rate.
    
    \item  \textbf{Step-2:} Show that $S_{\bm X,\lambda}^\prime$ is bounded away from $0$ in a neighborhood of the true $\beta$ with high probability.
\end{itemize}

The main logic behind the above two steps, is the following heuristic argument:
$$0 = S_{\bm X,\lambda}(\hat{\beta}_{\lambda}) = S_{\bm X,\lambda}(\beta) + (\hat{\beta}_{\lambda} - \beta) S_{\bm X,\lambda}^\prime(\xi)$$
for some (random) $\xi$ lying between $\beta$ and $\hat{\beta}_\lambda$. We thus have:
\begin{equation}\label{heurst}
\hat{\beta}_{\lambda} - \beta = -\frac{S_{\bm X,\lambda}(\beta)}{S_{\bm X,\lambda}^\prime(\xi)},
\end{equation}
and the RHS of \eqref{heurst} will give the rate of convergence of $\hat{\beta}_\lambda$ to $\beta$.
Step-1 will ensure that the numerator of the RHS in \eqref{heurst} is $o_P(1)$, and Step-2 will ensure that the denominator is not too small. 

The following lemma, proved in Section \ref{prooflem1} of the appendix, accomplishes Step-1.
 
\begin{lem}\label{sx}
    For any value of the true parameter $\beta \geq 0$ and any tuning parameter $\lambda \geq 0$, we have:
    $$
        \E S_{\bm X,\lambda}(\beta)^2 \le \frac{C_{\lambda,\beta,\|\bm J\|}}{N},
    $$
    where $C_{\lambda,\beta,\|\bm J\|}$ is a constant depending only on $\lambda$, $\beta$ and $\|\bm J\|$. Consequently, $$\p(|S_{\bm X,\lambda}(\beta)|>\delta)\leq \frac{C_{\lambda,\beta,\|\bm J\|}}{N\delta^2}$$ for any $\delta>0$.
\end{lem}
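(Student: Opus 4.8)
The plan is to expand $\E S_{\bm X,\lambda}(\beta)^2 = N^{-2}\sum_{i,j}\E[g_ig_j]$, where $g_i := f_i\,(X_i-\tanh(\beta m_i(\bm X)))$ is the $i$-th summand in \eqref{Sxl} and $f_i := \cosh((\lambda-1)\beta m_i(\bm X))\,\cosh^{-(\lambda+1)}(\beta m_i(\bm X))\,m_i(\bm X)$. The single structural fact I would lean on is that, under \eqref{modeldef}, $\E[X_i\mid \bm X_{-i}] = \tanh(\beta m_i(\bm X))$, while $m_i(\bm X)=\sum_j J_{i,j}X_j$ (and hence $f_i$) is a function of $\bm X_{-i}$ alone, since $J_{ii}=0$. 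Thus each summand is a conditional martingale difference, $\E[g_i\mid \bm X_{-i}]=0$, which is what produces the required cancellation in the cross terms.

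For the diagonal terms I would condition on $\bm X_{-i}$ and use $\E[(X_i-\tanh(\beta m_i(\bm X)))^2\mid \bm X_{-i}] = \sech^2(\beta m_i(\bm X))$, so that $\E[g_i^2] = \E\big[f_i^2\,\sech^2(\beta m_i(\bm X))\big]$. I would then observe that the scalar map $m\mapsto \cosh^2((\lambda-1)\beta m)\,\cosh^{-(2\lambda+4)}(\beta m)\,m^2$ is continuous and decays exponentially as $|m|\to\infty$ (its exponential rate is negative for every $\lambda\geq 0$, including $\lambda=0$), hence is bounded on $\R$ by a constant $C'_{\lambda,\beta}$; therefore $\E[g_i^2]\leq C'_{\lambda,\beta}$ uniformly in $i$, contributing $N^{-2}\sum_i\E[g_i^2]\leq C'_{\lambda,\beta}/N$ to $\E S_{\bm X,\lambda}(\beta)^2$.

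The crux is the off-diagonal sum $\sum_{i\neq j}\E[g_ig_j]$. Fixing $i\neq j$ and conditioning on $\bm X_{-i}$, the factor $X_i-\tanh(\beta m_i)$ has conditional mean zero, while $g_j$ depends on $X_i$ only through $m_j=s_{ij}+J_{j,i}X_i$, with $s_{ij}:=\sum_{k\neq i}J_{j,k}X_k$ free of $X_i$. A short computation with the binary conditional law yields the exact identity $\E[g_ig_j]=\tfrac12\,\E\big[a_i\,(\phi_j(+1)-\phi_j(-1))\big]$, where $a_i:=f_i\,\sech^2(\beta m_i(\bm X))$ and $\phi_j(\pm1)$ denotes $g_j$ evaluated at $X_i=\pm1$. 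I would then Taylor-expand the flip, $\phi_j(+1)-\phi_j(-1)=2J_{j,i}\,G_j'(m_j)+O(J_{j,i}^2)$, where $G_j(m):=\cosh((\lambda-1)\beta m)\,\cosh^{-(\lambda+1)}(\beta m)\,m\,(X_j-\tanh(\beta m))$ has first and second derivatives bounded uniformly on $\R$ (again from the exponential decay of the $\cosh$-ratios). Writing $b_j:=G_j'(m_j)$, the leading contribution assembles, using $J_{ii}=0$ and the symmetry of $\bm J$, into the bilinear form $\E[\bm a^\top \bm J\,\bm b]$ with $\bm a=(a_i)$, $\bm b=(b_j)$ having uniformly bounded entries. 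Crucially, I would \emph{not} bound $|\E[g_ig_j]|$ by $|J_{j,i}|$ and sum in $\ell^1$: since entrywise sums of $\bm J$ can be as large as $N^{3/2}\|\bm J\|$, that route only delivers $\E S_{\bm X,\lambda}(\beta)^2=O(N^{-1/2})$. Instead I would bound the bilinear form pointwise by $\|\bm a\|_2\,\|\bm J\|\,\|\bm b\|_2\leq C\,N\,\|\bm J\|$ via the spectral norm, and control the Taylor remainders (and the $O(|J_{j,i}|)$ discrepancy between evaluating $G_j'$ at $m_j$ versus $s_{ij}$) by $\sum_{i,j}J_{j,i}^2=\|\bm J\|_{HS}^2\leq N\|\bm J\|^2$; together these give $\sum_{i\neq j}\E[g_ig_j]=O(N)$.

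The main obstacle, and the only genuinely non-routine point, is precisely this last step: organizing the off-diagonal terms so that the bound is governed by the spectral norm $\|\bm J\|$ (finite by Condition (1)) rather than by entrywise sums of $\bm J$, which is what forces the correct total of order $N$ and hence $\E S_{\bm X,\lambda}(\beta)^2=N^{-2}\cdot O(N)=O(1/N)$. Everything else is bookkeeping, namely verifying the uniform boundedness of the handful of $\cosh$-ratio functions and their first two derivatives. The final probability bound is then immediate from Chebyshev's inequality, $\p(|S_{\bm X,\lambda}(\beta)|>\delta)\leq \delta^{-2}\,\E S_{\bm X,\lambda}(\beta)^2\leq C_{\lambda,\beta,\|\bm J\|}\,/(N\delta^2)$.
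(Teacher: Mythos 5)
Your proposal is correct and reaches the right bound, but it is organized differently from the paper's argument. The paper (Appendix A) follows Chatterjee's exchangeable-pairs technique: it resamples a uniformly chosen coordinate from its conditional law to build an exchangeable pair $(\bm X,\bm X')$, introduces the antisymmetric function $F$, and uses the identity $\E[g(\bm X)^2]=\tfrac12\E[(g(\bm X)-g(\bm X'))F(\bm X,\bm X')]$ to reduce everything to single-site flips $\bm X\mapsto \bm X^{(j)}$, after which the five terms $T_{1j},\dots,T_{5j}$ in \eqref{fivedec} are bounded. You instead expand $\E S_{\bm X,\lambda}(\beta)^2=N^{-2}\sum_{i,j}\E[g_ig_j]$ directly, exploit the exact conditional centering $\E[g_i\mid \bm X_{-i}]=0$, and compute each cross term by conditioning on $\bm X_{-i}$, which yields your exact flip identity $\E[g_ig_j]=\tfrac12\E[a_i(\phi_j(+1)-\phi_j(-1))]$; this is a more elementary and transparent route that makes the cancellation mechanism explicit pair by pair, at the cost of redoing by hand the bookkeeping that the exchangeable-pair identity packages automatically (your flip differences of $\tanh$ and of the $\cosh$-ratio are exactly the paper's $b_{ij}$ and $c_{ij}$, and your deferred "routine" boundedness claims for the ratio and its first two derivatives are precisely Lemmas \ref{f_bound} and \ref{f_bound2}). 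Crucially, you correctly identify the one non-negotiable idea, which the paper also uses: bounding the assembled bilinear forms by the spectral norms of $\bm J$, $\bm J_2$ (and, in the paper, $\bm J_3$) together with $\|\bm J\|_{HS}^2\leq N\|\bm J\|^2$, rather than by entrywise $\ell^1$ sums, which would only give $O(N^{-1/2})$.

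One small repair: your pointwise bound on the diagonal map $m\mapsto \cosh^2((\lambda-1)\beta m)\cosh^{-(2\lambda+4)}(\beta m)\,m^2$ fails at $\beta=0$ (the map is then $m^2$) and degrades like $\beta^{-2}$ as $\beta\to0$. This is harmless: replace it by the aggregate bound $\sum_{i=1}^N m_i(\bm X)^2=\|\bm J\bm X\|^2\leq N\|\bm J\|^2$ together with $0<f(\beta,m)\leq 2^{\lambda}$ and $|X_i-\tanh(\beta m_i(\bm X))|\leq 2$ — this is exactly how the paper handles such sums, cf. \eqref{fmp} — and the same remark disposes of the $\beta=0$ case in your off-diagonal remainder terms, where the relevant maps become linear and the Taylor remainders vanish.
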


Next, in order to execute Step-2, we will show that $S_{\bm X,\lambda}^\prime$ is strictly negative and bounded away from 0 with high probability. Towards this, we define:
\begin{equation}\label{sp1}
    S_{\bm X,\lambda}^{\prime}(\beta)_1:=\frac{1}{N}\sum_{i=1}^N \frac{\partial f(\beta, m_i(\bm X))}{\partial \beta} m_i(\bm X)(X_i - \tanh(\beta m_i(\bm X))),
\end{equation}
and 
\begin{equation}\label{sp2}
    S_{\bm X,\lambda}^{\prime}(\beta)_2:=-\frac{1}{N}\sum_{i=1}^N f(\beta, m_i(\bm X))\frac{m_i^2(\bm X)}{\cosh^2(\beta m_i(\bm X))}.
\end{equation}
Then by \eqref{Sx_prime}, we have:
$$ S_{\bm X,\lambda}^{\prime}(\beta) = S_{\bm X,\lambda}^{\prime}(\beta)_1 + S_{\bm X,\lambda}^{\prime}(\beta)_2~.$$
We now show that the first term concentrates around 0 at the true parameter value $\beta>0$ in the following lemma;
its proof is given in Section \ref{prooflem2} of the appendix. 

\begin{lem}\label{sbeta1}
For any $\beta \geq 0$ and $\lambda \geq 0$, we have:
    $$
        \E[S^{\prime}_{\bm X,\lambda}(\beta)_1^2]\leq \frac{D_{\lambda, \beta, \|\bm J\|}}{N},
    $$
    where $D_{\lambda,\beta,\|\bm J\|}$ is a constant depending only on $\lambda$, $\beta$ and $\|\bm J\|$. Consequently, $$\p(|S^{\prime}_{\bm X,\lambda}(\beta)_1|>\delta)\leq \frac{D_{\lambda,\beta,\|\bm J\|}}{N\delta^2}$$ for any $\delta>0$.
\end{lem}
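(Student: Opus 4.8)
The plan is to bound $\E[S_{\bm X,\lambda}^{\prime}(\beta)_1^2]$ by analysing the second moment of the sum directly. Recall from \eqref{sp1} that $S_{\bm X,\lambda}^{\prime}(\beta)_1=\frac1N\sum_{i=1}^N g_i$, where $g_i:=b(m_i(\bm X))\bigl(X_i-\tanh(\beta m_i(\bm X))\bigr)$ and $b(t):=t\,\partial f(\beta,t)/\partial\beta$. This has exactly the same shape as $S_{\bm X,\lambda}(\beta)$ in Lemma~\ref{sx}, namely $\frac1N\sum_i(\text{bounded function of }m_i)\cdot(X_i-\tanh(\beta m_i))$, so the argument runs in close parallel to (and may reuse the machinery of) Lemma~\ref{sx}. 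Two elementary facts drive everything. First, $b$ is uniformly bounded: by \eqref{symmbt} its numerator grows at most like $e^{\max(\lambda,|\lambda-2|)\beta|t|}$ while its denominator grows like $e^{(\lambda+2)\beta|t|}$, so $b(t)$ decays exponentially as $|t|\to\infty$, whence $\sup_{t\in\R}|b(t)|\le C_{\lambda,\beta}<\infty$ (with $b\equiv 0$ when $\beta=0$ or $\lambda=0$). Second, the conditional-centering identity $\E[X_i-\tanh(\beta m_i(\bm X))\mid\bm X_{-i}]=0$ holds, because $m_i(\bm X)$ is $\bm X_{-i}$-measurable and $\E[X_i\mid\bm X_{-i}]=\tanh(\beta m_i(\bm X))$ under \eqref{modeldef}; in particular $\E[g_i\mid\bm X_{-i}]=0$.

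Writing $\E[S_{\bm X,\lambda}^{\prime}(\beta)_1^2]=\frac1{N^2}\sum_{i,j}\E[g_ig_j]$, I split into diagonal and off-diagonal sums. The diagonal is immediate: boundedness of $b$ together with $|X_i-\tanh|\le 2$ gives $\sum_i\E[g_i^2]\le 4C_{\lambda,\beta}^2N$, contributing $O(1/N)$. The off-diagonal sum is where the work lies. For $i\neq j$, conditioning on $\bm X_{-i}$ and using the centering identity yields $\E[g_ig_j\mid\bm X_{-i}]=b(m_i)\,\tfrac12\operatorname{sech}^2(\beta m_i)\,\bigl[g_j|_{X_i=1}-g_j|_{X_i=-1}\bigr]$; since flipping $X_i$ moves $m_j$ by $2J_{ij}$, boundedness of $b$ and of its derivative gives, via a mean-value estimate, $|\E[g_ig_j]|\le C\,|J_{ij}|$. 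The naive consequence $\sum_{i\neq j}|\E[g_ig_j]|\le C\sum_{i,j}|J_{ij}|$ is, however, \emph{not} good enough: $\sum_{i,j}|J_{ij}|$ may be of order $N^{3/2}$ (for instance in the Sherrington--Kirkpatrick model, where $J_{ij}=N^{-1/2}g_{ij}$), whereas I need the off-diagonal sum to be $O(N)$.

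The resolution, and the main obstacle, is to retain the leading first-order-in-$J_{ij}$ structure and recognise that it assembles into a quadratic form controlled by the spectral norm. Conditioning on the spins outside $\{i,j\}$, the pair $(X_i,X_j)$ obeys a two-spin Ising law $\propto\exp\bigl(\beta(c_is_i+c_js_j+J_{ij}s_is_j)\bigr)$ whose only $J_{ij}$-dependence is in the coupling term, and in which $g_i,g_j$ enter through $m_i=c_i+J_{ij}X_j$, $m_j=c_j+J_{ij}X_i$. At zero coupling the two spins become conditionally independent and each of $g_i,g_j$ has conditional mean zero, so $\E[g_ig_j\mid\bm X_{-\{i,j\}}]$ vanishes there; treating the coupling as a parameter and expanding to first order gives $\E[g_ig_j\mid\bm X_{-\{i,j\}}]=-\beta J_{ij}P_iP_j+O(J_{ij}^2)$, where $P_i:=b(m_i)\operatorname{sech}^2(\beta m_i)$ is a uniformly bounded function of $m_i$ alone. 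Summing, and using $J_{ii}=0$, the leading part is, up to an $O(\sum_{i,j}J_{ij}^2)$ remainder, the quadratic form $-\beta\,\E[\vec P^{\top}\bm J\vec P]$ with $\vec P=(P_1,\dots,P_N)^{\top}$; this is bounded pointwise by $\|\bm J\|\,\|\vec P\|^2\le C_{\lambda,\beta}^2\,\|\bm J\|\,N$, which is exactly where Condition~(1) enters. The remainder is controlled by $\sum_{i,j}J_{ij}^2=\|\bm J\|_F^2\le N\|\bm J\|^2$, since $\bm J$ has at most $N$ singular values, each at most $\|\bm J\|$. Hence $\sum_{i\neq j}\E[g_ig_j]=O(N)$, and combining with the diagonal bound gives $\E[S_{\bm X,\lambda}^{\prime}(\beta)_1^2]\le D_{\lambda,\beta,\|\bm J\|}/N$; the probability bound then follows by Chebyshev's inequality.
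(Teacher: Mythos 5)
Your proof is correct, but it takes a genuinely different route from the paper's. The paper disposes of Lemma~\ref{sbeta1} in a few lines by abstraction: it observes that $S^{\prime}_{\bm X,\lambda}(\beta)_1$ has exactly the shape of $S_{\bm X,\lambda}(\beta)$ with the weight $f(\beta,t)$ replaced by $\phi(\beta,t)=\partial f(\beta,t)/\partial\beta$, and that the exchangeable-pair argument of Lemma~\ref{sx} (the anti-symmetric $F$, the identity $\E[g(\bm X)^2]=\tfrac12\E[(g(\bm X)-g(\bm X^\prime))F(\bm X,\bm X^\prime)]$, and the five-term decomposition) uses only boundedness of $\phi$ and its first two $t$-derivatives, which Lemma~\ref{f_bound2} supplies via \eqref{symmbt}. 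You instead give a self-contained second-moment computation: the diagonal is handled by conditional centering, and for $i\neq j$ you condition on $\bm X_{-\{i,j\}}$, Taylor-expand the two-spin conditional law in the coupling $J_{ij}$ about zero coupling (where $\E[g_ig_j\mid\bm X_{-\{i,j\}}]$ vanishes), identify the first-order coefficient $-\beta P_iP_j$ (which checks out: the measure derivative contributes $+\beta P_iP_j$ and the explicit $J_{ij}$-dependence of $m_i$ and $m_j$ contributes $-2\beta P_iP_j$), and resum the leading terms into a quadratic form bounded by $\|\bm J\|N$, with remainder $O\bigl(\sum_{i,j}J_{ij}^2\bigr)\leq N\|\bm J\|^2$. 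Your diagnosis that the naive bound $|\E[g_ig_j]|\leq C|J_{ij}|$ is insufficient under Sherrington--Kirkpatrick scaling is exactly right, and your fix is precisely where the spectral-norm hypothesis enters --- the same role it plays in the paper's bounds on $\|\bm J_2\|$ and $\|\bm J_3\|$. What each approach buys: the paper's is essentially free once Lemma~\ref{sx} is proved and covers both lemmas simultaneously; yours avoids exchangeable pairs altogether, is elementary, and would prove Lemma~\ref{sx} itself by the identical mechanism. Two points you should make explicit to be fully rigorous: (i) the $O(J_{ij}^2)$ Taylor remainder requires a uniform bound on the second $J$-derivative of the conditional expectation, hence boundedness of $b^{\prime\prime}$ in addition to $b,b^\prime$ --- true, since $b$ is a polynomial factor times an exponentially decaying one, and provable exactly as in Lemmas~\ref{f_bound} and~\ref{f_bound2}; and (ii) the $P_i$ arising from the pair $(i,j)$ is evaluated at $c_i^{(j)}=m_i(\bm X)-J_{ij}X_j$ and hence depends on $j$, so assembling the sum into the single quadratic form with $P_i=b(m_i(\bm X))\sech^2(\beta m_i(\bm X))$ needs a Lipschitz swap costing $O(|J_{ij}|)$ per factor, which multiplies the $J_{ij}$ prefactor and is absorbed into the same $O(J_{ij}^2)$ remainder. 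Both gaps are routine, and the argument stands.
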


Next, note that for any \textcolor{black}{$M>0$}, we have:
\begin{equation*}
\begin{aligned}
    S_{\bm X,\lambda}^{\prime}(\beta)_2 &=-\frac{1}{N}\sum_{i=1}^N f(\beta, m_i(\bm X))\frac{m_i^2(\bm X)}{\cosh^2(\beta m_i(\bm X))}\\
    & \leq -\frac{1}{N}\textcolor{black}{\sech^{\lambda+3}(\beta M)} \sum_{i=1}^N m_i(\bm X)^2 \bm{1}\{|m_i(\bm X)|\leq M\}.
\end{aligned}
\end{equation*}
Thus, we can control the term $S_{\bm X,\lambda}^{\prime}(\beta)_2$ by using the following lemma. 

\begin{lem}\label{mx}
Fix $0 < \delta < 1$. Then, there exist $\varepsilon = \varepsilon(\delta, \beta)>0$ and $M= M(\delta, \beta)<\infty$ such that
$$
\p\bigg(\sum_{i=1}^N m_i(\bm X)^2 \bm{1}\{|m_i(\bm X)|\leq M\} \geq \varepsilon N\bigg)\geq 1-\delta,
$$
for all N large enough.
\end{lem}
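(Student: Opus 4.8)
\emph{Overall strategy.} The plan is to deduce the claim from a one-sided (lower-tail) bound on the energy density $H(\bm X)/N$ and then transfer it to the truncated second moment. The key realization is that one should \emph{not} try to prove that $\frac1N\sum_i m_i(\bm X)^2$ concentrates: in the low-temperature regime covered by Condition (2) for dense and spin-glass ensembles, global observables are typically multimodal and need not concentrate. It suffices to control only the lower tail of the energy, and this follows from the growth of the partition function in Condition (2) via a crude entropy bound. Two elementary facts will be used repeatedly: the identity $\sum_i m_i(\bm X)X_i=\bm X^\top\bm J\bm X=2H(\bm X)$, and the deterministic bound $\sum_i m_i(\bm X)^2=\|\bm J\bm X\|^2\le\|\bm J\|^2N$, which is finite by Condition (1).

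\emph{Step A: lower tail of the energy.} I would first show that, for a suitable constant $a>0$, $H(\bm X)\ge aN$ with probability tending to $1$. Since $\beta>0$, on $\{H(\bm X)\le aN\}$ we have $e^{\beta H(\bm X)}\le e^{\beta aN}$, while trivially $|\{\bm x:H(\bm x)\le aN\}|\le 2^N$. Hence
\begin{equation*}
\p_\beta\big(H(\bm X)\le aN\big)=\frac{1}{2^NZ(\beta)}\sum_{\bm x:\,H(\bm x)\le aN}e^{\beta H(\bm x)}\le \frac{e^{\beta aN}}{Z(\beta)}=\exp\Big(N\big(\beta a-\tfrac1N\log Z(\beta)\big)\Big).
\end{equation*}
By Condition (2) there is $c_0>0$ with $\tfrac1N\log Z(\beta)\ge c_0$ for all large $N$, so for any fixed $a<c_0/\beta$ the exponent is strictly negative and the probability decays exponentially. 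Fixing such an $a$, I conclude $\frac1N\sum_i m_i(\bm X)X_i=2H(\bm X)/N\ge 2a$ with probability $\to1$.

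\emph{Steps B and C: passing to the truncated second moment.} To convert the signed sum into nonnegative terms, I would invoke the conditional-mean identity $\E_\beta[X_i\mid\bm X_{-i}]=\tanh(\beta m_i(\bm X))$ (note $m_i$ is $\bm X_{-i}$-measurable). With $S_{\bm X,0}(\beta)=\frac1N\sum_i m_i(\bm X)\big(X_i-\tanh(\beta m_i(\bm X))\big)$, which is \eqref{Sxl} at $\lambda=0$, Lemma \ref{sx} gives $\E[S_{\bm X,0}(\beta)^2]\le C/N$, hence $S_{\bm X,0}(\beta)=o_P(1)$; combined with Step A this yields $\frac1N\sum_i m_i(\bm X)\tanh(\beta m_i(\bm X))=2H(\bm X)/N-S_{\bm X,0}(\beta)\ge a$ with probability $\to1$. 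Finally, using $m\tanh(\beta m)\le\beta m^2$ when $|m|\le M$ and $m\tanh(\beta m)\le|m|\le m^2/M$ when $|m|>M$, together with $\sum_i m_i^2\le\|\bm J\|^2N$, I obtain
\begin{equation*}
\frac1N\sum_{i=1}^N m_i(\bm X)\tanh(\beta m_i(\bm X))\le \frac{\beta}{N}\sum_{i=1}^N m_i(\bm X)^2\,\mathbf{1}\{|m_i(\bm X)|\le M\}+\frac{\|\bm J\|^2}{M}.
\end{equation*}
Choosing $M:=2\|\bm J\|^2/a$ makes the last term at most $a/2$, so on the high-probability event above one gets $\frac1N\sum_i m_i^2\,\mathbf{1}\{|m_i|\le M\}\ge a/(2\beta)=:\varepsilon$. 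Taking $N$ large enough that the exceptional probabilities of Steps A and B together fall below $\delta$ then gives the lemma with $\varepsilon=a/(2\beta)$ and $M=2\|\bm J\|^2/a$.

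\emph{Main obstacle.} The genuine difficulty is entirely conceptual and lives in Step A. The naive route — proving concentration of $\frac1N\sum_i m_i^2$ — is hard (indeed false in general) at low temperature, since the obvious variance bounds only give $\var(H)=O(N^2)$ and Dobrushin/high-temperature concentration is unavailable here. The decisive point is that only the \emph{lower} tail of $H$ is needed, and that this tail is governed by the sheer size of $Z(\beta)$ through the entropy estimate $|\{H\le aN\}|\le 2^N$; this is precisely what makes the argument uniform across all ensembles satisfying Conditions (1)–(2), including the dense-graph and spin-glass examples. The remaining pieces (the $\tanh$-replacement via Lemma \ref{sx} and the truncation inequality) are routine once the energy bound is in place.
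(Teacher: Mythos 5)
Your proof is correct, and its decisive ingredient --- Step A --- is exactly the one behind the paper's version of this lemma, which is stated without proof and inherited from \cite{chatterjee}: the lower tail of the Hamiltonian is controlled by the crude entropy bound $\p_\beta(H(\bm X)\le aN)\le e^{\beta aN}/Z(\beta)\le e^{(\beta a-c_0)N}$ for any $a<c_0/\beta$, where $c_0>0$ comes from Condition (2), so that $\frac{1}{N}\sum_i m_i(\bm X)X_i=2H(\bm X)/N\ge 2a$ with probability tending to one. Where you diverge is the finish. The standard route never invokes the conditional-mean identity: since $|X_i|=1$, one has directly
$2a\le \frac{1}{N}\sum_i |m_i(\bm X)|\le \big(\frac{1}{N}\sum_i m_i(\bm X)^2\,\mathbf{1}\{|m_i(\bm X)|\le M\}\big)^{1/2}+\frac{1}{NM}\sum_i m_i(\bm X)^2$, using Cauchy--Schwarz on the truncated part and $|m|\le m^2/M$ on $\{|m|>M\}$; since $\frac{1}{N}\sum_i m_i(\bm X)^2=\|\bm J\bm X\|^2/N\le\|\bm J\|^2$, choosing $M=\|\bm J\|^2/a$ gives the lemma with $\varepsilon=a^2$, with no stochastic input beyond Step A. Your detour through $\E_\beta[X_i\mid\bm X_{-i}]=\tanh(\beta m_i(\bm X))$ and the $\lambda=0$ case of Lemma \ref{sx} is logically sound --- Lemma \ref{sx} is proved independently in Appendix A, covers $\lambda=0$ (where $f\equiv 1$), and creates no circularity --- but it is an avoidable expenditure: it imports an $o_P(1)$ error and a second exceptional event where a deterministic pointwise bound suffices, and it yields $\varepsilon=a/(2\beta)$, which degrades as $\beta$ grows, whereas the direct argument does not. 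One small repair you should make in either version: your $M=2\|\bm J\|^2/a$ involves $\|\bm J\|=\|\bm J_N\|$, which varies with $N$; take instead $M=2\big(\sup_{N\ge 1}\|\bm J_N\|\big)^2/a$, finite by Condition (1), so that $M$ is a genuine constant as the statement requires (or normalize $\|\bm J\|=1$ without loss of generality, as the paper does elsewhere).
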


In view of the above lemmas, we are now finally ready to prove Theorem \ref{cons}. 
First, by Lemma \ref{sx}, we have: 
\begin{equation*}
    \p \bigg(|S_{\bm X, \lambda}(\beta)|>\frac{M_1}{\sqrt{N}} \bigg)\leq \frac{C}{M_1^2}.
\end{equation*}
for any constant $M_1>0$, where $C$ is as defined in the statement of Lemma \ref{sx}.
Now, fix $\delta>0$, whence it is possible to choose $M_1 = M_1(\delta, \beta, \lambda, \|\bm J\|) > 0$ such that the RHS above is less
than $\delta/3$. Similarly, by Lemma \ref{sbeta1}, it is also possible to choose $M_2 = M_2(\delta, \beta, \lambda, \|\bm J\|) > 0$ such that
\begin{equation*}
    \p \bigg( |S^\prime_{\bm X, \lambda}(\beta)_1|>\frac{M_2}{\sqrt{N}} \bigg)<\delta/3.
\end{equation*}
Next, we notice that by Lemma \ref{mx} there exists $\varepsilon = \varepsilon(\delta, \beta)>0$ and $M_3= M_3(\delta,\beta)>0$ such that
\begin{equation*}
    \p\bigg(\sum_{i=1}^N m_i(\bm X)^2 \bm{1}\{|m_i(\bm X)|\leq M_3\} \geq \varepsilon N\bigg)\geq 1-\delta/3,
\end{equation*}
for $N$ large enough. Define
\begin{equation}
\begin{aligned}
    T_N:=\bigg\{\bm X \in \{-1,1\}^N: |S_{\bm X, \lambda}(\beta)|\leq\frac{M_1}{\sqrt{N}}&, |S^\prime_{\bm X, \lambda}(\beta)_1|\leq\frac{M_2}{\sqrt{N}},\\
    & \sum_{i=1}^N m_i(\bm X)^2 \bm{1}\{|m_i(\bm X)|\leq M_3\} \geq 2\varepsilon N \bigg\}.
\end{aligned}
\end{equation}
Then we have $\p(T_N)\geq 1-\delta$, for $N$ large enough. For $\bm X \in T_N$, we have:
\begin{equation*}
\begin{aligned}
    \frac{\partial }{\partial \beta}S_{\bm X, \lambda}(\beta) & = S^\prime_{\bm X, \lambda}(\beta)_1 + S^\prime_{\bm X, \lambda}(\beta)_2\\
    & \leq \frac{M_2}{\sqrt{N}} -2\varepsilon~ \textcolor{black}{\sech^{\lambda+3}(\beta M_3)}\\
    & \leq -\varepsilon ~\textcolor{black}{\sech^{\lambda+3}(\beta M_3)}
\end{aligned}
\end{equation*}
for $N$ large enough. Then for $\bm X \in T_N$,
\begin{equation*}
\begin{aligned}
    \frac{M_1}{\sqrt{N}} \geq |S_{\bm X, \lambda}(\beta)|& =|S_{\bm X, \lambda}(\beta)-S_{\bm X, \lambda}(\hat{\beta}_\lambda(\bm X))|\\
    & \geq \int_{\beta \wedge \hat{\beta}_\lambda(\bm X)}^{\beta \vee \hat{\beta}_\lambda(\bm X)}-\frac{\partial }{\partial \beta}S_{\bm X, \lambda}(\beta)d\beta\\
    &\geq \textcolor{black}{\frac{\varepsilon}{M_3} \int_{M_3(\beta \wedge \hat{\beta}_\lambda(\bm X))}^{M_3(\beta \vee \hat{\beta}_\lambda(\bm X))} \sech^{\lambda+3}(u)~du}\\ &\geq  \textcolor{black}{\frac{\varepsilon}{M_3} \int_{M_3(\beta \wedge \hat{\beta}_\lambda(\bm X))}^{M_3(\beta \vee \hat{\beta}_\lambda(\bm X))} e^{-(\lambda+3) u}~du}\\ &= \frac{\varepsilon}{(\lambda+3)M_3}\left(e^{-(\lambda+3)M_3(\beta \wedge \hat{\beta}_\lambda(\bm X))} - e^{-(\lambda+3)M_3(\beta \vee \hat{\beta}_\lambda(\bm X))}\right)\\ &= \frac{\varepsilon}{(\lambda+3)M_3} \left|e^{-(\lambda+3)M_3\hat{\beta}_\lambda(\bm X)} - e^{-(\lambda+3)M_3\beta}\right|
\end{aligned}
\end{equation*}

Now, defining $M=M(\delta,\beta,\lambda):={(\lambda+3)M_1M_3}/{ \varepsilon} $, we have
\begin{equation}\label{tightbd78}
    \p\left(\sqrt{N}\left|e^{-(\lambda+3)M_3\hat{\beta}_\lambda(\bm X)} - e^{-(\lambda+3)M_3\beta}\right|\leq M\right)\geq 1-\delta.
\end{equation}

It now follows from \eqref{tightbd78} and an application of Lemma \ref{prokh} of Appendix C, that for all $N$ large enough, with probability at least $1-\delta$, we have:
$$|\hat{\beta}_\lambda(\bm X) - \beta| \le \frac{Me^{2(\lambda+3) M_3\beta}}{(\lambda+3)M_3\sqrt{N}} = \frac{M_1 e^{2(\lambda+3) M_3\beta}}{\varepsilon \sqrt{N}}$$
which completes the proof of Theorem \ref{cons}. 
\qed

\subsection{Proof of Theorem \ref{thm:consiZ}}\label{sec:cltproof921}
To begin with, define:
$$h_N(\tilde{\beta}) := \frac{1}{N}\sum_{i=1}^N g_{\tilde{\beta}}(m_i(\bm X))(X_i - \tanh(\tilde{\beta} m_i(\bm X))),\quad \text{and}$$ $$h(\tilde{\beta}) := g_{\tilde{\beta}}(M(\bm X))(M(\bm X)-\tanh(\tilde{\beta} M(\bm X))).$$
It follows from Assumption \ref{asd0}, a first order Taylor expansion, and the boundedness of the derivative of $g_{\beta}$ on bounded sets, that for any fixed $\tilde{\beta}$,
\begin{equation*}
\begin{aligned}
    & \left|\frac{1}{N}\sum_{i=1}^Ng_{\tilde{\beta}}(m_i(\bm X))\tanh(\Tilde{\beta}m_i(\bm X)) -g_{\tilde{\beta}}(M(\bm X))\tanh(\Tilde{\beta}M(\bm X))\right|\\
    & ~~~\lesssim  \frac{1}{N}\sum_{i=1}^N  |m_i(\bm X)-M(\bm X)|\xrightarrow{\p_{\beta}}0.
\end{aligned}
\end{equation*}
Moreover, by Assumption \ref{asd0}, we have the following for any fixed $\tilde{\beta}$:
\begin{equation*}
\begin{aligned}
    &\left|\frac{1}{N}\sum_{i=1}^N g_{\tilde{\beta}}(m_i(\bm X)) X_i -g_{\tilde{\beta}}(M(\bm X)) M(\bm X)\right|\\
    = & \left|\frac{1}{N}\sum_{i=1}^N(g_{\tilde{\beta}}(m_i(\bm X)) X_i - g_{\tilde{\beta}}(M(\bm X)) X_i)\right|+\left|\frac{1}{N}\sum_{i=1}^N g_{\tilde{\beta}}(M(\bm X))(X_i - M(\bm X))\right|\\
    \leq & \frac{1}{N}\sum_{i=1}^N|g_{\tilde{\beta}}(m_i(\bm X))-g_{\tilde{\beta}}(M(\bm X))| + \left| g_{\tilde{\beta}}(M(\bm X))(\bar{X}-M(\bm X))\right|\\ \lesssim & \frac{1}{N}\sum_{i=1}^N  |m_i(\bm X) - M(\bm X)| + |\bar{X} - M(\bm X)| \xrightarrow{\p_\beta} 0.
\end{aligned}
\end{equation*}
This shows that $h_N(\tilde{\beta}) \xrightarrow{\p_\beta} h(\tilde{\beta})$ for all $\tilde{\beta}$ in some neighborhood of the true $\beta$. Since $g$ is an odd function, we actually have $h(\tilde{\beta}) = g_{\tilde{\beta}}(m_+)(m_+-\tanh(\tilde{\beta} m_+))$, where recall that $m_+ = m_+(\beta)$ is the solution of $\tanh(\beta m)=m$. For $\beta>1$, there exists a neighborhood of $\beta$ such that $h$ changes sign on either sides of $\beta$ within this neighborhood. In particular, for all $\varepsilon >0$ sufficiently small, $h(\beta-\varepsilon)$ and $h(\beta+\varepsilon)$ have different (non-zero) signs, which implies that as $N \rightarrow \infty$, $h_N(\beta -\varepsilon)$ and $h_N(\beta+\varepsilon)$ have different signs with probability $1-o(1)$. $\hat{\beta}_g$ being the root of $h_N$, must thus satisfy the following for all $\varepsilon>0$ sufficiently small:  $$\p_\beta(\beta-\varepsilon < \hat{\beta}_g < \beta+\varepsilon) \rightarrow 1$$ which implies that $\hat{\beta}_g \xrightarrow{\p_\beta} \beta$. This completes the proof of Theorem \ref{thm:consiZ}.

\subsection{Proof of Theorem \ref{cltthm8}}\label{sec:cltproof92}
We start by looking at a first order Taylor expansion of the left-hand side of 
the general $Z$-estimating equation in (\ref{z-est-ee}), which yields:
\begin{eqnarray*}
        &&\frac{1}{\sqrt{N}}\sum_{i=1}^N g(m_i(\bm X))(X_i - \tanh(\beta m_i(\bm X)))\\ &=& \frac{g(M(\bm X))}{\sqrt{N}}\sum_{i=1}^N (X_i - \tanh(\beta m_i(\bm X)))\\&+& \frac{1}{\sqrt{N}}\sum_{i=1}^N (m_i(\bm X)- M(\bm X))g'(\xi_i(\bm X)) (X_i - \tanh(\beta m_i(\bm X)))
\end{eqnarray*}
for random variables $\xi_i(\bm X)$ lying between $m_i(\bm X)$ and $M(\bm X)$. 
By the boundedness of $g'$ on bounded sets and Assumption \ref{asd1} (1), we thus have:
$$\frac{1}{\sqrt{N}}\sum_{i=1}^N g(m_i(\bm X))(X_i - \tanh(\beta m_i(\bm X))) = \frac{g(M(\bm X))}{\sqrt{N}}\sum_{i=1}^N (X_i - \tanh(\beta m_i(\bm X))) + o_P(1).$$
It now follows from Assumption \ref{asd3} that:
\begin{equation}\label{cltfr68}
    \frac{1}{\sqrt{N}}\sum_{i=1}^N g(m_i(\bm X))(X_i - \tanh(\beta m_i(\bm X))) \xrightarrow{D} \mathcal{N}\left(0,g(m_+)^2\sigma_\beta^2\right).
\end{equation}
Now, if we define $f_i(\beta) = f_{i,\bm X}(\beta) := g_\beta(m_i(\bm X))(X_i - \tanh(\beta m_i(\bm X)))$,
then 
\begin{equation}\label{mainjh34}
 \frac{1}{\sqrt{N}}\sum_{i=1}^N f_i(\beta) = \frac{\hat{\beta}_g - \beta}{\sqrt{N}}\sum_{i=1}^N f_i'(\beta) + \frac{(\hat{\beta}_g - \beta)^2}{2N}\sum_{i=1}^N f_i''(\xi_i(\beta))
\end{equation}
 for random variables $\xi_i(\beta)$ lying between $\beta$ and $\hat{\beta}_g$. Now, note that:
$$f_i'(\beta) = -m_i(\bm X) g_\beta(m_i(\bm X)) \sech^2 (\beta m_i(\bm X)) +  (X_i - \tanh(\beta m_i(\bm X))) \partial_\beta g_{\beta}(m_i(\bm X)).$$
For the first term above, we get
\begin{eqnarray*}
   && \left|\frac{1}{N}\sum_{i=1}^N m_i(\bm X) g_\beta(m_i(\bm X)) \sech^2 (\beta m_i(\bm X)) - M(\bm X) g_\beta(M(\bm X))\sech^2(\beta M(\bm X))\right|\\&\le& \frac{1}{N}\sum_{i=1}^N \left|m_i(\bm X) g_\beta(m_i(\bm X)) \sech^2 (\beta m_i(\bm X)) -  M(\bm X) g_\beta(M(\bm X))\sech^2(\beta M(\bm X))\right|\\&\lesssim& \frac{1}{N}\sum_{i=1}^N \left|m_i(\bm X) - M(\bm X)\right| \le \sqrt{\frac{1}{N}\sum_{i=1}^N \left(m_i(\bm X) - M(\bm X)\right)^2} = o_P(1),
\end{eqnarray*}
where the second inequality followed from the boundedness of the derivative of the function $xg(x) \sech^2(x)$ on bounded sets, and the third inequality follows from the Cauchy-Schwarz inequality. Since $g$ is an odd function, $xg(x)$ is an even function, and hence
$$ M(\bm X) g_\beta(M(\bm X))\sech^2(\beta M(\bm X)) = m_+ g_\beta(m_+)\sech^2(\beta m_+).$$
On the other hand, by Assumption \ref{asd1} (2), we also have:
$$\frac{1}{N}\sum_{i=1}^N  (X_i - \tanh(\beta m_i(\bm X))) \partial_\beta g_{\beta}(m_i(\bm X)) = o_P(1).$$
Therefore, we get
\begin{equation}\label{firstterm7762}
    \frac{\hat{\beta}_g - \beta}{\sqrt{N}}\sum_{i=1}^N f_i'(\beta) = -\sqrt{N}(\hat{\beta}_g - \beta)\left(m_+ g_\beta(m_+)\sech^2(\beta m_+) + o_P(1)\right).
\end{equation}

Finally, it follows from the uniform boundedness of $f_i''$ on some bounded neighborhood of $\beta$, together with the consistency of $\hat{\beta}_g$ (Theorem \ref{thm:consiZ}), that:
\begin{equation}\label{secterm7762}
    \frac{(\hat{\beta}_g - \beta)^2}{2N}\sum_{i=1}^N f_i''(\xi_i(\beta)) = o_P(1).
\end{equation}
Combining \eqref{cltfr68}, \eqref{mainjh34}, \eqref{firstterm7762} and \eqref{secterm7762}, we thus have:
$$ -\sqrt{N}(\hat{\beta}_g - \beta)\left(m_+ g(m_+)\sech^2(\beta m_+) + o_P(1)\right) + o_P(1) \xrightarrow{D} \mathcal{N}\left(0,g(m_+)^2\sigma_\beta^2\right).$$
By Slutsky's theorem, we thus have:
$$\sqrt{N}(\hat{\beta}_g - \beta) \xrightarrow{D} \mathcal{N}\left(0,\frac{\sigma_\beta^2}{m_+^2\sech^4(\beta m_+)}\right).$$
Theorem \ref{cltthm8} \eqref{z-est92} now follows on observing that $$\sech^4(\beta m_+) = (1-\tanh^2(\beta m_+))^2 = (1-m_+^2)^2.$$
Since MDPD estimators are special cases of the class of general $Z$-estimators considered, 
this completes the proof of Theorem \ref{cltthm8}.\qed

\section{Simulation Results}
\label{sec:simulatons}

In this section, we demonstrate the performances of the MDPD estimators based on data simulated from (possibly contaminated) Ising models on different graph ensembles. 

\subsection{Ising models on 1-D and 2-D lattices}
To begin with, we generate (using Gibbs sampler) $1000$ samples from the Ising model on a 1-D lattice with $N=2000$ vertices and $\beta=0.5$. We then contaminate the data by setting a certain fraction of entries of each sample to $+1$.  The (average) mean squared error (MSE) and (average) bias of the MDPD estimators $\hat{\beta}_\lambda$ with $\lambda\in \{0,0.1,...,1\}$ are given in Figure \ref{RG}. We see that the MSE of $\hat{\beta}_\lambda$ slightly decreases for strongly contaminated ($40\%$) data as $\lambda$ increases. For data with less contamination or no contamination, MSE of $\hat{\beta}_\lambda$ slightly increases as $\lambda$ increases. Meanwhile, the average bias of the MDPD estimators from contaminated data decreases slightly as $\lambda$ increases. This suggests that the MDPD estimators reduce bias for contaminated data, at the cost of increasing variance. This trade-off is more apparent when the model is sensitive to contamination. 

\begin{figure}[h]
    \centering
    \begin{subfigure}[b]{0.49\textwidth}
        \centering
        \includegraphics[width=8cm]{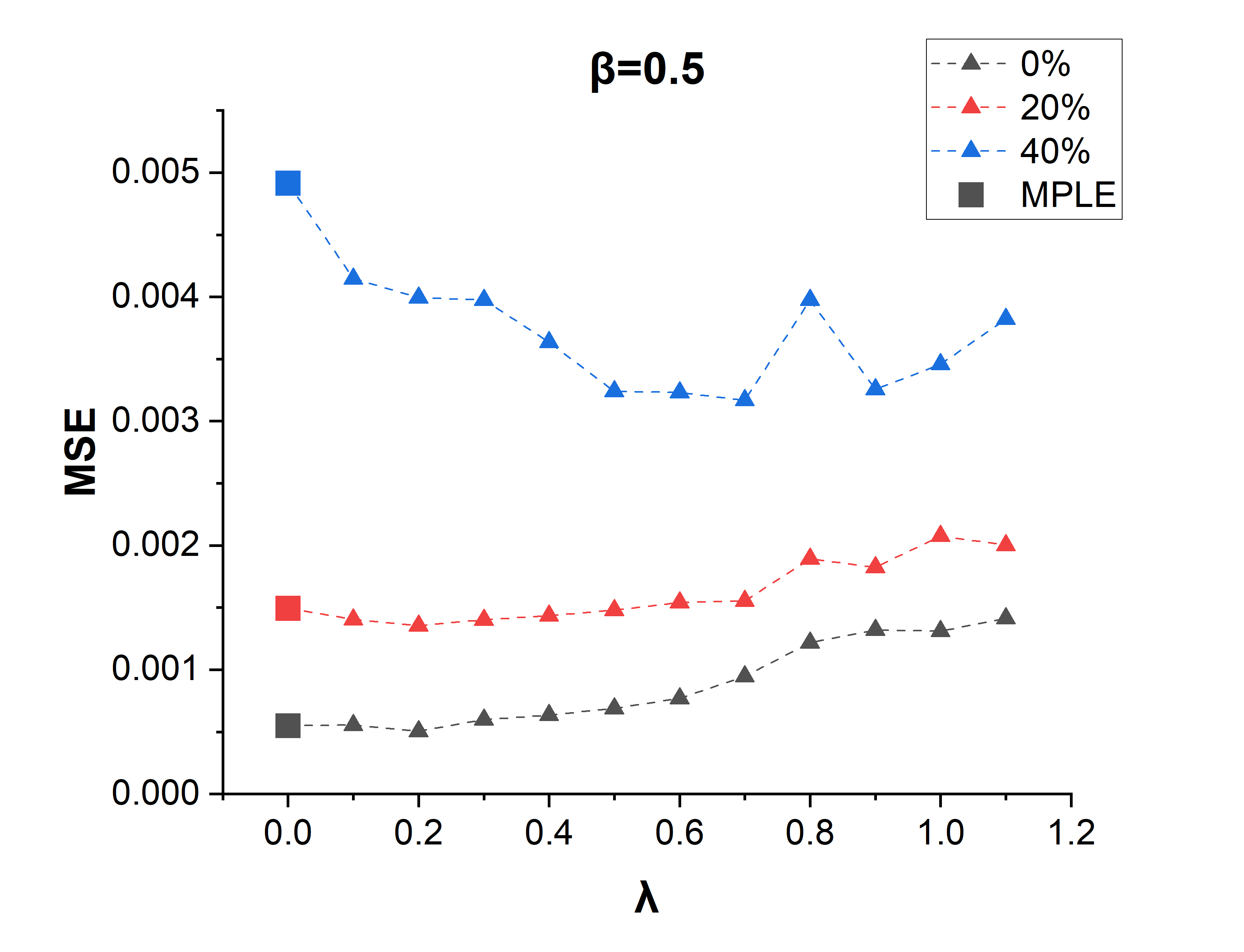}
        \caption[Network2]%
            {{\small MSE}}    
    \end{subfigure}
    \hfill
    \begin{subfigure}[b]{0.49\textwidth}  
        \centering 
        \includegraphics[width=8cm]{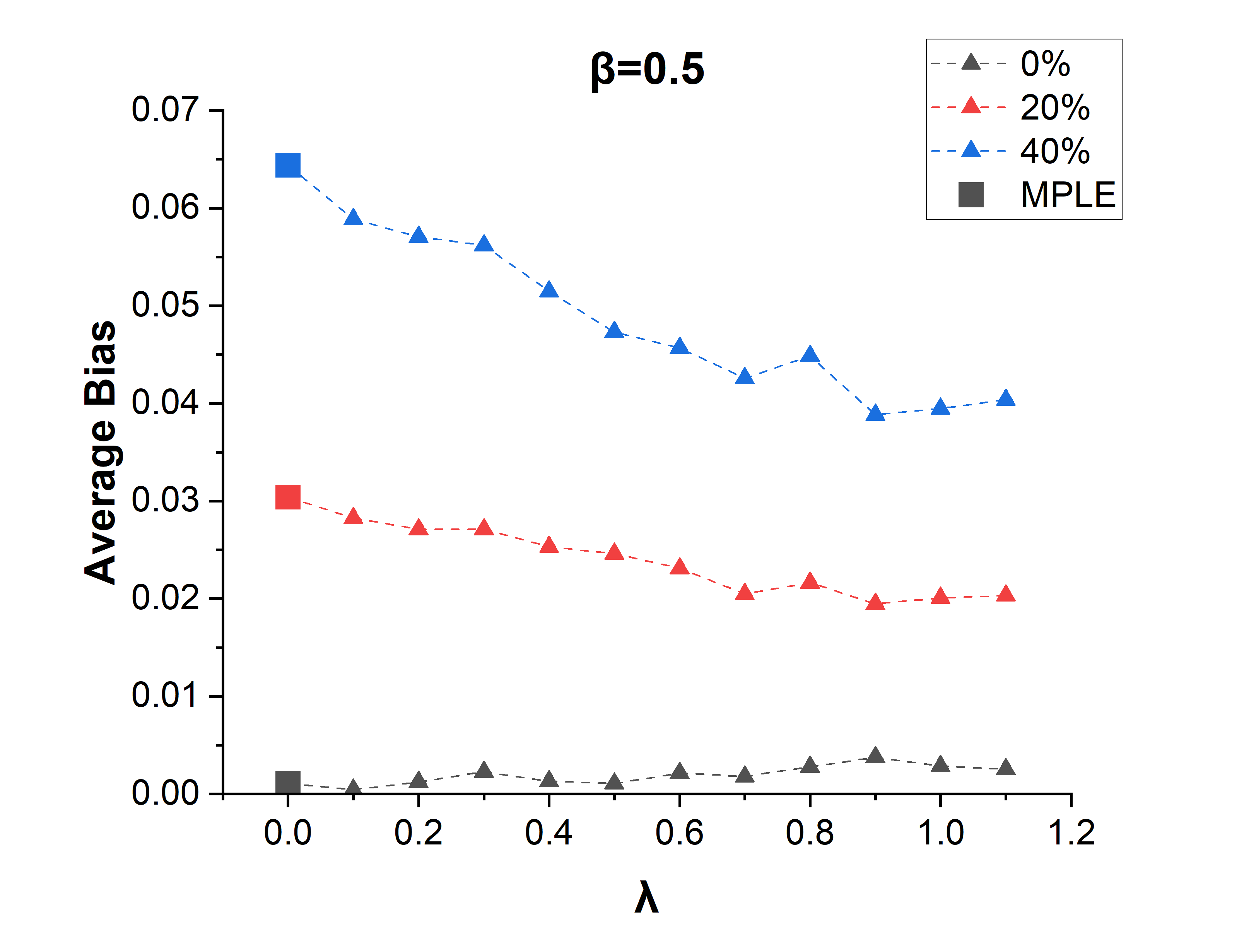}
        \caption[]%
            {{\small Bias}}    
    \end{subfigure}
    \caption{MSE and bias for estimates from 1-D lattice, with $\beta = 0.5$, $N=2000$ and contamination levels set to $0\%$, $20\%$ and $40\%$.}
    \label{RG}
\end{figure}

Similar simulations are performed for Ising models on 2-D lattices with $N=m^2 := 6400$ vertices and $\beta=0.5$. $1000$ samples are generated from this model using a Gibbs sampler, and once again, a certain fraction of entries of each sample are perturbed to $+1$. The (average) MSE and (average) bias of the MDPD estimators $\hat{\beta}_\lambda$ with $\lambda\in \{0,0.1,...,1.1\}$ are given in Figure \ref{GRID}. We observe that for contaminated data, both the MSE and bias of $\hat{\beta}_\lambda$ decrease as $\lambda$ increases, and this trend is more apparent when one goes from 1-D to 2-D lattices.


\begin{figure}[h]
    \centering
    \begin{subfigure}[b]{0.49\textwidth}
        \centering
        \includegraphics[width=8cm]{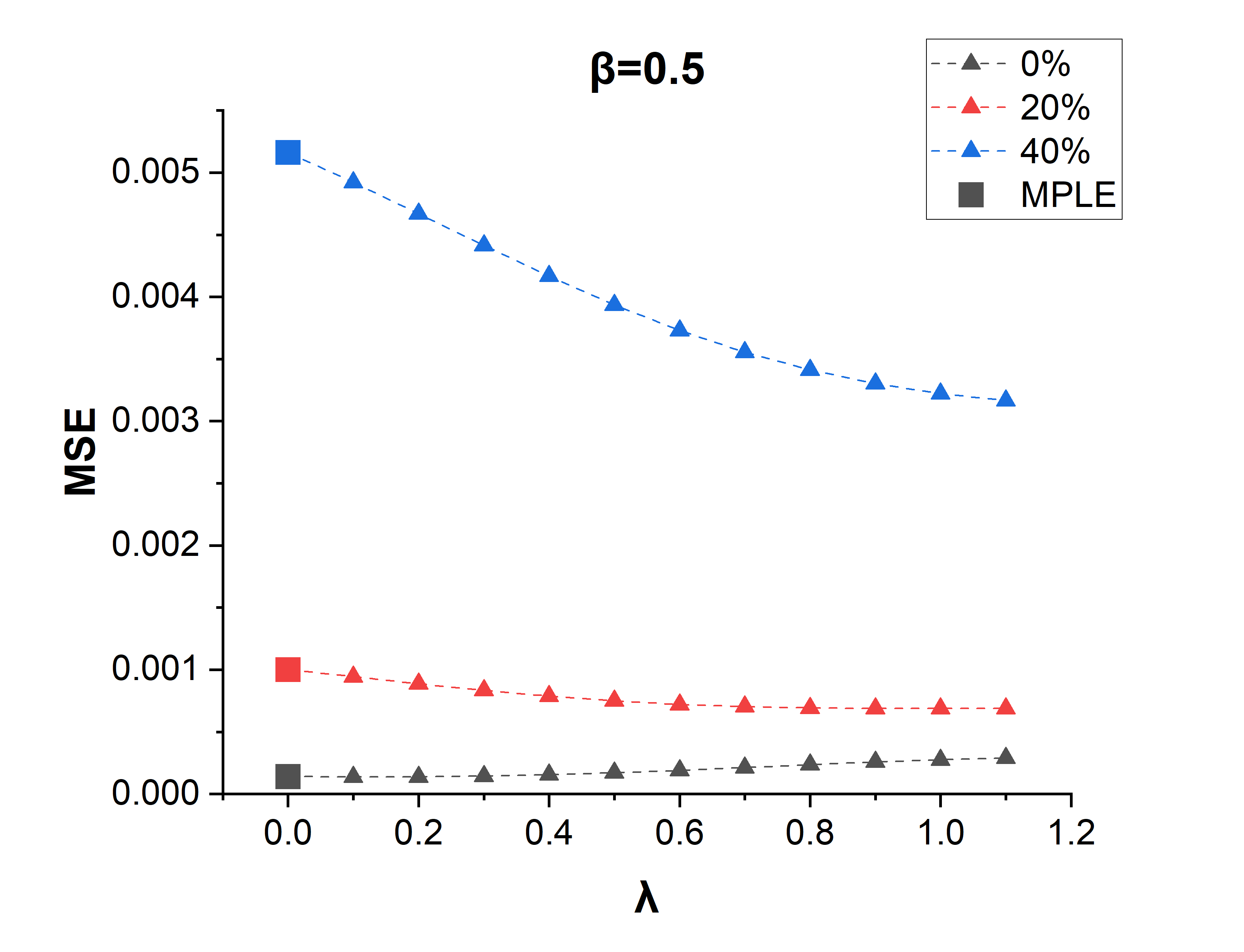}
        \caption[Network2]%
            {{\small MSE}}    
    \end{subfigure}
    \hfill
    \begin{subfigure}[b]{0.49\textwidth}  
        \centering 
        \includegraphics[width=8cm]{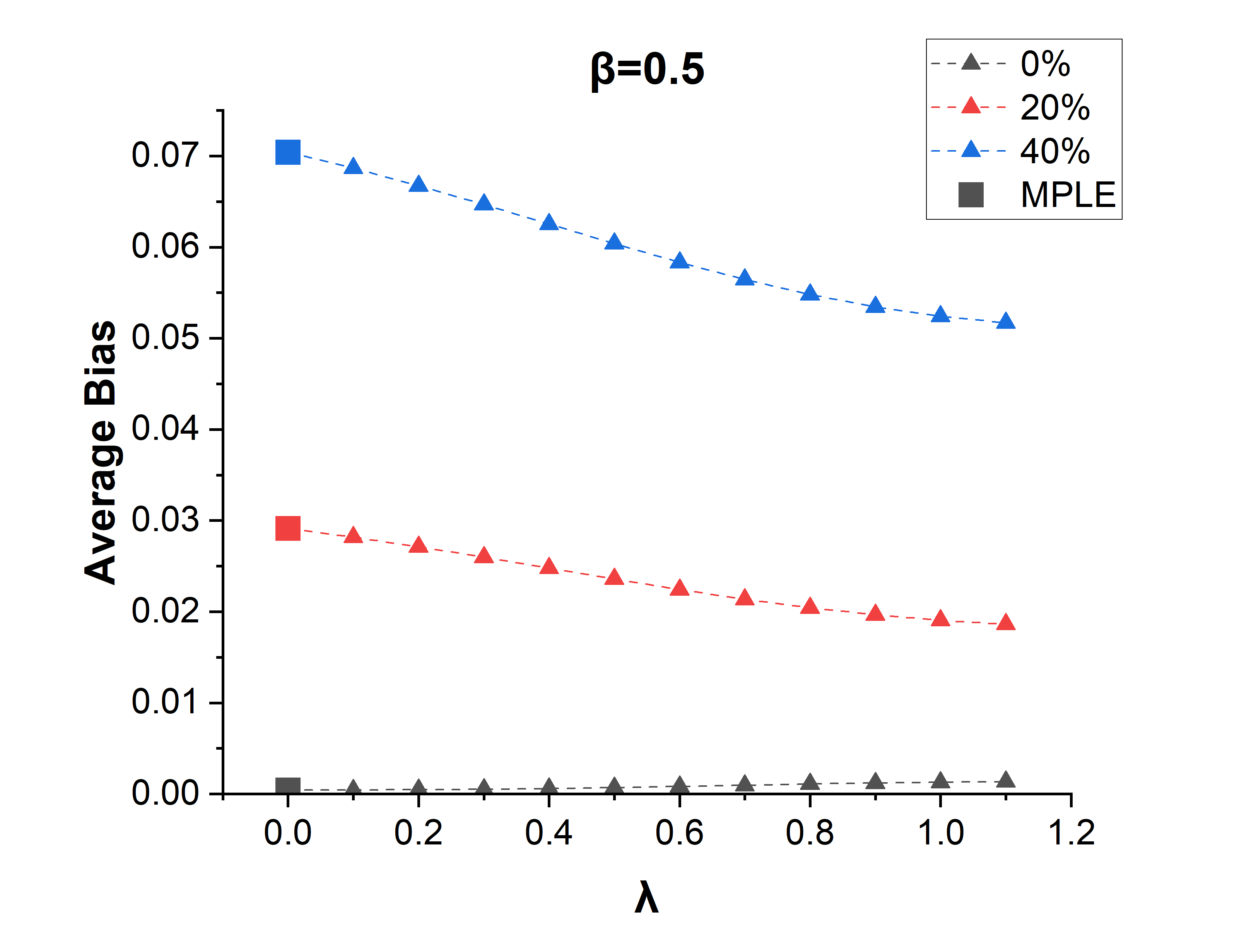}
        \caption[]%
            {{\small Bias}}    
    \end{subfigure}
    \caption{MSE and bias for estimates from 2-D lattice, with $\beta = 0.5$, $N=6400$ and contamination levels set to $0\%$, $20\%$ and $40\%$.}
    \label{GRID}
\end{figure}

\subsection{Ising models on Erd\H{o}s–R\'enyi random graphs}

We now consider the Ising model on sparse Erd\H{o}s–R\'enyi random graphs $G(N,p(N))$ with $N := 2000$. The Hamiltonian of such a model is given by: 
$$
H(\bm x) = \frac{1}{N p(N)}\bm x^\top A(G_N) \bm x,
$$
where $A(G_N)$ is the (random) adjacency matrix of $G(N,p(N))$. 

We start with $p(N):= 5/N$, $\beta:= 0.8$, and vary the parameter $\lambda$ from $0.1$ to $1$ in increments of $0.1$. After generating $1000$ samples from this model, a certain fraction of entries in each sample is flipped (multiplied by $-1$). The (average) MSE and (average) bias of the MDPD estimates are shown in Figure \ref{ERM}. We observe that both the MSE and the bias of $\hat{\beta}_\lambda$ decrease significantly for contaminated data as $\lambda$ increases. It suggests that MDPD estimators perform better then MPLE for contaminated data. For data with no contamination, MSE and bias of $\hat{\beta}_\lambda$ do not vary noticeably.

\begin{figure}[h]
    \centering
    \begin{subfigure}[b]{0.49\textwidth}
        \centering
        \includegraphics[width=8cm]{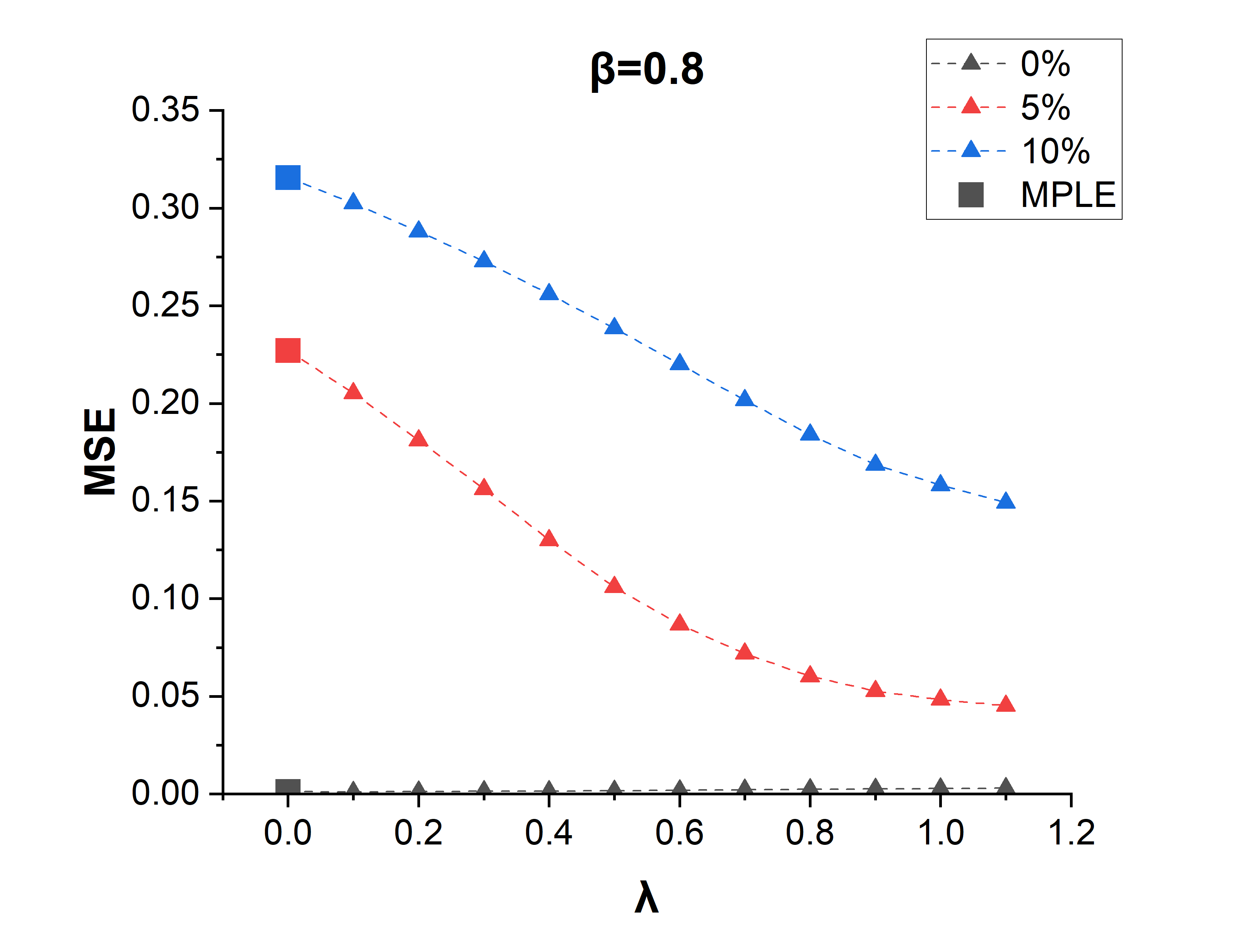}
        \caption[Network2]%
            {{\small MSE}}    
    \end{subfigure}
    \hfill
    \begin{subfigure}[b]{0.49\textwidth}  
        \centering 
        \includegraphics[width=8cm]{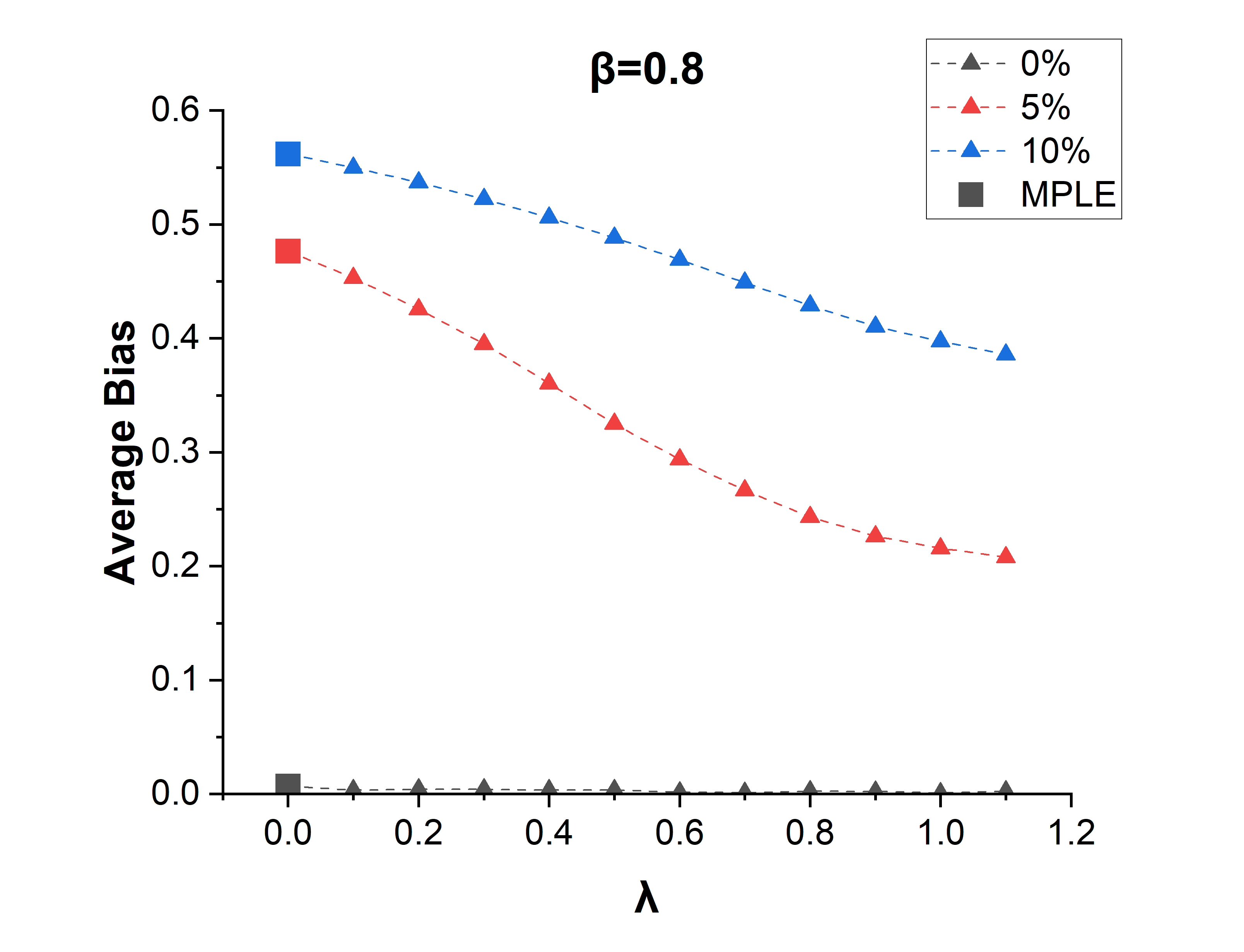}
        \caption[]%
            {{\small Bias}}    
    \end{subfigure}
    \caption{MSE and bias for estimates from Erdős–Rényi random graph with $p=\frac{5}{N}$, $\beta = 0.8$ and $N=2000$.}
    \label{ERM}
\end{figure}

In Figure \ref{ERM1}, we present the MSE and bias of $\hat{\beta}_\lambda$ when the data is simulated from Ising models on $G(N,5/N)$ with different values of $\beta$, and contaminated by $5\%$. Once again, the MSE and bias of $\hat{\beta}_\lambda$ decrease significantly as $\lambda$ increases. Additionally, we note  for every $\lambda$, that the MSE and bias of $\hat{\beta}_\lambda$ are much higher for larger values of $\beta$.

\begin{figure}[h]
    \centering
    \begin{subfigure}[b]{0.49\textwidth}
        \centering
        \includegraphics[width=8cm]{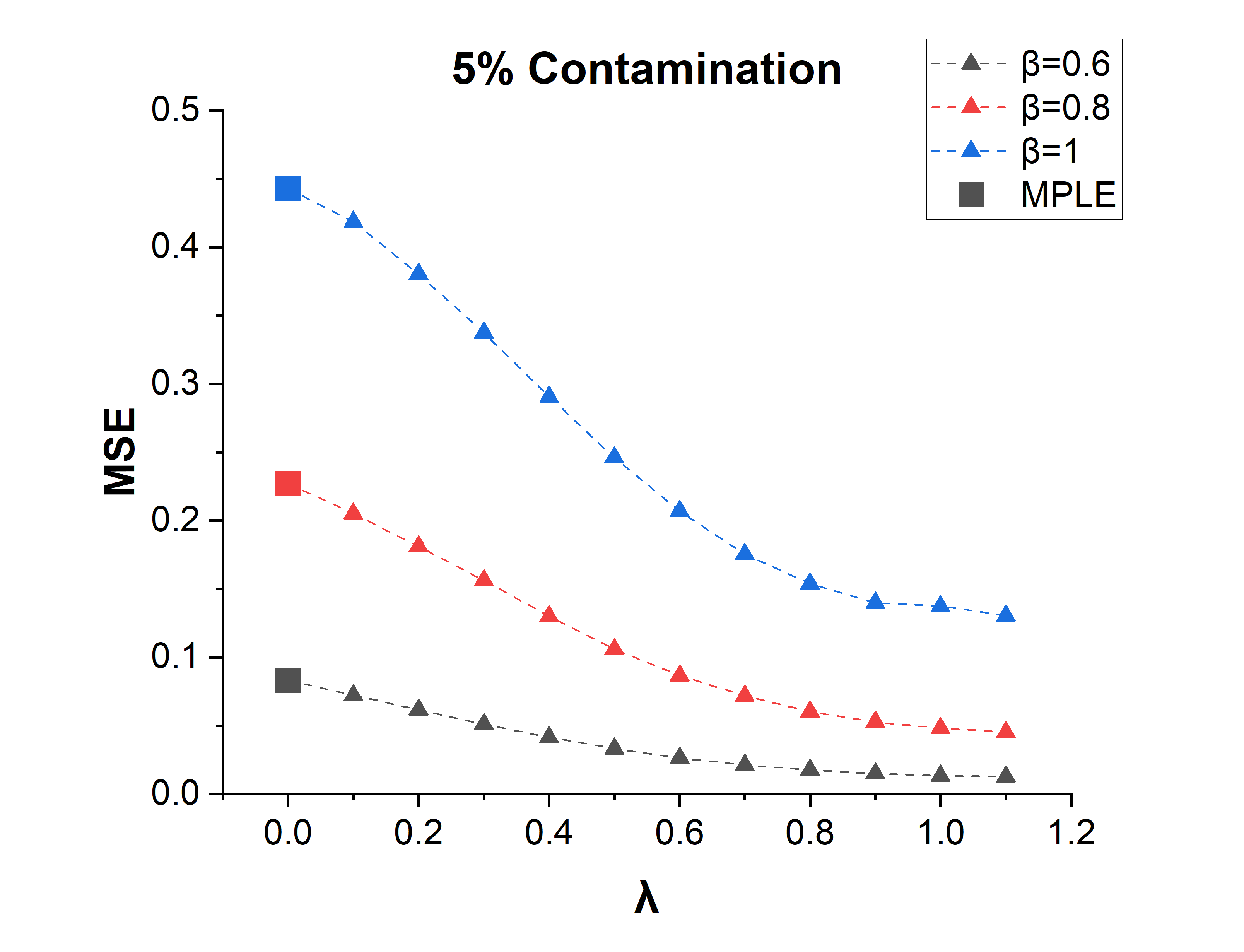}
        \caption[Network2]%
            {{\small MSE}}    
    \end{subfigure}
    \hfill
    \begin{subfigure}[b]{0.49\textwidth}  
        \centering 
        \includegraphics[width=8cm]{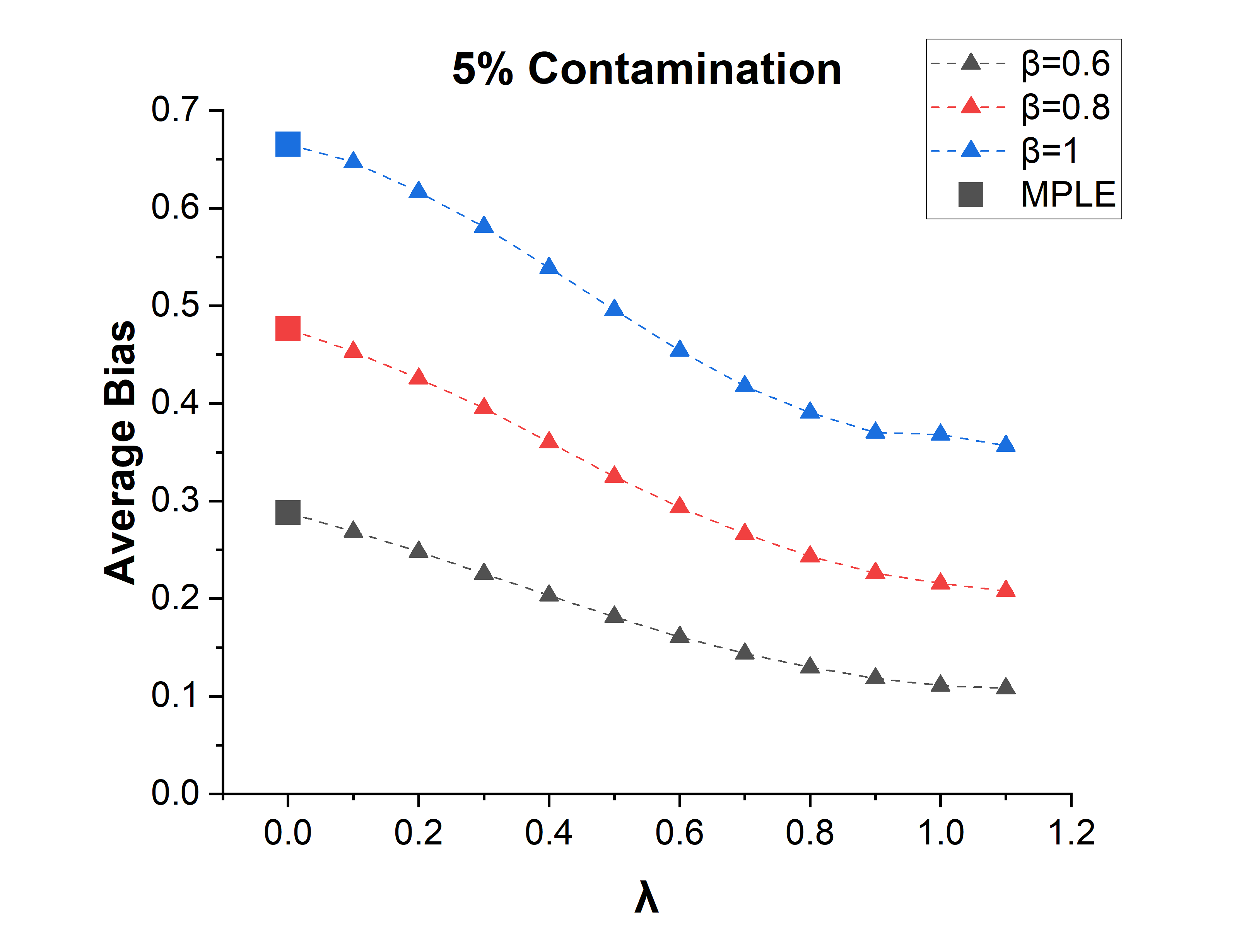}
        \caption[]%
            {{\small Bias}}    
    \end{subfigure}
    \caption{MSE and bias for estimates from Erdős–Rényi random graph with varying $\beta$, $p=\frac{5}{N}$ and $N=2000$.}
    \label{ERM1}
\end{figure}

We also run simulations for Erd\H{o}s–R\'enyi graphs with larger scaling of $p(N)$. In Figure \ref{DERM2}, we take $p :=\frac{\log N}{N}$ and $\beta=1.5>1$ (keeping in mind that the MPLE is $\sqrt{N}$-consistent for $\beta>1$; see Corollary 3.2 in \cite{BM16}). The same decreasing trend for the MSE and bias of the MDPD estimates is observed.

\begin{figure}[h]
    \centering
    \begin{subfigure}[b]{0.49\textwidth}
        \centering
        \includegraphics[width=8cm]{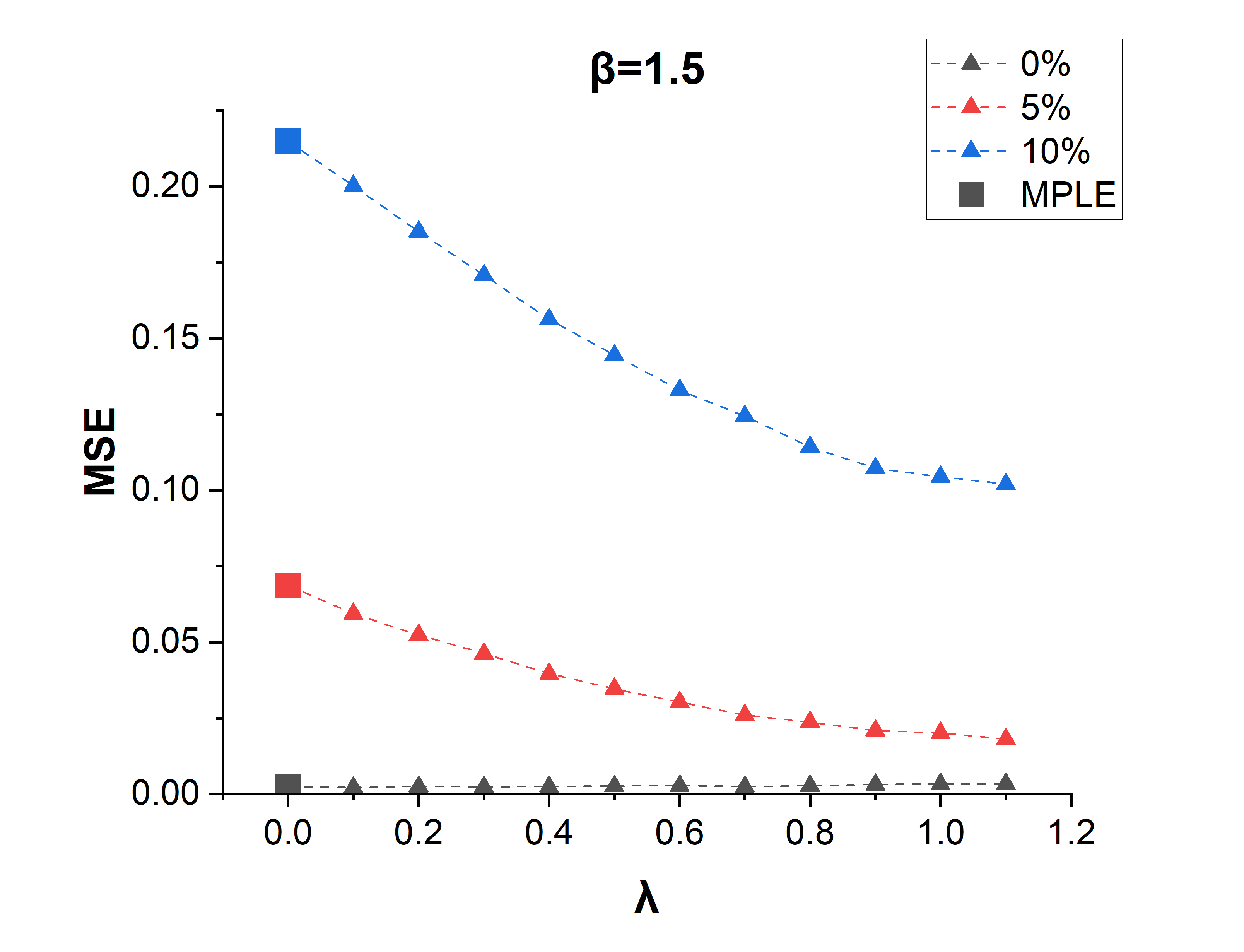}
        \caption[Network2]%
            {{\small MSE}}    
    \end{subfigure}
    \hfill
    \begin{subfigure}[b]{0.49\textwidth}  
        \centering 
        \includegraphics[width=8cm]{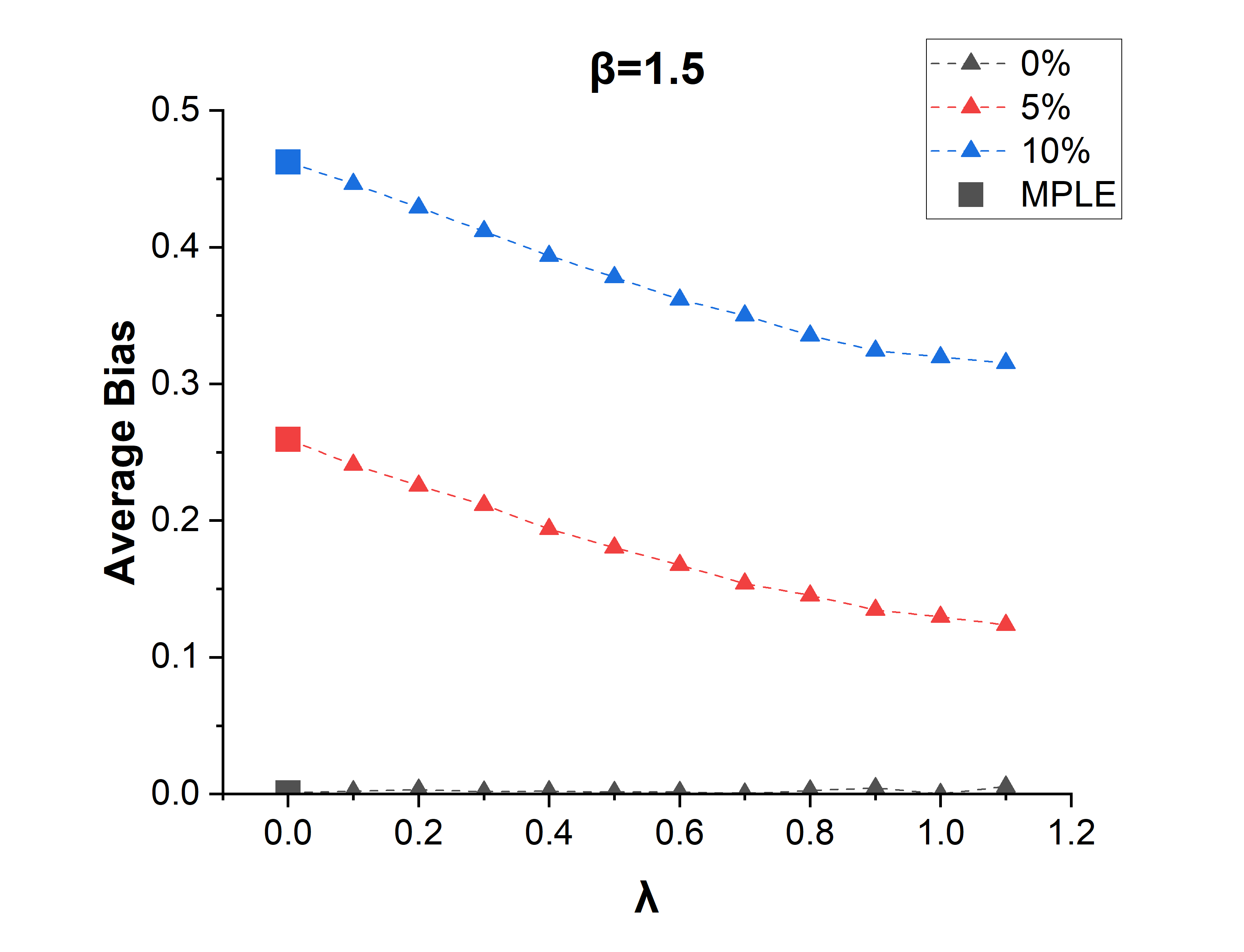}
        \caption[]%
            {{\small Bias}}    
    \end{subfigure}
    \caption{MSE and bias for estimates from Erdős–Rényi graph with $p=\frac{\log N}{N}$,  $\beta = 1.5$ and $N=2000$.}
    \label{DERM2}
\end{figure}

\subsection{Ising models on stochastic block models}
We then conduct simulations from Ising models on stochastic block models (SBM) with $N := 2000$ nodes and two communities of equal size. Within-community connection probabilities are set to $p := \frac{4\log N}{N}$ and the between community connection probability is set to $q:= \frac{\log N}{N}$.
Contamination is introduced to $5\%$ or $10\%$ of the entries of each of the $1000$ samples generated, with the corresponding entries being flipped (multiplied by $-1$). Once again, we observe (in Figure \ref{SBM}) a clear decreasing trend in both the (average) MSE and (average) bias of the MDPD estimates as $\lambda$ increases.



\subsection{The Sherrington–Kirkpatrick model}
 In the
Sherrington–Kirkpatrick model, we have $J_{ij}=N^{-\frac{1}{2}}g_{ij}$, where $(g_{ij})_{1\leq i <j \leq \infty}$ is a fixed realization of a symmetric array of independent standard Gaussian random variables. In this case as well, as mentioned in Example \ref{skmodel_ex}, the MDPD estimators are $\sqrt{N}$-consistent for all $\beta>0$. 

We simulate $1000$ samples from this model with $N:= 2000$ and $\beta := 1$, and contaminate $5\%$ or $10\%$ of each generated sample by flipping the signs of the corresponding entries. As seen in Figure \ref{SKM}, both the (average) MSE and the (average) bias of the MDPD estimates decrease, but this time very slowly, as $\lambda$ increases.

\begin{figure}[h]
    \centering
    \begin{subfigure}[b]{0.49\textwidth}
        \centering
        \includegraphics[width=8cm]{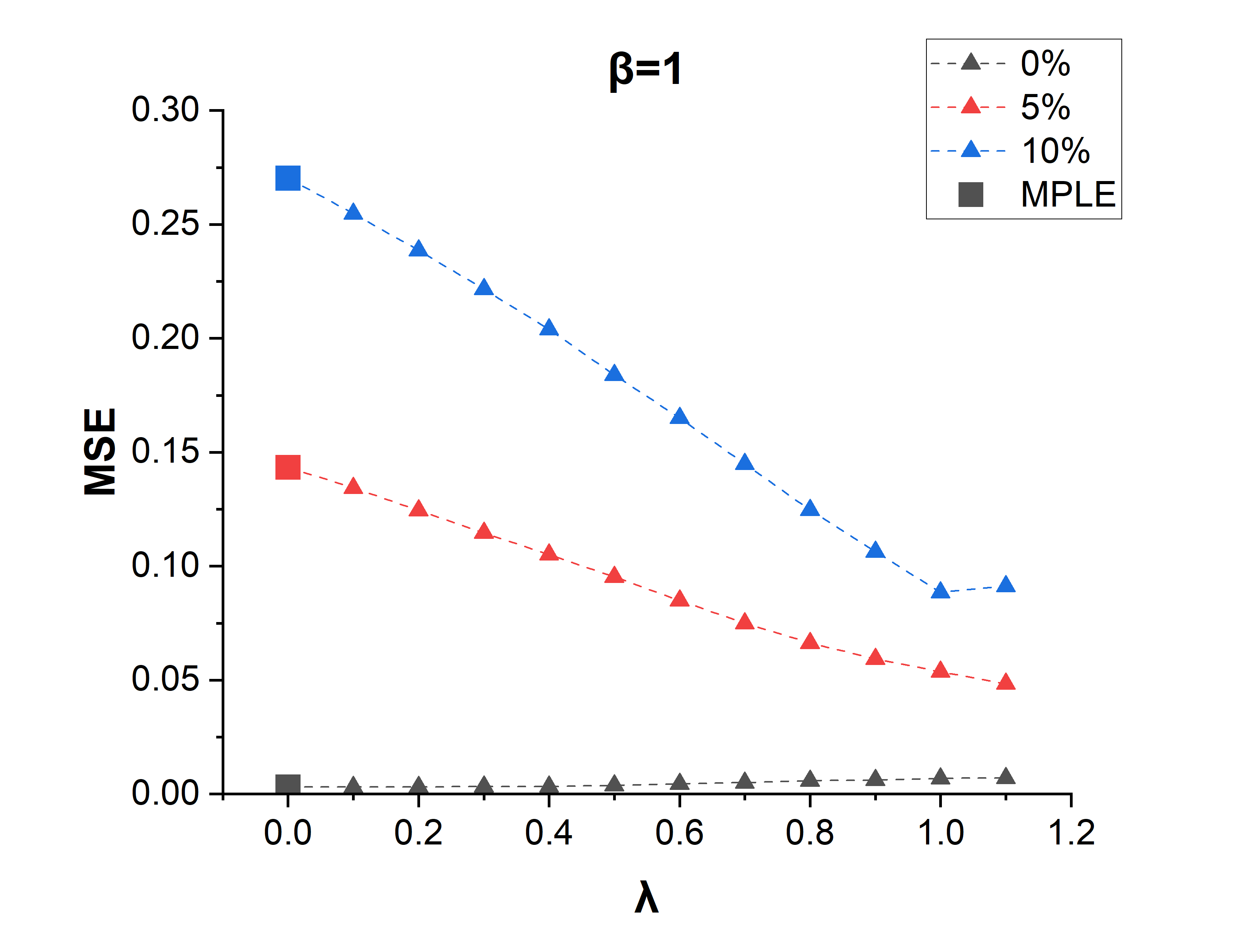}
        \caption[Network2]%
            {{\small MSE}}    
    \end{subfigure}
    \hfill
    \begin{subfigure}[b]{0.49\textwidth}  
        \centering 
        \includegraphics[width=8cm]{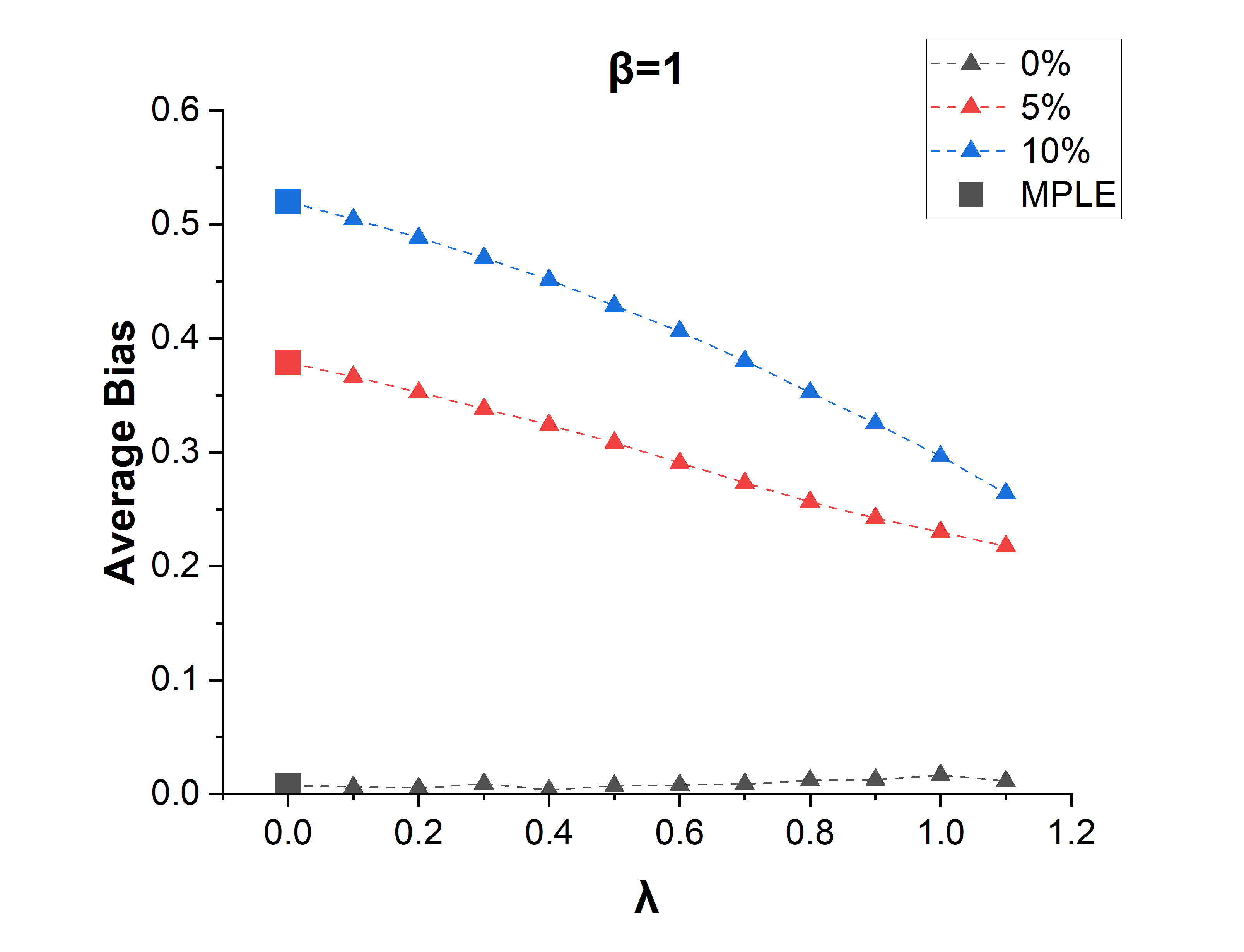}
        \caption[]%
            {{\small Bias}}    
    \end{subfigure}
    \caption{MSE and bias for estimates from the stochastic block model, with $p=\frac{4\log N}{N}$, $q=\frac{\log N}{N}$, $\beta = 1$ and $N=2000$.}
    \label{SBM}
\end{figure}

\begin{figure}[h]
    \centering
    \begin{subfigure}[b]{0.49\textwidth}
        \centering
        \includegraphics[width=8cm]{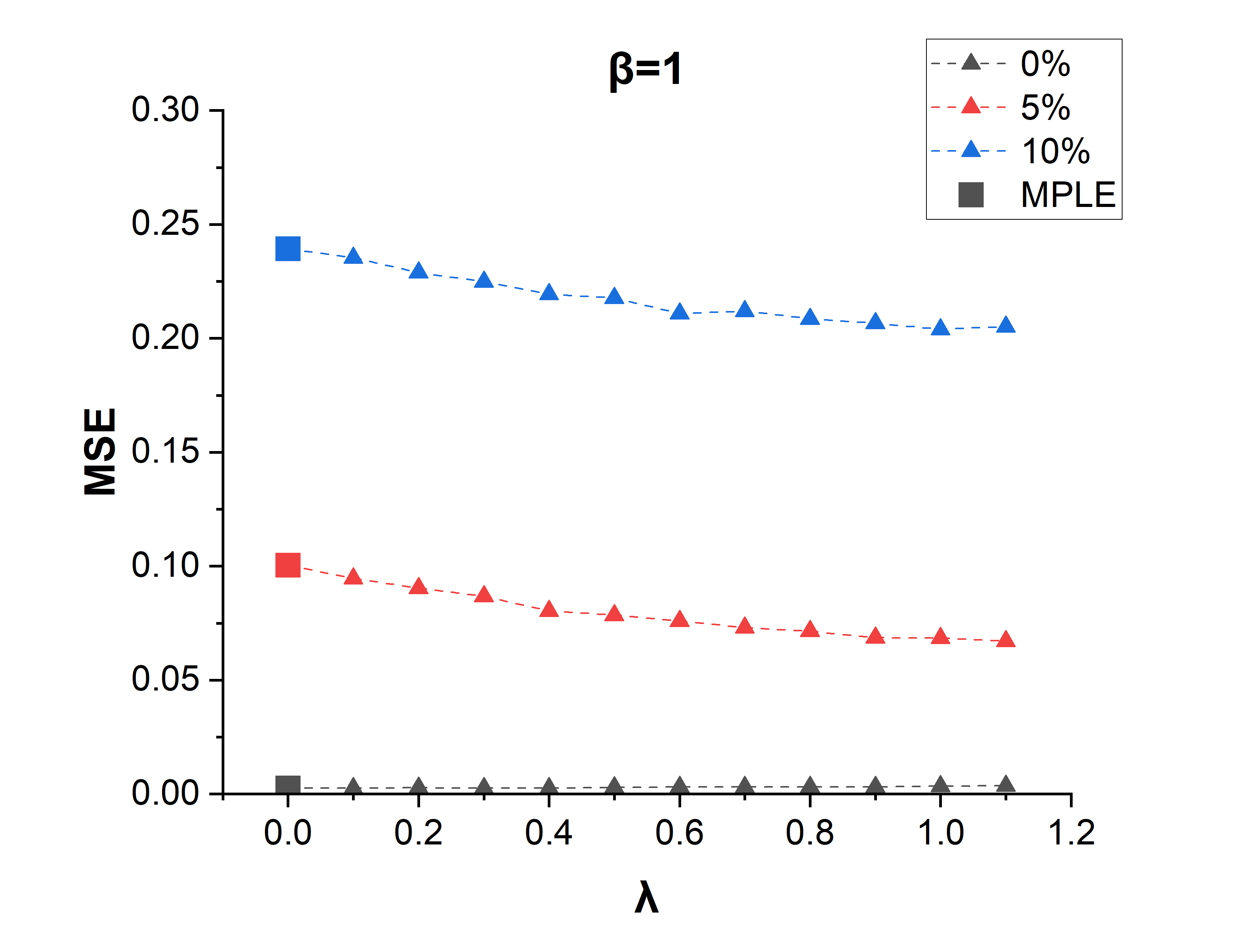}
        \caption[Network2]%
            {{\small MSE}}    
    \end{subfigure}
    \hfill
    \begin{subfigure}[b]{0.49\textwidth}  
        \centering 
        \includegraphics[width=8cm]{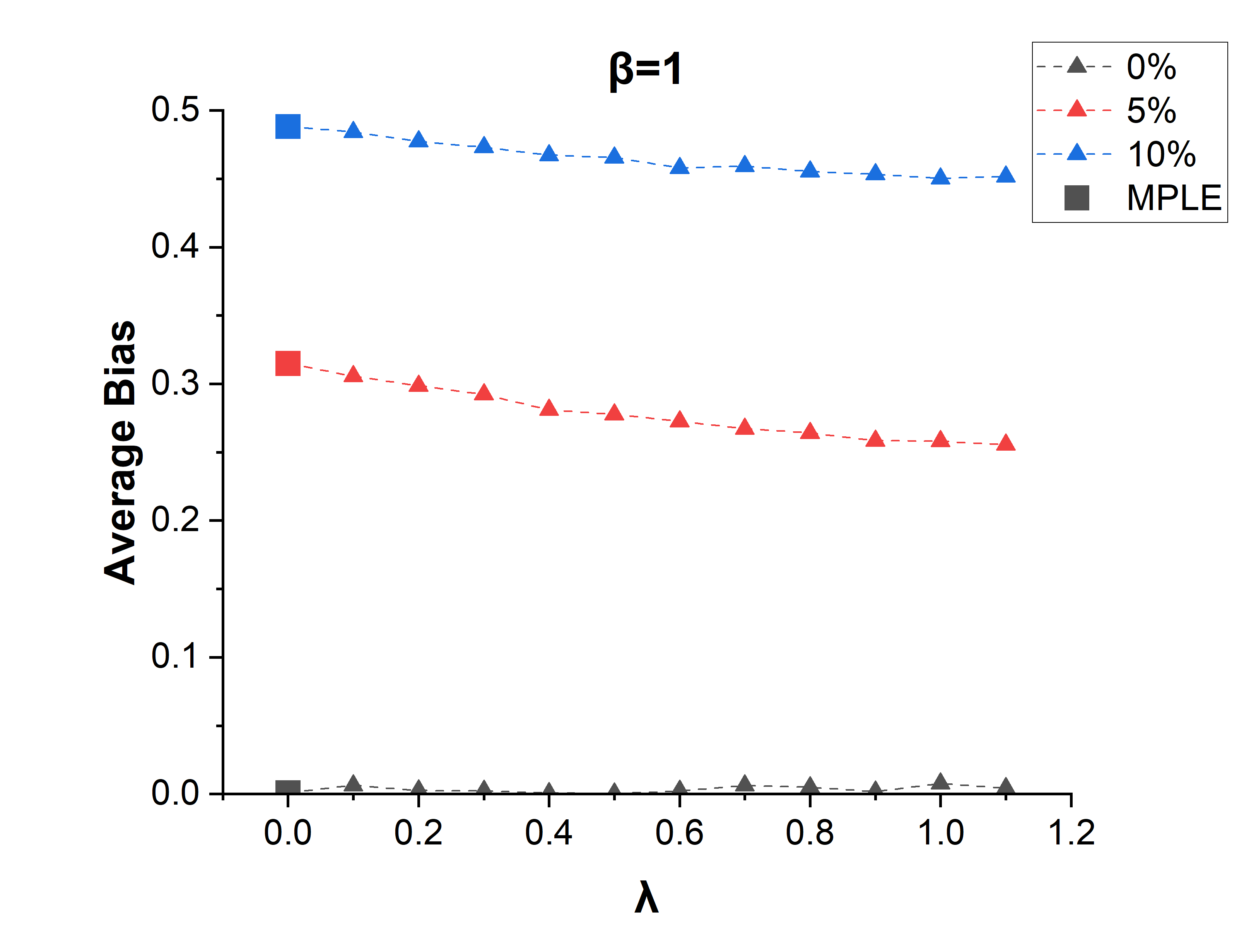}
        \caption[]%
            {{\small Bias}}    
    \end{subfigure}
    \caption{MSE and bias for estimates from the Sherrington Kirkpatrick model with $\beta = 1$, $N=2000$.}
    \label{SKM}
\end{figure}

\subsection{Slightly dense Erd\H{o}s-R\'enyi and stochastic block models}\label{sec:tradeoffn8}

We now consider slightly dense Erd\H{o}s-R\'enyi random graphs and stochastic block models 
which satisfy the conditions of Section \ref{sec:cltasp8}, 
and hence the correspondng MDPD estimators same asymptotic distribution for all $\lambda\geq 0$. 
The following simulation results shows that these MDPD estimators continue to improve the robustness 
under data contamination in spite of being equally efficient as the MPL estimator under the pure model. 
Particularly, Figure \ref{DERM242} and Figure \ref{DERM1} show the results 
for an Ising model on $G(N, N^{-3/4}\log N)$ and $G(N, N^{-1/2})$, respectively, 
while Figure \ref{SBM142} presents the results for a stochastic block model with both within and between group connection probabilities of the order $N^{-1/2}$. 

For Erd\H{o}s-R\'enyi random graphs with $p(N)=\frac{\log N}{N^{3/4}}$, 
there is a clear decreasing trend in the MSE and bias of the MDPD estimators in Figure \ref{DERM242}. 
Additionally, as the scaling of $p(N)$ increases to $\frac{1}{\sqrt{N}}$ in Figure \ref{DERM1}, the trend is still visible. For Ising model on stochastic block model in Figure \ref{SBM142} as well, 
there is a noticeable decreasing trend in the MSE and bias of the MDPD estimators 
under both $5\%$ and $10\%$ contamination.

\begin{figure}[h]
    \centering
    \begin{subfigure}[b]{0.49\textwidth}
        \centering
        \includegraphics[width=8cm]{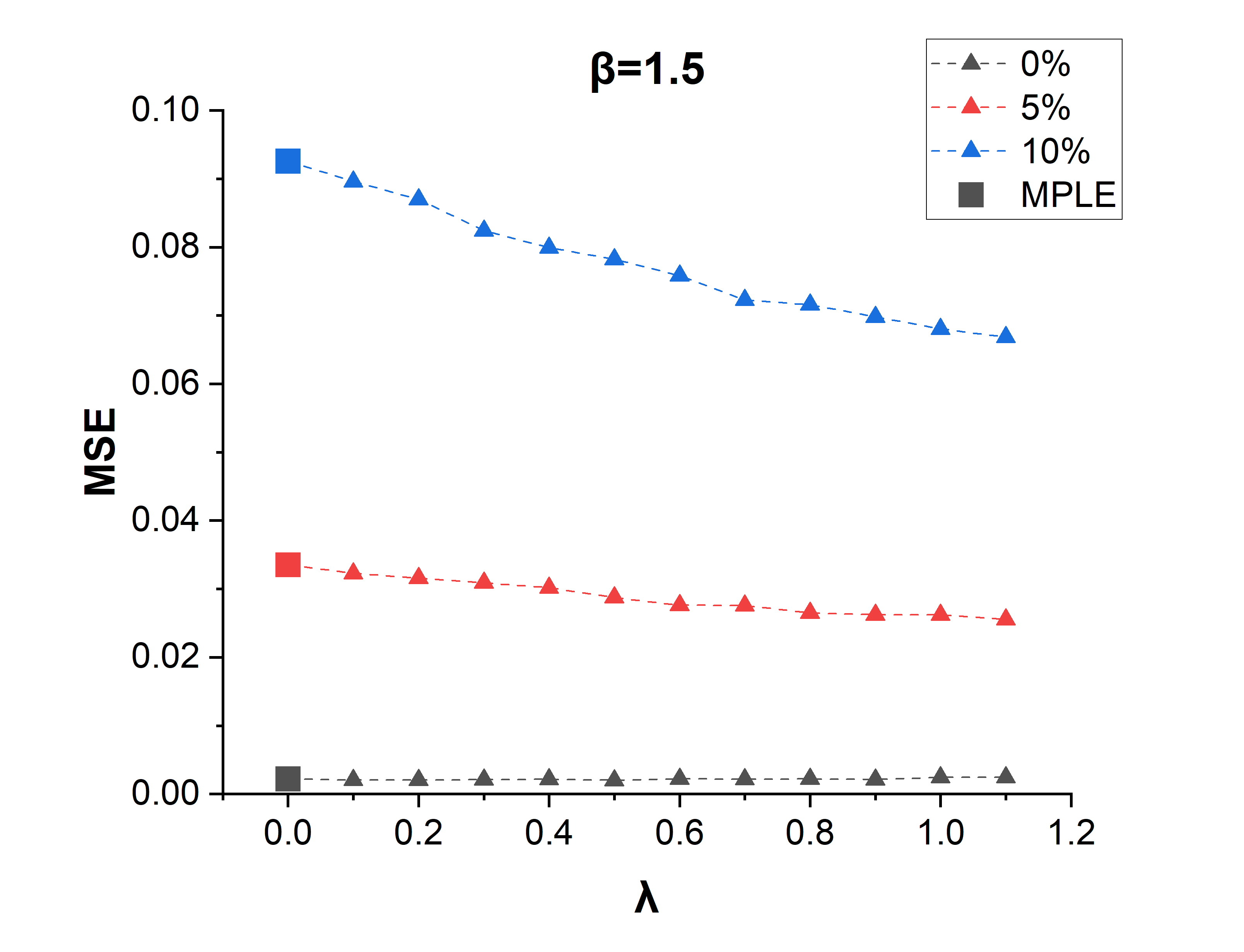}
        \caption[Network2]%
            {{\small MSE}}    
    \end{subfigure}
    \hfill
    \begin{subfigure}[b]{0.49\textwidth}  
        \centering 
        \includegraphics[width=8cm]{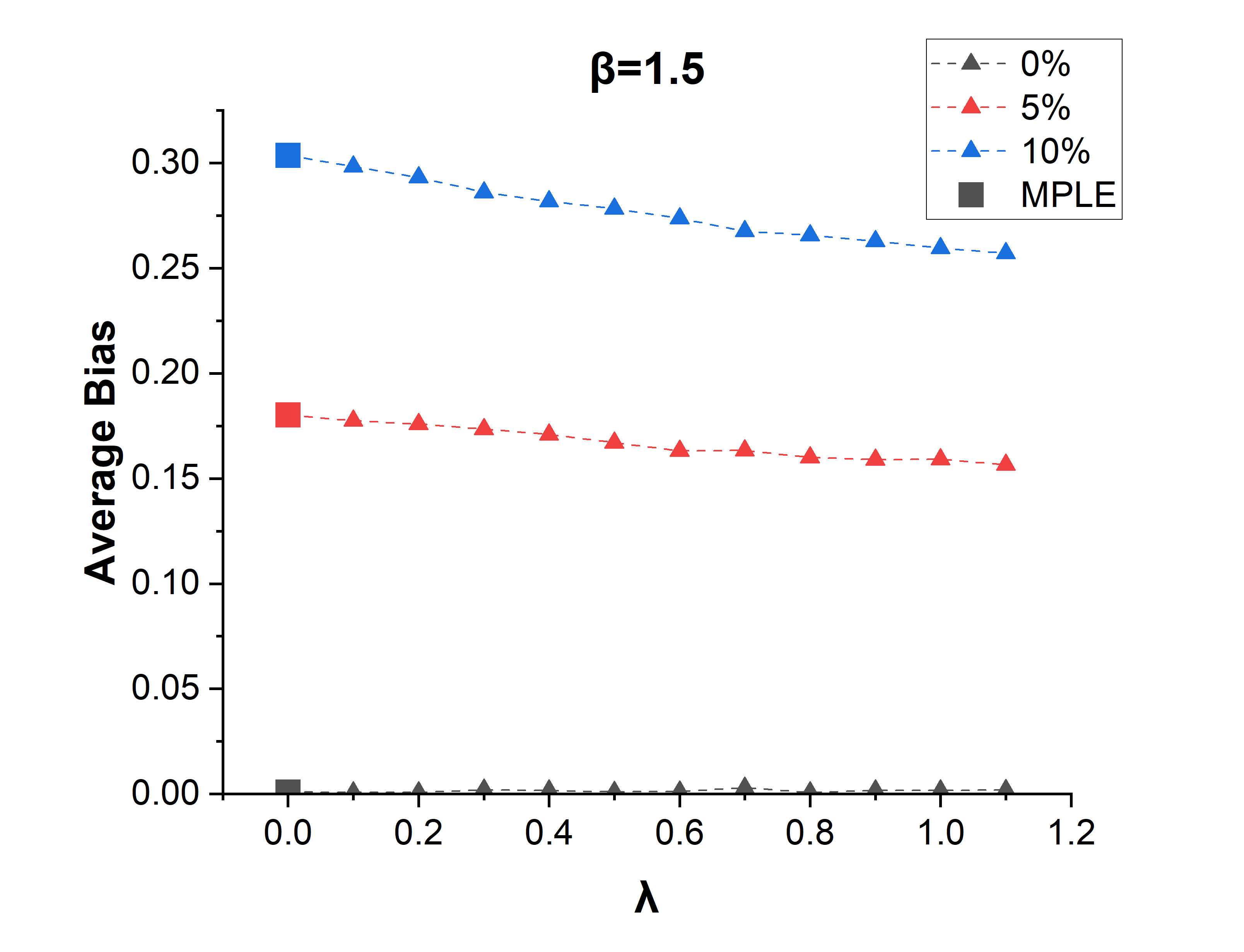}
        \caption[]%
            {{\small Bias}}    
    \end{subfigure}
    \caption{MSE and bias for estimates from Erdős–Rényi graph with $p=\frac{\log N}{N^{3/4}}$,  $\beta = 1.5$ and $N=2000$.}
    \label{DERM242}
\end{figure}

\begin{figure}[h]
    \centering
    \begin{subfigure}[b]{0.49\textwidth}
        \centering
        \includegraphics[width=8cm]{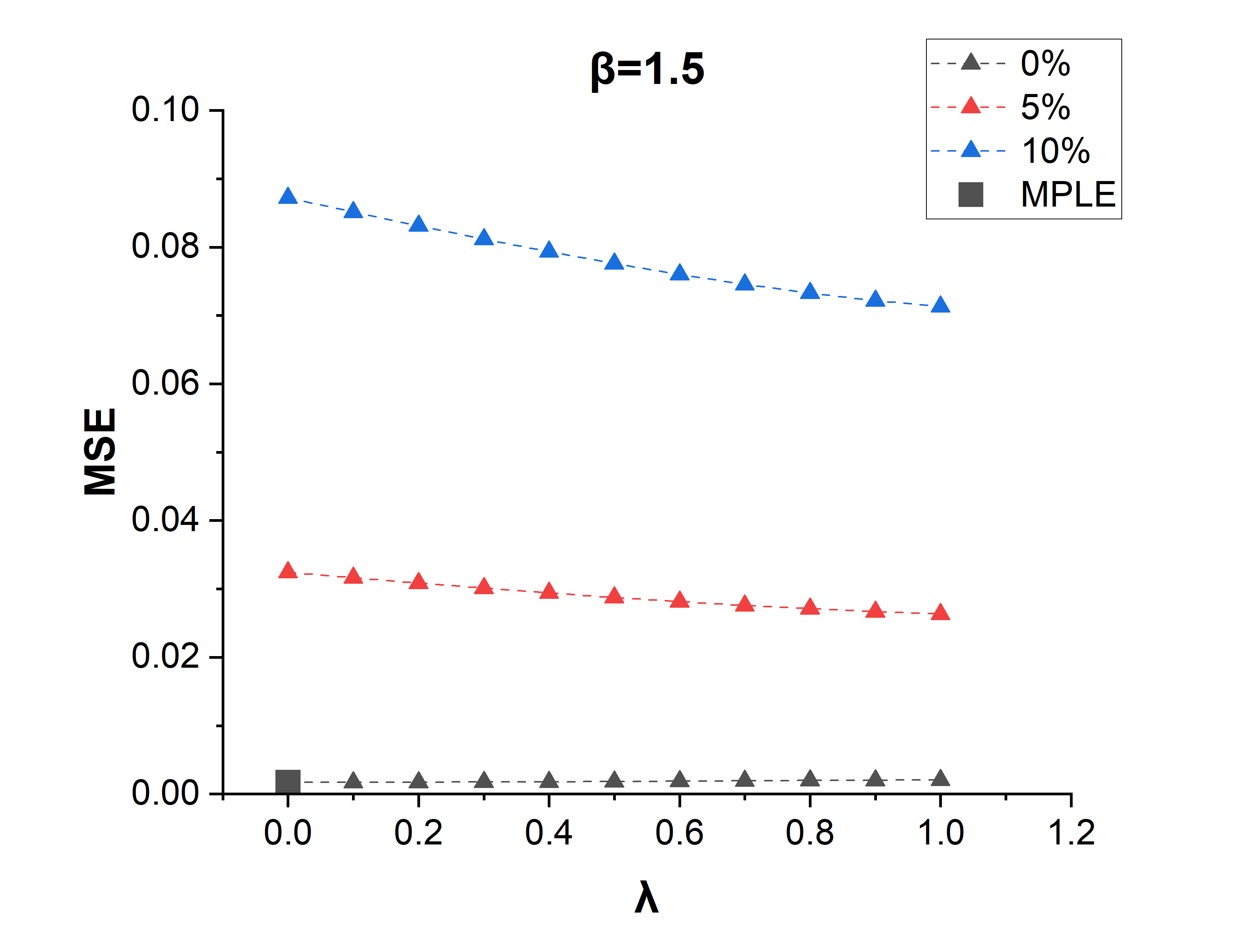}
        \caption[Network2]%
            {{\small MSE}}    
    \end{subfigure}
    \hfill
    \begin{subfigure}[b]{0.49\textwidth}  
        \centering 
        \includegraphics[width=8cm]{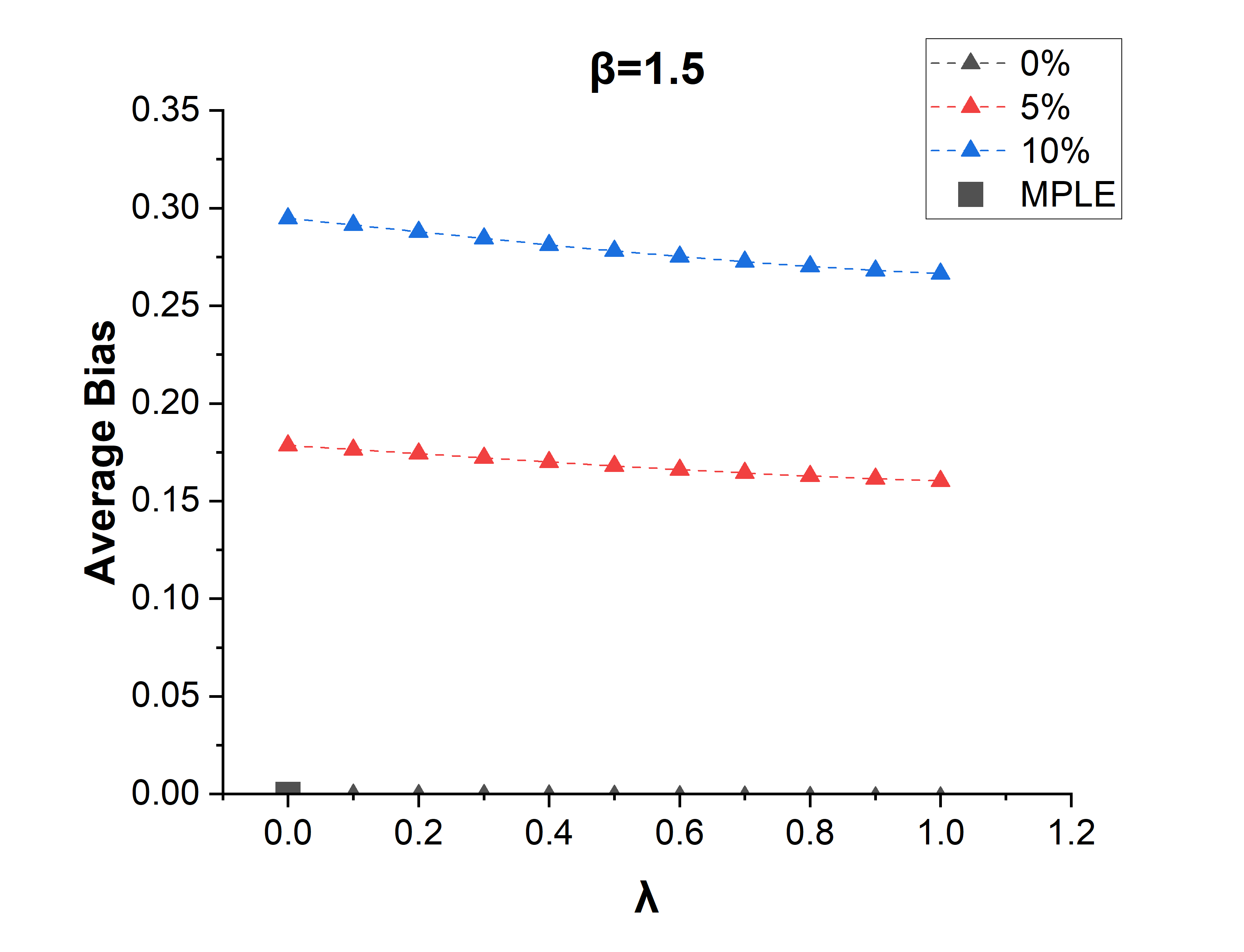}
        \caption[]%
            {{\small Bias}}    
    \end{subfigure}
    \caption{MSE and bias for estimates from Erd\H{o}s–R\'enyi graph with $p=\frac{1}{\sqrt{N}}$, $\beta = 1.5$ and $N=2000$.}
    \label{DERM1}
\end{figure}

\begin{figure}[h]
    \centering
    \begin{subfigure}[b]{0.49\textwidth}
        \centering
        \includegraphics[width=8cm]{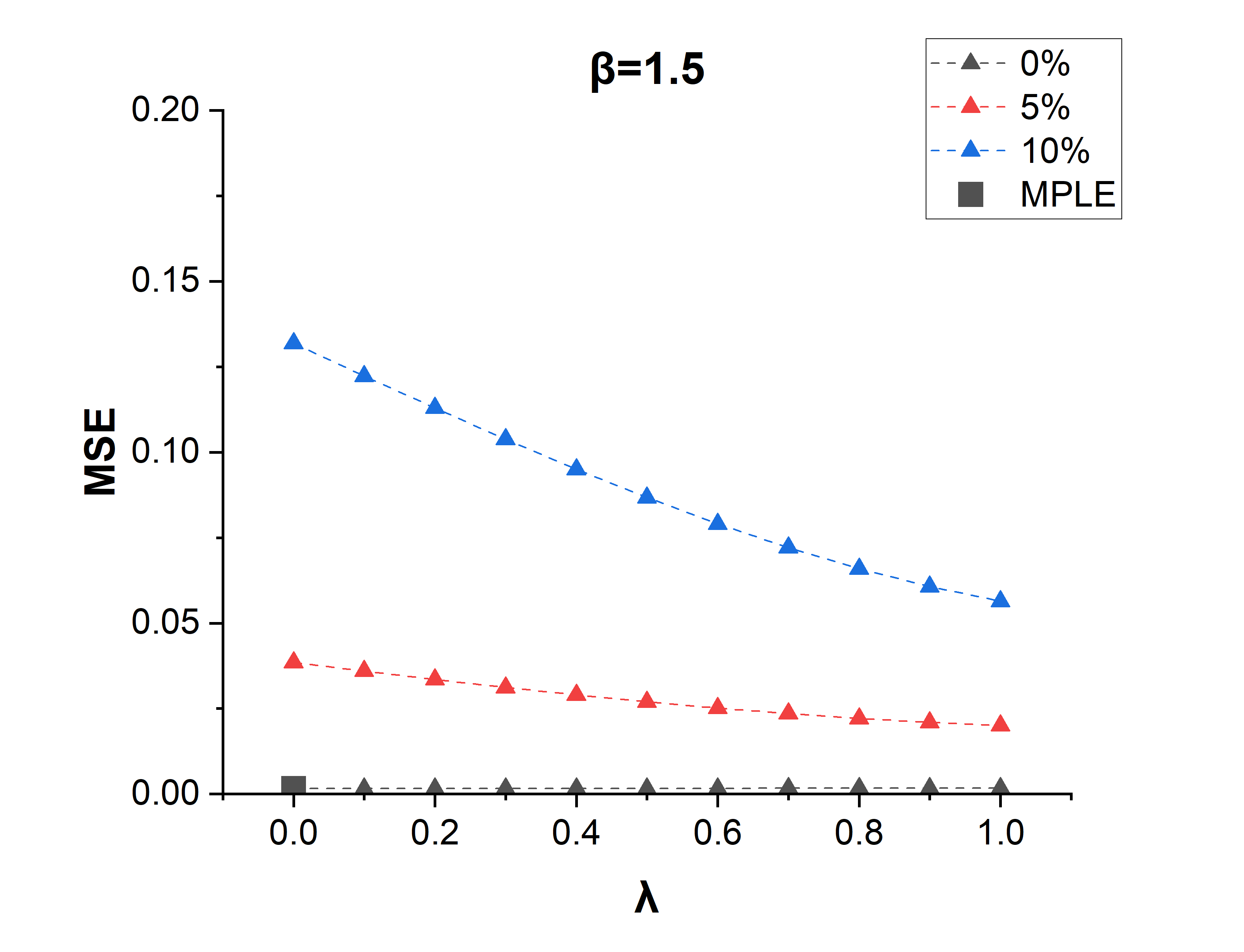}
        \caption[Network2]%
            {{\small MSE}}    
    \end{subfigure}
    \hfill
    \begin{subfigure}[b]{0.49\textwidth}  
        \centering 
        \includegraphics[width=8cm]{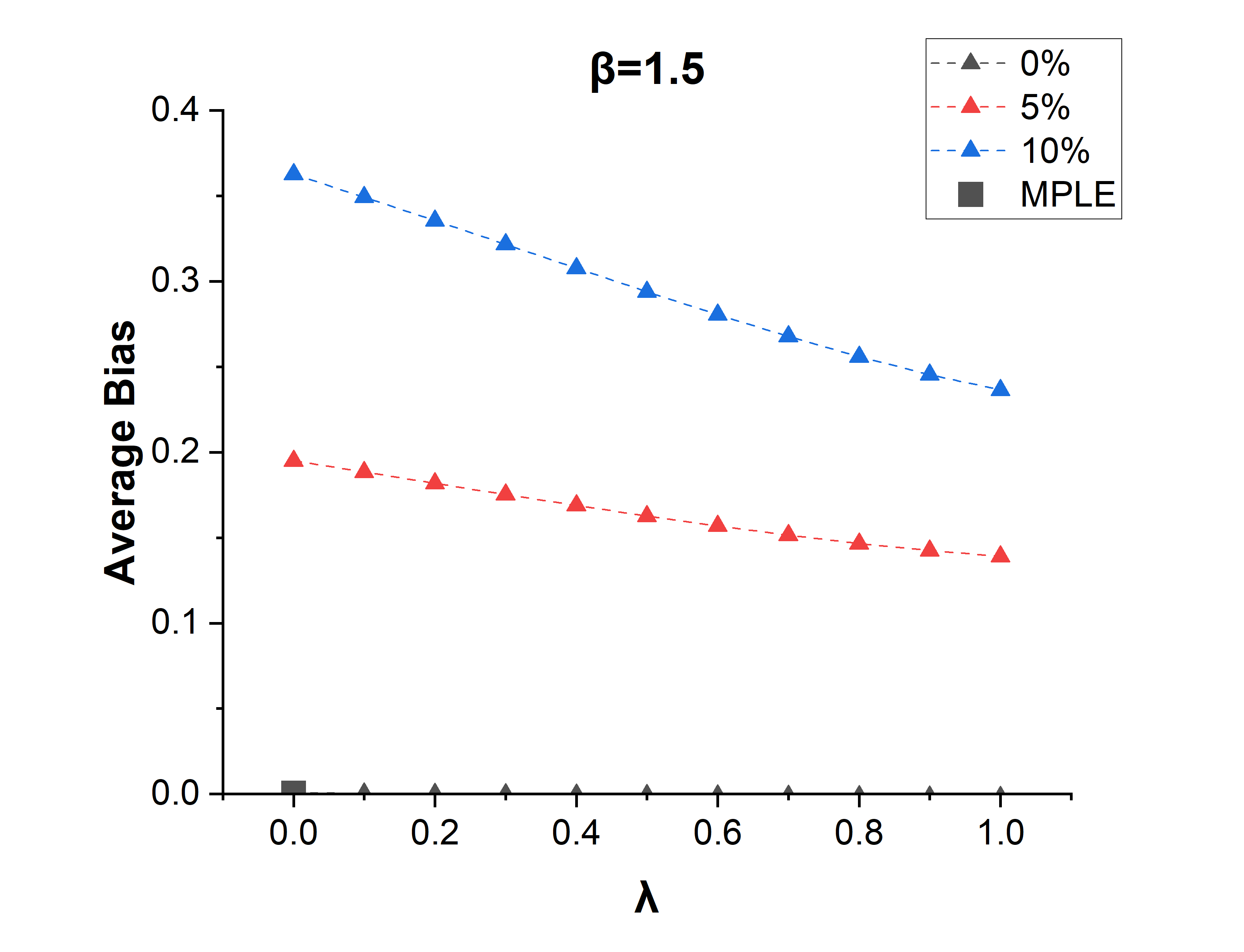}
        \caption[]%
            {{\small Bias}}    
    \end{subfigure}
    \caption{MSE and bias for estimates from the stochastic block model, with $p=\frac{4}{\sqrt{N}}$, $q=\frac{1}{\sqrt{N}}$, $\beta = 1.5$ and $N=2000$.}
    \label{SBM142}
\end{figure}

\section{Real Data Analysis}\label{sec:real data}
In this section, we demonstrate the applicability of the MDPD estimators on some representative datasets from the domains of social networks, neurobiology, and genomics by examining the average leave-one-out 
prediction accuracy obtained based on the MDPD estimators at various $\lambda\geq 0$.
For all these real data examples, the actual (mean) MDPD estimates are also summarized together in Table \ref{tab:beta}.

\subsection{Application in Social Networks Datasets}

In this section, two social network datasets from the Stanford Large Network Dataset Collection (SNAP) \cite{snapnets} are analyzed. The first dataset is a social network of GitHub developers collected from the public API in June 2019 \cite{github}. The vertices of the network are developers who have starred at least 10 repositories, and the edges represent mutual follower relationships between them. The network consists of 37,700 nodes and 289,003 edges. The vertices are assigned binary values depending on whether the GitHub user is a web or machine learning developer, the assignment being derived from the job title of each user.

Another investigated data set is a social network of Deezer users collected from the public API in March 2020 \cite{feather}. The vertices of the network are Deezer users from European countries, and the edges represent mutual follower relationships between them. The network contains 28,281 nodes and 92,752 edges. The vertices represent genders of the users, derived from the name field for each user.

Both datasets are subject to data contamination, particularly label flipping, due to independent failures in correctly labeling the nodal (binary) values. For example, since many job titles may not reflect whether the user is a web or a machine learning developer, the vertices may be flipped with unknown probability in the GitHub network data. However, as Deezer requires its users to select a gender upon account creation, it is expected that the Deezer dataset has fewer independent labeling failures than the GitHub dataset. 

We compare the performance of MDPD estimators and the MPL estimator based on the following \textit{leave-one-out} algorithm. For each vertex $i$, let $\bm J_{-i}$ denote the submatrix of $\bm J$ obtained by deleting the $i^\mathrm{th}$ row and $i^\mathrm{th}$ column of $\bm J$. We then fit an Ising model with the interaction matrix $\bm J_{-i}$ and observation $\bm X_{-i}:=\{X_1,...,X_{i-1},X_{i+1},...X_{N}\}$, and obtain the MDPD estimates $\hat{\beta}_\lambda$ for each $\lambda \in \{0,0.1,\ldots,1\}$.
This is followed by predicting the spin at vertex $i$ using the estimated conditional probability $\p_{\hat{\beta}_\lambda}(X_i|\bm X_{-i})$, repeated across all the vertices $i$, which are plotted in \autoref{fig:network}.

\begin{figure}[!h]
    \centering
    \begin{subfigure}[b]{0.49\textwidth}
        \centering
        \includegraphics[width=1\linewidth]{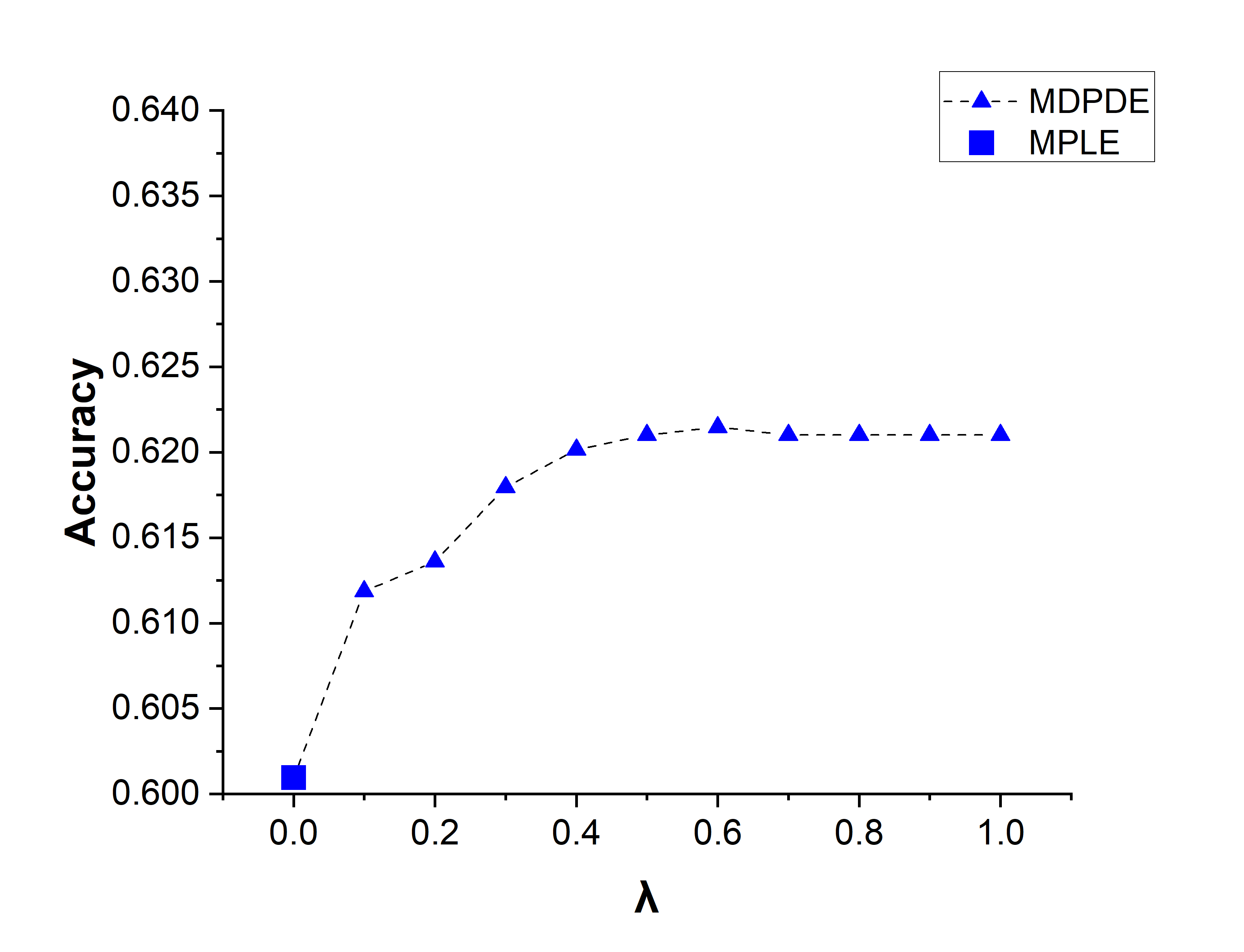}
        \caption{GitHub}
    \end{subfigure}
    \hfill
    \begin{subfigure}[b]{0.49\textwidth}  
        \centering
        \includegraphics[width=1\linewidth]{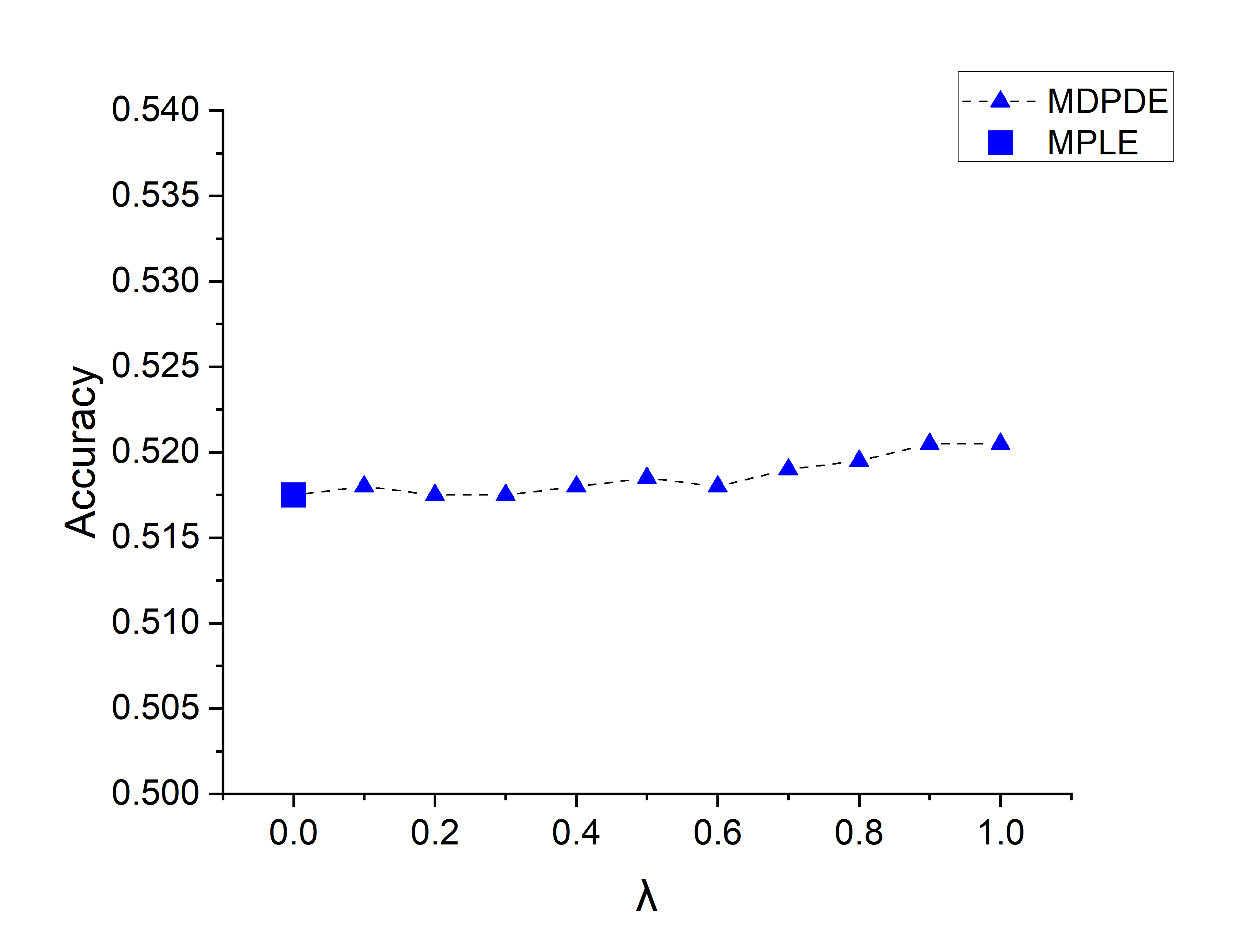}
        \caption{Deezer}
    \end{subfigure}
    \caption{Average prediction accuracy obtained from the social network datasets.}
    \label{fig:network}
\end{figure}

Although there is no significant change in the estimates of $\beta$ for the datasets, \autoref{fig:network} shows significant improvement in the prediction accuracy as $\lambda$ increases particularly for the GitHub dataset. 
In this dataset, the MDPD estimates $\hat{\beta}_{\lambda}$ with $\lambda\geq 0.5$ are performing significantly better in prediction than the estimates with smaller $\lambda$ including the MPL estimate (at $\lambda=0$). 
This, in turn, indicates the presence of contamination in the data, as expected,
and the advantages of the proposed MDPD estimates in such cases. 
For the Deezer dataset, on the other hand, the improvement in the prediction accuracy across any range of $\lambda$ is not very significant. This is expected, because the requirement of Deezer to input genders of new users while registration imposes a natural control on independent failures. However, in such cases as well, the proposed MDPD estimates at any $\lambda>0$ perform no worse, and sometimes a little better, than the MPL estimate. 

\subsection{Applications in Neurobiological Datasets}

We also evaluate the performance of the MDPD estimates on a Neurobiological data acquired through electrophysiological recordings from the Visual Coding Neuropixels dataset of the Allen Brain Observatory \cite{de2020large,allen}. The dataset,  introduced in \cite{rahulsom},  recorded spike trains of 555 neurons in a male mouse aged 116 days (session ID 791319847) via six neuropixel probes, when the mouse was presented with four stimulus categories: natural scenes, static gratings, gabor patches, and flashes. The spike trains are converted into binary data based on the status of the neurons and assigned $+1$ if the neurons are active. Each time the visual stimulus is presented, we record one sample from the spike trains, with 50 samples in total. Since we do not know the true anatomical interaction structure of the neurons, we apply the Peter-Clark (PC) algorithm \cite{glymour2019review, rahulsom}, a popular causal discovery method that infers the structure of a causal directed acyclic graph (DAG) from observational data. The PC algorithm is shown to be consistent based on i.i.d. samples in \cite{pc}, and for time-series data satisfying mild mixing assumptions in \cite{rahulsom}. The algorithm's output is a completed partially directed acyclic graph (CPDAG),
which is then transformed into an undirected graph and used as the underlying network in our analysis.

\begin{figure}[!h]
    \centering
    \begin{subfigure}[b]{0.49\textwidth}
        \centering
        \includegraphics[width=1\linewidth]{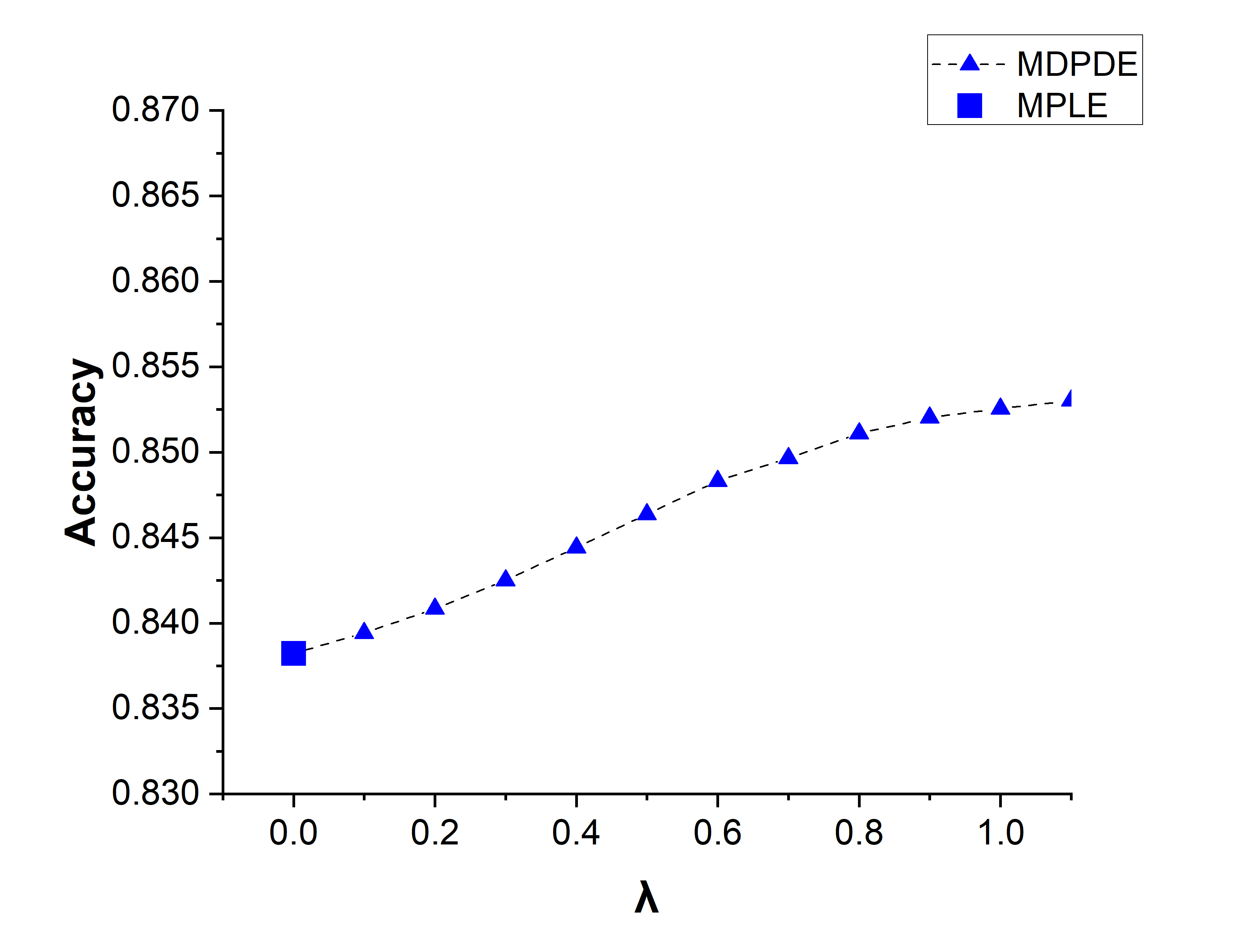}
        \caption{natural scenes}
    \end{subfigure}
    \hfill
    \begin{subfigure}[b]{0.49\textwidth}  
        \centering
        \includegraphics[width=1\linewidth]{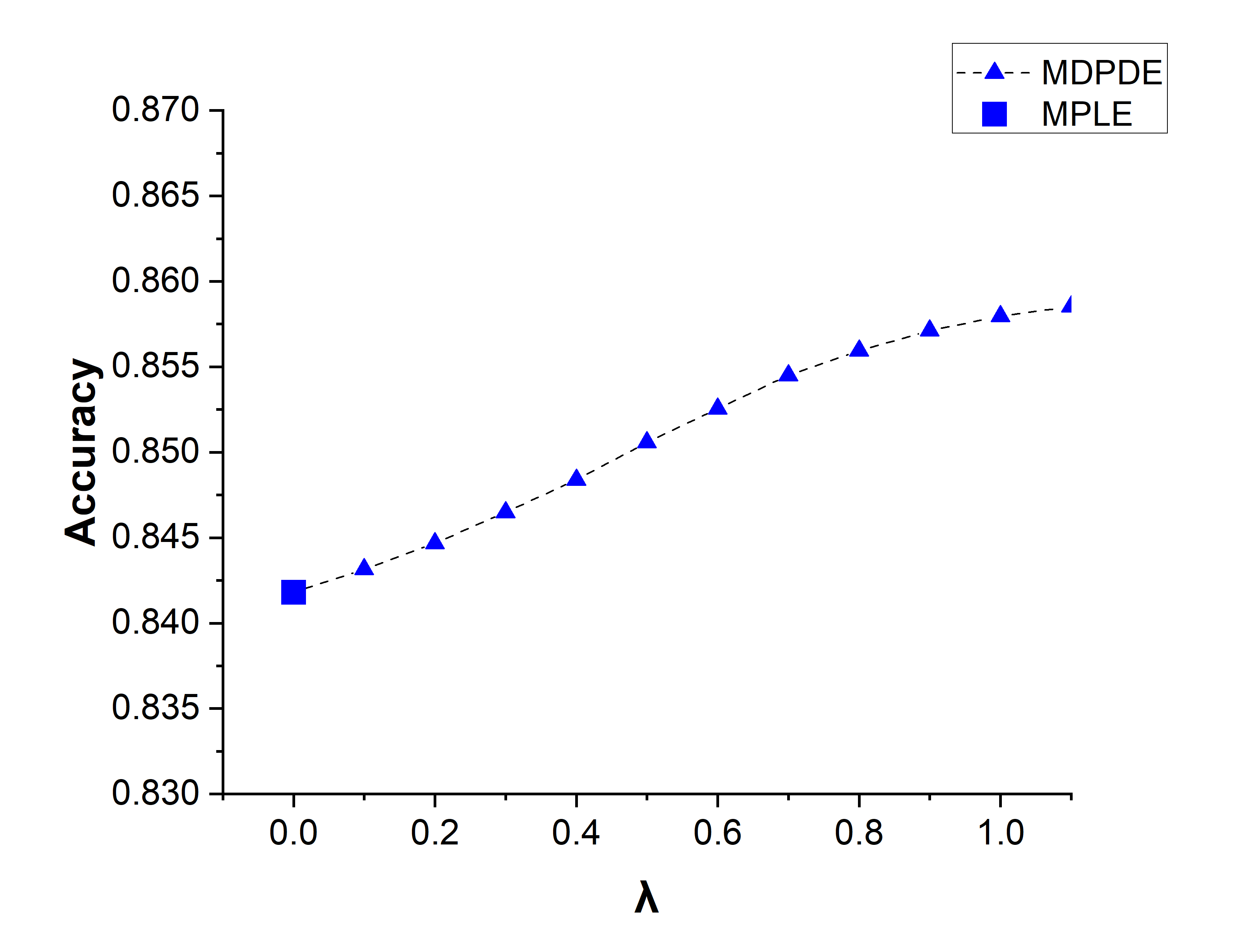}
        \caption{static gratings}
    \end{subfigure}
    \begin{subfigure}[b]{0.49\textwidth}  
        \centering
        \includegraphics[width=1\linewidth]{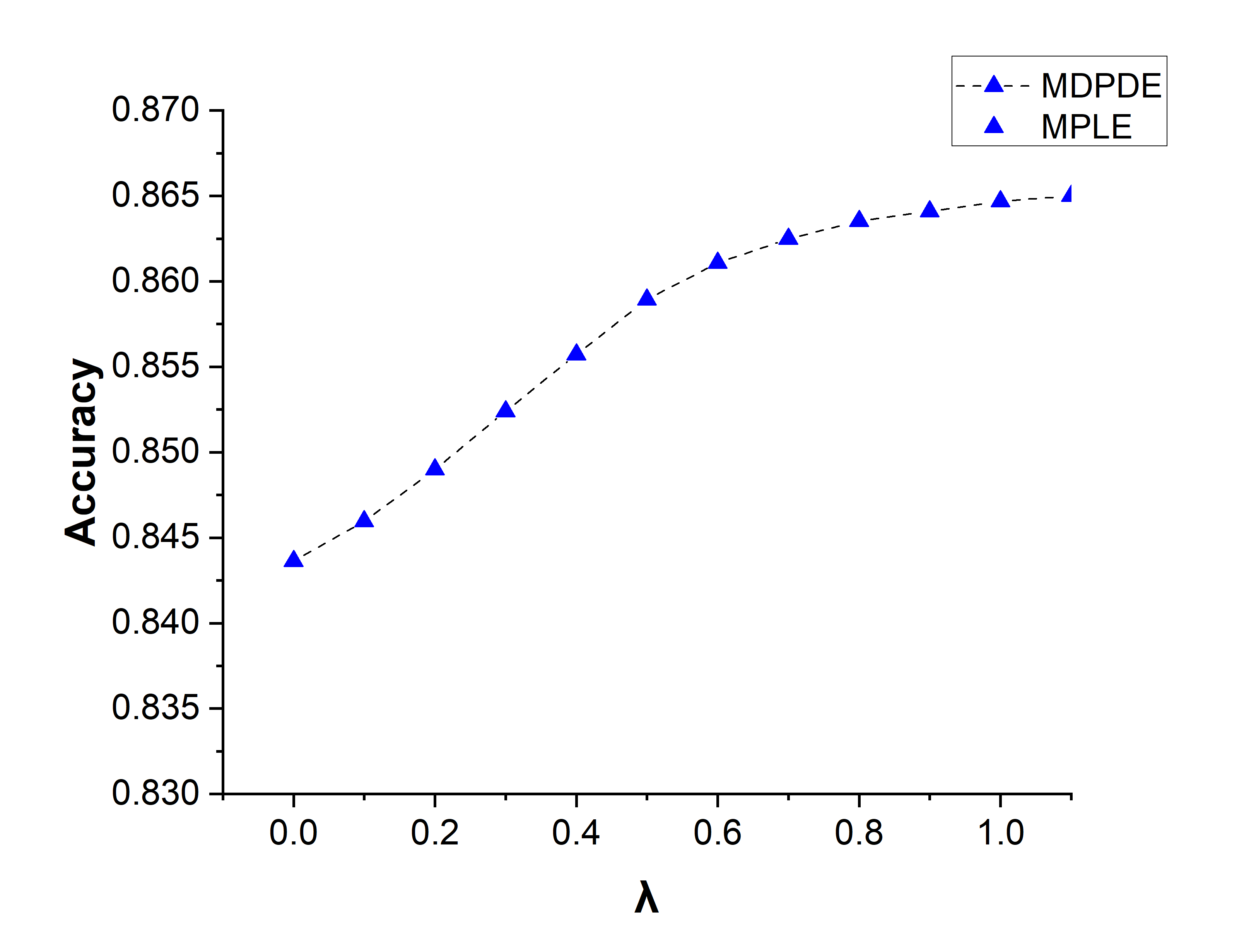}
        \caption{gabors}
    \end{subfigure}
    \hfill
    \begin{subfigure}[b]{0.49\textwidth}  
        \centering
        \includegraphics[width=1\linewidth]{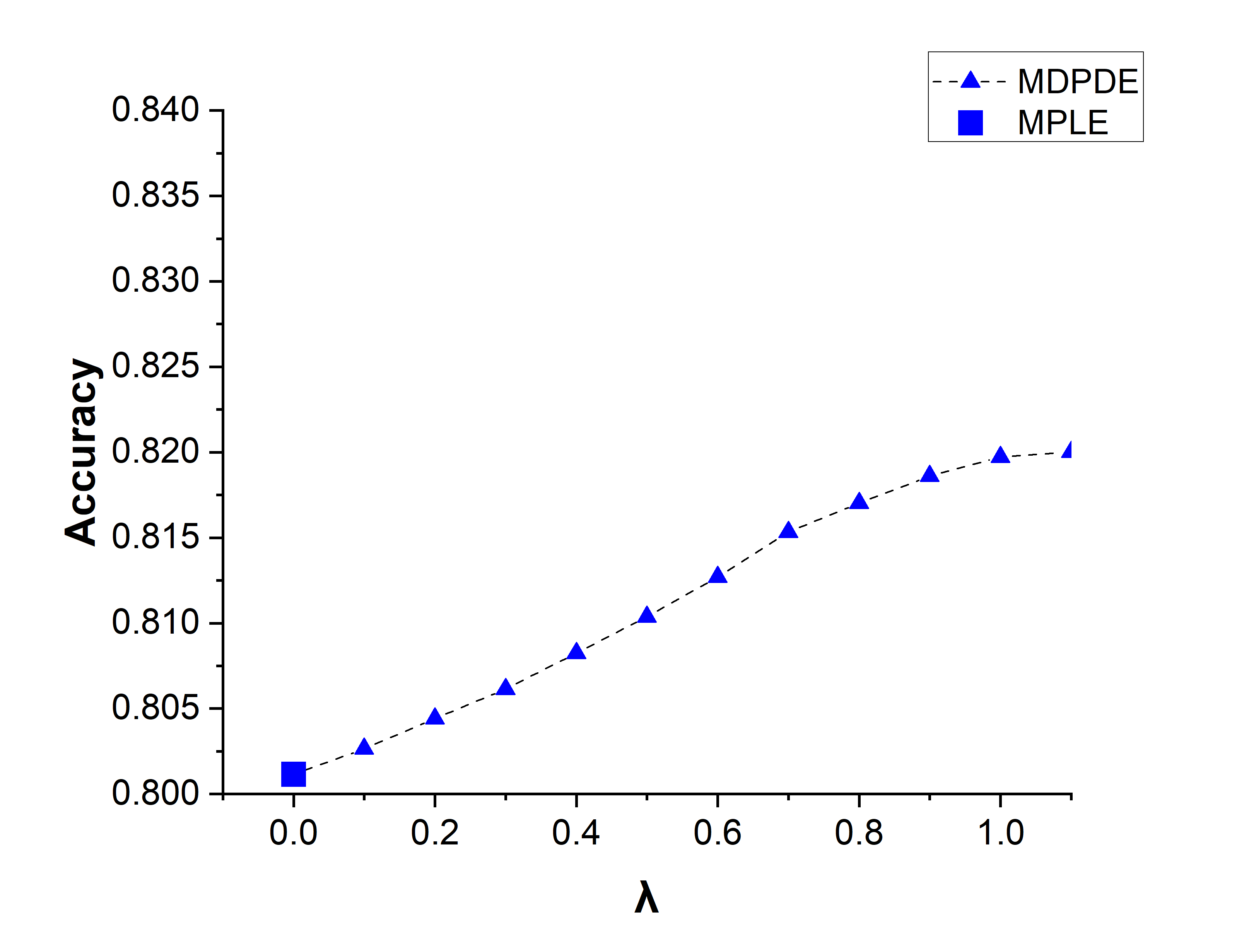}
        \caption{flashes}
    \end{subfigure}
    \caption{Average prediction accuracy obtained from the neuro-biological datasets.}
    \label{fig:neuro}
\end{figure}

Once again, we compare the performances of the MDPD estimators based on the node-wise prediction criterion followed in the previous example (social network datsets). However, in this dataset, there are multiple samples evenly splited into a training set and a test set. So, for each stimulus type, we fit an Ising model and obtain estimates $\hat{\beta}_{\lambda}$ for each sample in the training set, and use the mean of these estimates to predict the spin at each vertex $i$ via the conditional probabilities $\p_{{\hat{\beta}_\lambda}}(X_i|(X_{j})_{j\ne i})$ for samples in the test set. The resulting prediction accuracies are plotted across $\lambda$ in \autoref{fig:neuro}. 
All four plots there demonstrate an increasing trend in prediction accuracy as $\lambda$ increases to $1$, thereby indicating the presence of contamination in the underlying data and showcasing the efficacy of the MDPD estimators in handling such a situation. 


\subsection{Applications in Genomic Datasets}

As our final illustration, we evaluate the performance of the MDPD estimates on gene expression data obtained from the Curated Microarray Database (CuMiDa) (\url{https://sbcb.inf.ufrgs.br/cumida}) (\cite{sbcb}). This is a database consisting of 78 handpicked micro-array data sets for Homo sapiens that were carefully examined from more than 30,000 micro-array experiments from Gene Expression Omnibus using a rigorous filtering criterion. All data sets were individually submitted to background correction, normalization, sample quality analysis, and were manually edited to eliminate erroneous probes. Thus, the data can be regarded as more or less uncontaminated.

These data from the microarray experiment consist of a two-dimensional matrix with genes as rows and samples as columns (originating from different conditions). Each cell in the matrix is a normalized real number indicating how much a gene is expressed in a sample. These expression matrices will usually have thousands of rows and dozens or hundreds of columns. We selected two particular datasets, the $Liver\_GSE14520\_U133A$ dataset, which contains 181 samples from liver cancer patients, and the $Breast\_GSE70947$ dataset, which contains 66 samples from breast cancer patients. Since we do not know the true gene interaction structure, we apply the Peter-Clark (PC) algorithm \cite{glymour2019review, rahulsom} to estimate the underlying network. The values for each node are standardized to binary values $\{-1,+1\}$ by mapping original gene expression values above the mean to $+1$ and values below the mean to $-1$. 


We also follow the same approach as in the neurobiological dataset to compute and report the prediction accuracy in Figure \ref{fig:cumida}. Both these gene expression datasets show similar results for all MDPD estimators with $\lambda\in[0,1]$. This is expected because the datasets have gone through careful pre-processing as mentioned above, which must have minimized the contamination. 

\begin{figure}[!ht]
    \centering
    \begin{subfigure}[b]{0.49\textwidth}
        \centering
        \includegraphics[width=1\linewidth]{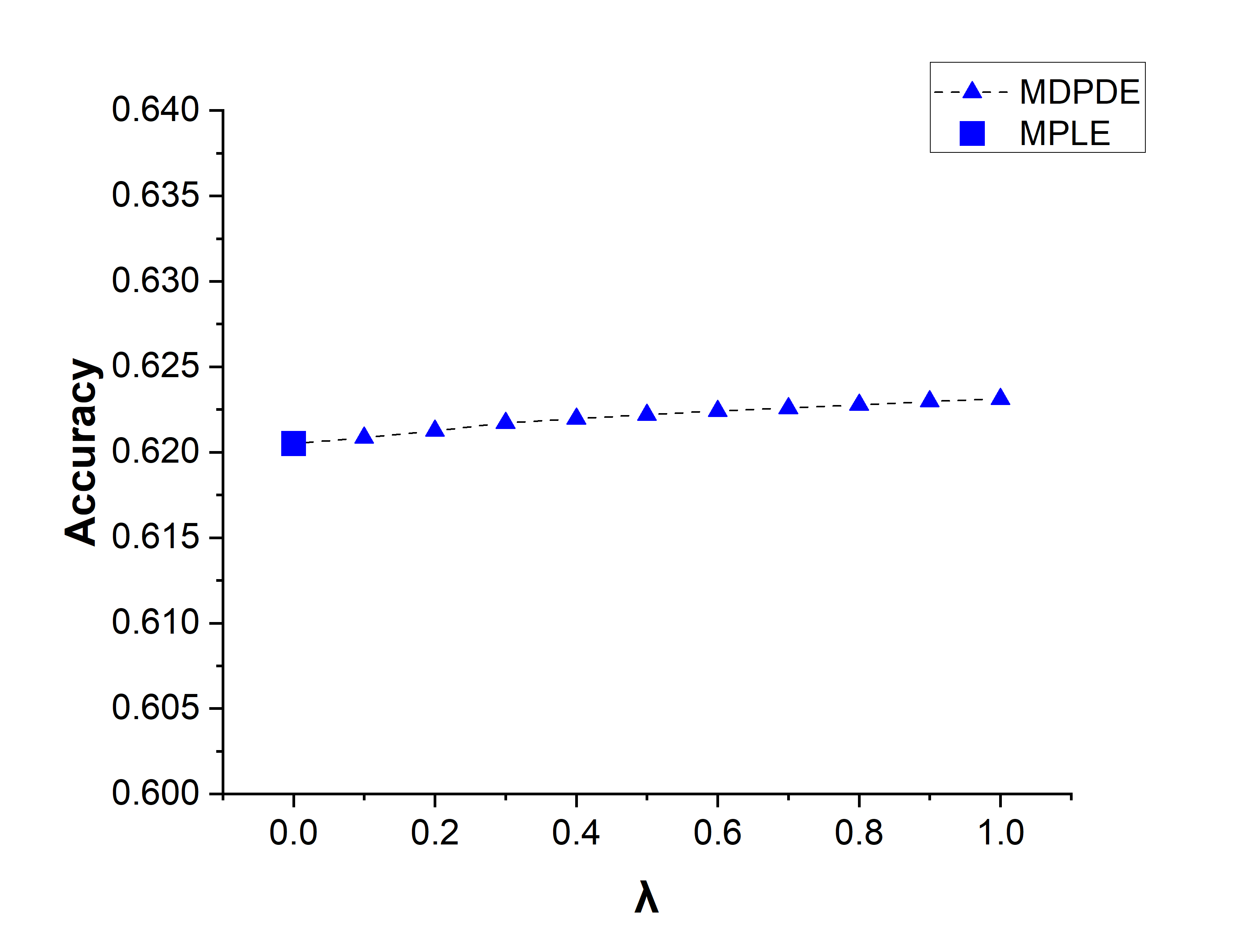}
        \caption{Liver\_GSE14520\_U133A}
    \end{subfigure}
    \hfill
    \begin{subfigure}[b]{0.49\textwidth}  
        \centering
        \includegraphics[width=1\linewidth]{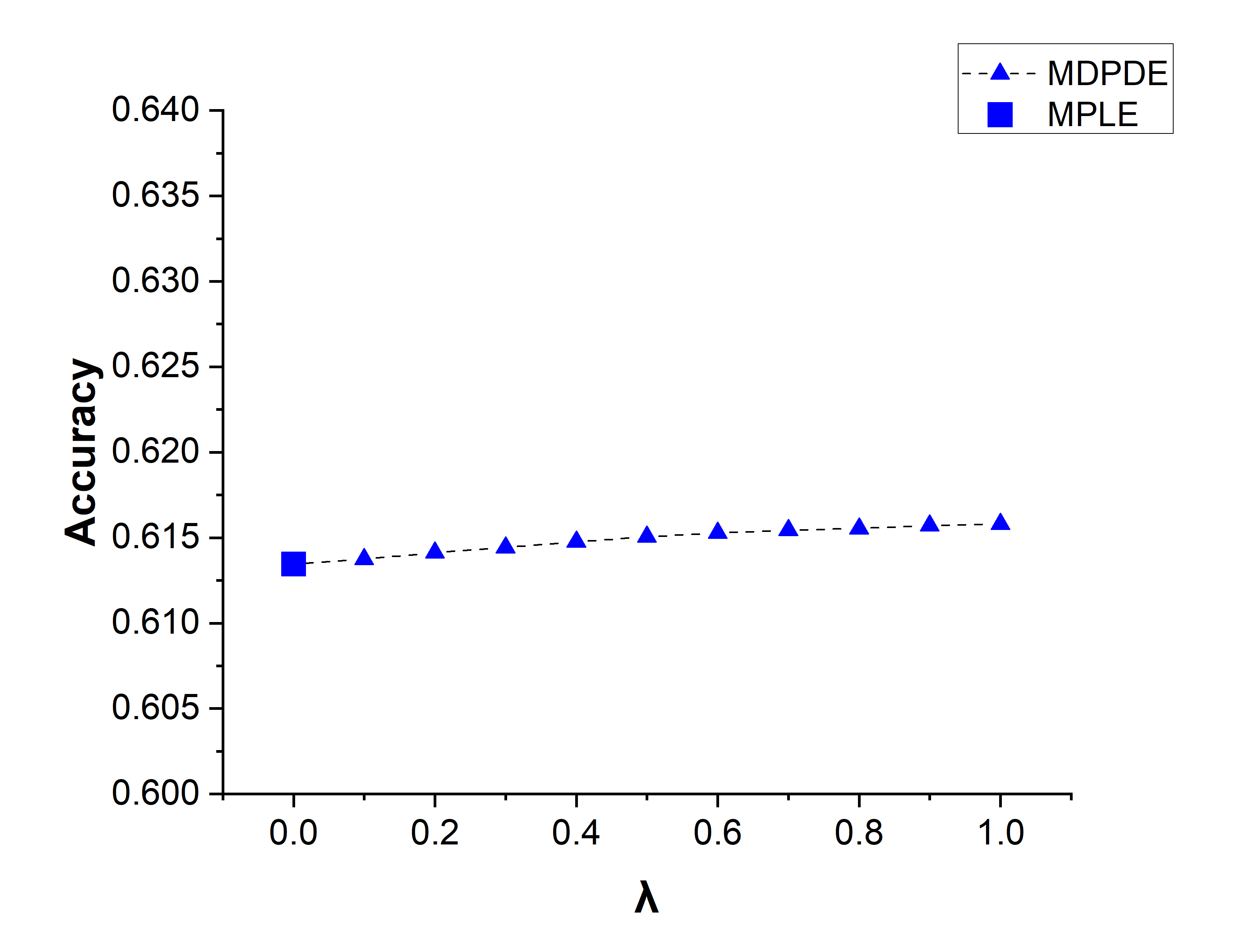}
        \caption{ Breast\_GSE70947}
    \end{subfigure}
    \caption{Average prediction accuracy obtained from the CuMiDa datasets}
    \label{fig:cumida}
\end{figure}

\begin{table}[!ht]
    \tiny
    \centering
    \begin{tabular}{l|l|l|l|l|l|l|l|l}
    \hline
        $\lambda$ & Github & Deezer & natural & static & gabors & flashes  & Breast & Liver \\ 
         & & & scene & gratings & gabors & flashes  & GSE70947 & GSE14520\_U133A \\ \hline
        0 & 0.4051  & 0.0326  & 0.4958 & 0.4955 & 0.6016 & 0.5245  & 0.3359  & 0.2929  \\ \hline
        0.1 & 0.4731  & 0.0334  & 0.5034 & 0.5045 & 0.6264 & 0.5361  & 0.3373  & 0.2940  \\ 
        0.2 & 0.5170  & 0.0341  & 0.5120 & 0.5148 & 0.6597 & 0.5499  & 0.3387  & 0.2950  \\ 
        0.3 & 0.5465  & 0.0347  & 0.5216 & 0.5267 & 0.7036 & 0.5660  & 0.3400  & 0.2961  \\ 
        0.4 & 0.5673  & 0.0353  & 0.5318 & 0.5403 & 0.7582 & 0.5850  & 0.3412  & 0.2970  \\ 
        0.5 & 0.5824  & 0.0358  & 0.5424 & 0.5555 & 0.8160 & 0.6086  & 0.3422  & 0.2979  \\         
        0.6 & 0.5935  & 0.0363  & 0.5529 & 0.5719 & 0.8655 & 0.6387  & 0.3432  & 0.2987  \\ 
        0.7 & 0.6017  & 0.0367  & 0.5625 & 0.5876 & 0.9057 & 0.6667  & 0.3440  & 0.2995  \\ 
        0.8 & 0.6077  & 0.0370  & 0.5709 & 0.6010 & 0.9364 & 0.6935  & 0.3447  & 0.3001  \\ 
        0.9 & 0.6121  & 0.0373  & 0.5777 & 0.6117 & 0.9585 & 0.7139  & 0.3453  & 0.3006  \\ \hline
    \end{tabular}
    \caption{The MDPDEs $\hat{\beta}_{\lambda}$ and the MPLE $\hat{\beta}_{MPL}$ (at $\lambda=0$) 
    for all real datasets}
    \label{tab:beta}
\end{table}

\section{Discussions}

Real-world network data often contain contamination, but classical interaction recovery and structure learning algorithms typically fail to account for such issues, leading to poor performance when applied to contaminated network data.
In this paper, we introduce a density power divergence (DPD) based variant of the computationally efficient maximum pseudolikelihood (MPL) estimator for the interaction strength parameter ($\beta$) in the Ising model, a fundamental framework for dependent binary data on networks. The proposed DPD estimators retain consistency under the pure model, while offering substantially greater robustness under data contamination compared to their MPL counterpart. This robustness is demonstrated by a reduction in mean-squared error in synthetic simulations and by improved prediction accuracy in real data applications. 
The proposed estimators are also shown to be asymptotically fully efficient under the pure data
for a wide range of network models; this has been achieved by proving a first CLT 
for a general class of $Z$-estimators covering these MDPD estimators.

Several interesting directions emerge from this work. A natural extension is to study more general network models such as the Potts model \cite{wu1982potts, gandolfo2010limit, eichelsbacher2015rates, bhowal2025limit}, where each observation can take a finite number of values rather than being restricted to binary states. Another promising avenue is the exploration of tensor versions of the classical Ising and Potts models, which move beyond pairwise interactions to incorporate higher-order dependencies that capture peer-group effects in hyper-network structures \cite{psm1, psm2, psm3, can2023meanfield, mukherjee2024efficient}.
A further direction of potential interest involves network-dependent logistic regression models \cite{cd_ising_I, cd_ising_II, mukherjee_jmlr}, which can be formulated as Ising models with site-specific magnetic fields. Developing robust estimation methods based on DPD estimators for such models would be an important step forward. 


Finally, as noted earlier, the structure learning problem, which seeks to recover the full interaction structure from multiple samples, is also of considerable interest, especially in high-dimensional regimes. In this setting, an important yet challenging direction is to develop penalized versions of robust estimators such as the proposed MDPD estimators, and establish their consistency under the pure model. We hope to pursue some of these important extensions in our future research. 

\bibliographystyle{plain} 
\bibliography{references}

\appendix

\section{Verifying Assumptions \ref{asd0}--\ref{asd2}}\label{sec:struct}
Let us now show that Assumptions \ref{asd0}--\ref{asd2} hold for Ising models on interaction structures satisfying Conditions \ref{bounded_sum}--\ref{spectral_gap}. 
By Lemma 2.1(b) in \cite{meanfield}, under Conditions \ref{bounded_sum}, \ref{regular} and \ref{spectral_gap}, we have:
$$\p\left(\frac{1}{N}\sum_{i=1}^N(m_i(\bm X)  - M(\bm X))^2 > \varepsilon\right) \le \frac{\e e^{\frac{\delta}{2}\sum_{i=1}^N (m_i(\bm X) - M(\bm X))^2}}{e^{\frac{\delta}{2}N\varepsilon}} \lesssim e^{O\left(\|\bm J_N\|_F^2+\sum_{i=1}^N(R_i-1)^2\right) - \frac{\delta}{2}N\varepsilon}$$ for some $\delta>0$, where $R_i:= \sum_{j=1}^N J_{i,j}$. It now follows from Conditions \ref{regular} and \ref{mean_field} that $\|\bm J_N\|_F^2+\sum_{i=1}^N(R_i-1)^2 = o_P(N)$, and hence, $$\frac{1}{N}\sum_{i=1}^N(m_i(\bm X)  - M(\bm X))^2 \xrightarrow{P} 0.$$
The fact that $\bar{X} - M(\bm X) = o_P(1)$ follows immediately from the exponential concentration given by Corollary 1.1  in \cite{meanfield}, thereby verifying  Assumption \ref{asd0}. 

Next, define $\bm d = \bm d^{(N)} := (d_i)_{i=1}^N$, where $$d_i := c_i(\alpha_im_i(\bm X)+(1-\alpha_i) M(\bm X))\cdot (m_i(\bm X)-M(\bm X)).$$ Note that $N^{-1} \sum_{i=1}^N d_i^2 \xrightarrow{P} 0$, and
$N^{-1} \bm d^\top \bm J \bm d \le \lambda_1(\bm J_N) N^{-1} \|\bm d\|^2 \xrightarrow{P} 0.$ Hence, by Theorem 5.3 in \cite{pivotal_clt}, we have:
$$\frac{1}{\sqrt{N}}\sum_{i=1}^N  c_i(\alpha_im_i(\bm X)+(1-\alpha_i) M(\bm X))\cdot (m_i(\bm X)-M(\bm X)) (X_i-\tanh(\beta m_i(\bm X))) = o_P(1).$$ Moreover, if we take $d_i = c_i(m_i(\bm X))/\sqrt{N}$, then once again by Theorem 5.3 in \cite{pivotal_clt}, we have:
$$\frac{1}{N}\sum_{i=1}^N c_i(m_i(\bm X))(X_i-\tanh(\beta m_i(\bm X))) = \frac{1}{\sqrt{N}}\sum_{i=1}^N d_i (X_i-\tanh(\beta m_i(\bm X))) = o_P(1),$$
thereby completing the verification of Assumption \ref{asd1}.

For verifying Assumption \ref{asd2}, we once again have from Theorem 5.3 in \cite{pivotal_clt} (on taking their constants $c_i$ to be $1$):
$$\frac{1}{\sqrt{N}}\sum_{i=1}^N (X_i -\tanh(\beta m_i(\bm X))) \xrightarrow{D} \mathcal{N}(0,(1-m_+^2)(1-\beta(1-m_+^2))).$$ This shows that Assumption \ref{asd2} holds with $\sigma_\beta^2 = (1-m_+^2)(1-\beta(1-m_+^2))$.

\section{Proof of Lemma \ref{sx}}\label{prooflem1}
Without loss of generality, we can work under the assumption that $\|\bm J\|=1$. Define a function $F:\{-1,1\}^N \times \{-1,1\}^N \rightarrow \R$ as
\begin{equation*}
    F(\tau, \tau^\prime)=\frac{1}{2}\sum_{i=1}^{N}(f_i(\tau)m_i(\tau)+f_i(\tau^{\prime})m_i(\tau^\prime))(\tau_i - \tau_i^\prime),
\end{equation*}
where 
$
    f_i(\bm X) := f(\beta,m_i(\bm X)).
$
We notice that the function $F$ is anti-symmetric, i.e. $F(\tau,\tau^\prime)=-F(\tau^\prime, \tau)$. 

Suppose now, that $\bm X$ is simulated from the distribution \eqref{modeldef}. Choose a coordinate $I$ uniformly at random from the set $[N] := \{1,\ldots,N\}$, and replace the $I^{\mathrm{th}}$ coordinate of $\bm X$ by a sample drawn from the conditional distribution of $X_{I}$ given $(X_{j})_{j\neq I}$ . Call the resulting vector $\bm X^\prime$. Then $(\bm X, \bm X^\prime)$ is an exchangeable pair of random variables. We notice that
$$
F(\bm X,\bm X^{\prime}) = f_{I}(\bm X) m_I(\bm X)(X_I-X_I^\prime).
$$
Note that:
\begin{equation*}
\begin{aligned}
    g(\bm X) := & \E[F(\bm X,\bm X^\prime)|\bm X]\\
    = & \frac{1}{N}\sum_{i=1}^N f_i(\bm X) m_i(\bm X)(X_i-\E[X_i|(X_j)_{j\neq i}])\\
    = & \frac{1}{N}\sum_{i=1}^N f_i(\bm X) m_i(\bm X)(X_i-\tanh(\beta m_i(\bm X)))\\
    = & S_{\bm X,\lambda}(\beta).
\end{aligned}
\end{equation*}
Now, $\E[g(\bm X)^2]=\E[g(\bm X)F(\bm X,\bm X^\prime)]$. Since $(\bm X,\bm X^\prime)$ is an exchangeable pair,
\begin{equation*}
    \E[g(\bm X^\prime)F(\bm X^\prime,\bm X)]=\E[g(\bm X)F(\bm X,\bm X^\prime)].
\end{equation*}
It now follows from the anti-symmetry of $F$, that
\begin{equation}\label{exp_square}
    \E[g(\bm X)^2]=\frac{1}{2}\E[(g(\bm X)-g(\bm X^\prime))F(\bm X,\bm X^\prime)].
\end{equation}

For any $1\leq j \leq N$ and $\tau \in \{-1,1\}^N$, let $\tau^{(j)}:=(\tau_1,...,\tau_{j-1}, -\tau_j,\tau_{j+1},...,\tau_N)$ and 
\begin{equation*}
\begin{aligned}
    p_j(\tau) = & \p(X_j^\prime = -\tau_j|\bm X=\tau,I=j)\\
    = & \frac{e^{-\beta\tau_j m_j(\tau)}}{e^{\beta m_j(\tau)}+e^{-\beta m_j(\tau)}}
\end{aligned}
\end{equation*}
Then
\begin{equation*}
    \begin{aligned}
        & \E[(g(\bm X)-g(\bm X^\prime))F(\bm X,\bm X^\prime)|\bm X]\\
        = & \frac{1}{N}\sum_{i=1}^N (g(\bm X)-g(\bm X^{(j)}))F(\bm X,\bm X^{(j)})p_j(\bm X)\\
        = & \frac{2}{N}\sum_{i=1}^N (g(\bm X)-g(\bm X^{(j)})) f_j(\bm X)m_j(\bm X)X_j p_j(\bm X).
    \end{aligned}
\end{equation*}
In order to simplify notations, let us define:
\begin{equation*}
    a_i(\tau) := \tau_i-\tanh(\beta m_i(\tau)),
\end{equation*}
\begin{equation*}
    b_{ij}(\tau):= \tanh(\beta m_i(\tau))-\tanh(\beta m_i(\tau^{(j)})),
\end{equation*}
\begin{equation*}
    c_{ij}(\tau):= f_i(\tau)-f_i(\tau^{(j)}).
\end{equation*}
Then,
\begin{equation*}
\begin{aligned}
    & g(\bm X)-g(\bm X^{(j)})\\
    = & \frac{1}{N}\sum_{i=1}^N f_i(\bm X)m_i(\bm X)(X_i-\tanh(\beta m_i(\bm X)))-\frac{1}{N}\sum_{i=1}^N f_i(\bm X^{(j)})m_i(\bm X^{(j)})(X^{(j)}_i-\tanh(\beta m_i(\bm X^{(j)})))\\
    = & \frac{1}{N}\sum_{i=1}^N(m_i(\bm X)-m_i(\bm X^{(j)}))f_i(\bm X)a_i(\bm X)+\frac{1}{N}\sum_{i=1}^N m_i(\bm X^{(j)})f_i(\bm X)(a_i(\bm X)-a_i(\bm X^{(j)}))\\
    & +\frac{1}{N}\sum_{i=1}^N m_i(\bm X^{(j)})a_i(\bm X)(f_i(\bm X)-f_i(\bm X^{(j)}))\\ &+\frac{1}{N}\sum_{i=1}^N m_i(\bm X^{(j)})(a_i(\bm X^{(j)})-a_i(\bm X))(f_i(\bm X)-f_i(\bm X^{(j)}))\\
    = & \frac{2 X_j}{N}\sum_{i=1}^N J_{ij}f_i(\bm X)a_i(\bm X) + \frac{2f_j(\bm X^{(j)})\cdot m_j(\bm X) X_j}{N}-\frac{1}{N}\sum_{i=1}^N f_i(\bm X)m_i(\bm X^{(j)})b_{ij}(\bm X) \\
    & +\frac{1}{N}\sum_{i=1}^N m_i(X^{(j)})a_i(\bm X)c_{ij}(\bm X)+ \frac{1}{N}\sum_{i=1}^N m_i(\bm X^{(j)})b_{ij}(\bm X)c_{ij}(\bm X)\\
\end{aligned}
\end{equation*}
Let $T_{1j}$, $T_{2j}$, $T_{3j}$, $T_{4j}$ and $T_{5j}$ be the five terms in the last expression, in that order. We can rewrite \ref{exp_square} as:
\begin{equation}\label{fivedec}
    \E[g(\bm X)^2] = \frac{1}{N}\sum_{j=1}^N (T_{1j}+T_{2j}+T_{3j}+T_{4j}+T_{5j})f_j(\bm X)m_j(\bm X)X_j p_j(\bm X).
\end{equation}

To begin with, note that by Lemma \ref{f_bound}, $0 < f_i(\bm X)\leq 2^{\lambda}$. This, together with the fact that $|a_i| \le 2$, implies that $\sum_{i=1}^N f_i(\bm X)^2 a_i^2 \leq 4^{\lambda+1}N$ and
\begin{equation}\label{fmp}
    \sum_{j=1}^N f_j(\bm X)^2 m_j(\bm X)^2 p_j(\bm X)^2\leq 4^{\lambda} \sum_{j=1}^N m_j(\bm X)^2 = 4^{\lambda} \|\bm J \bm X\|^2 \leq 4^{\lambda}N.
\end{equation}
Then, for the first term of the sum in the RHS of \eqref{fivedec}, we have:
\begin{eqnarray}\label{combeq21}
    \bigg|\frac{1}{N}\sum_{j=1}^N T_{1j}f_j(\bm X)m_j(\bm X)X_j p_j(\bm X)\bigg| &=&  \frac{2}{N^2}\bigg|\sum_{i,j=1}^N f_i(\bm X)a_i(\bm X)J_{ij} f_j(\bm X)m_j(\bm X) p_j(\bm X)\bigg|\nonumber\\
    & \leq& \frac{2}{N^2} \sqrt{4^{\lambda+1}N}\sqrt{4^{\lambda}N} = \frac{2^{2\lambda+2}}{N}.
\end{eqnarray}
Next, for the second term in the RHS of \eqref{fivedec}, we have:
\begin{eqnarray}\label{combeq22}
    \left|\frac{1}{N}\sum_{j=1}^N T_{2j}f_j(\bm X)X_jm_j(\bm X) p_j(\bm X)\right| & =& \left|\frac{2}{N^2}\sum_{j=1}^N  f_j(\bm X^{(j)})f_j(\bm X) m_j(\bm X)^2 p_j(\bm X)\right|\nonumber\\
    & \leq&  \frac{2\cdot 4^\lambda}{N^2}\sum_{j=1}^N m_j(\bm X)^2 \leq \frac{2^{2\lambda+1}}{N}.
\end{eqnarray}

Let us now bound the third term in the RHS of \eqref{fivedec}. If we let $\bm J_2$ be the matrix $(J_{ij}^2)_{1\leq i,j\leq N}$, then we have:
$$
    \|\bm J_2\|\leq \max_{1\leq i\leq N}\sum_{j=1}^N J_{ij}^2 \leq \max_{1\leq i\leq N}  \|\bm e_i^\top \bm J\|^2\leq \max_{1\leq i\leq N} \|\bm e_i\|^2\|\bm J\|^2 = 1
$$
where $\bm e_i$ denotes the unit vector with $1$ in the $i^{\mathrm{th}}$ coordinate and zeros elsewhere.
Now, letting $h(x) = \tanh(\beta x)$, it is not difficult to verify that $\|h^{\prime\prime}\|_{\infty}\leq 2\beta^2$. Thus,
$$
    |h(m_i(\bm X))-h(m_i(\bm X^{(j)}))-(m_i(\bm X)-m_i(\bm X^{(j)}))h^{\prime}(m_i(\bm X))|\leq \beta^2 (m_i(\bm X)-m_i(\bm X^{(j)}))^2
$$
Letting $d_i(\bm X) := h^{\prime}(m_i(\bm X))$, and noting that $m_i(\bm X)-m_i(\bm X^{(j)})=2J_{ij}X_j$, we can rewrite the above inequality as:
\begin{equation*}
    |b_{ij}(\bm X)-2J_{ij}X_j d_{i}(\bm X)|\leq 4\beta^2 J_{ij}^2.
\end{equation*}
Also, we note that $|d_i(\bm X)|\leq \beta$. With all the bounds above, we see that for any $\bm u,\bm v\in \mathbb{R}^N$,
\begin{equation}\label{xy}
\begin{aligned}
    \bigg|\sum_{i,j}u_i v_j b_{ij}(\bm X) \bigg| & \leq \bigg|\sum_{i,j}u_i v_j(2J_{ij}X_j d_i(\bm X))) \bigg|+ \bigg|\sum_{i,j}u_i v_j (b_{ij}(\bm X)-2J_{ij}X_j d_i(\bm X)) \bigg|\\
    & \leq 2\bigg(\sum_{i}(u_i d_i(\bm X))^2 \bigg)^{\frac{1}{2}}\bigg(\sum_{j}(v_j X_j)^2 \bigg)^{\frac{1}{2}} + 4\sum_{i,j}|u_i v_j|\beta^2 J_{ij}^2\\
    & \leq (2\beta + 4\beta^2)\|\bm u\|\|\bm v\|.
\end{aligned}
\end{equation}
Since $\tanh$ is $1$-Lipschitz,  $|b_{ij}(\bm X)| \leq 2\beta|J_{ij} |$. Hence, we have:
\begin{equation}\label{Jxy}
    \bigg|\sum_{i,j}u_i v_j J_{ij} b_{ij}(\bm X) \bigg|\leq \sum_{ij}|u_i v_j|2\beta J_{ij}^2 \leq 2\beta \|\bm u\|\|\bm v\|.
\end{equation}
With \eqref{xy} and \eqref{Jxy}, we can bound the $T_{3j}$ term as follows:
\begin{eqnarray}\label{combeq23}
    &&\bigg|\frac{1}{N}\sum_{j=1}^N T_{3j}f_j(\bm X)m_j(\bm X)X_j p_j(\bm X)\bigg|\nonumber\\ &=&  \frac{1}{N^2}\bigg|\sum_{i,j=1}^N f_i(\bm X)m_i(\bm X^{(j)})b_{ij}(\bm X) f_j(\bm X)m_j(\bm X) X_j p_j(\bm X)\bigg|\nonumber\\
    &=&  \frac{1}{N^2}\bigg|\sum_{i,j}^N f_i(\bm X)(m_i(\bm X)-2J_{ij}X_j)b_{ij}(\bm X) f_j(\bm X)m_j(\bm X) X_j p_j(\bm X)\bigg|\nonumber\\
    &\leq& \textcolor{black}{\frac{4^\lambda}{N^2}(2\beta +4\beta^2 + 4\beta)\max \left\{1,\sqrt{\sum_{i=1}^N m_i(\bm X)^2}\right\} \sqrt{ \sum_{j=1}^N m_j(\bm X)^2 p_j^2}}\nonumber\\
    &\leq& \frac{2^{2\lambda+1}(2\beta^2+3\beta)}{N}.
\end{eqnarray}
Next, we bound the fourth term in the RHS of \eqref{fivedec}. Denote $f_i(\bm X) = f(\beta, m_i(\bm X))$ by $q(m_i(\bm X))$. Then, in view of Lemma \ref{f_bound} we have:
$$
    |q(m_i(\bm X))-q(m_i(\bm X^{(j)}))-(m_i(\bm X)-m_i(\bm X^{(j)}))q^{\prime}(m_i(\bm X))|\leq D_{\lambda} \beta^2 (m_i(\bm X)-m_i(\bm X^{(j)}))^2.
$$
for some constant $D_\lambda$ depending only on $\lambda$. Letting $t_i(\bm X) := q^{\prime}(m_i(\bm X))$, and noting that $m_i(\bm X)-m_i(\bm X^{(j)})=2J_{ij}X_j$, we can rewrite the above inequality as:
\begin{equation*}
    |c_{ij}(\bm X)-2J_{ij}X_j t_{i}(\bm X)|\leq 4 D_{\lambda} \beta^2 J_{ij}^2.
\end{equation*}
Also we note that $|t_i(\bm X)|\leq C_\lambda  \beta$ for some constant $C_\lambda$ depending only on $\lambda$. With all the bounds above and the bounds on the operator norms of $\bm J$ and $\bm J_2$, we see that for any $\bm u, \bm v\in \R^N$,
\begin{equation}\label{cxy}
\begin{aligned}
    \bigg|\sum_{i,j}u_i v_j c_{ij}(\bm X) \bigg| & \leq \bigg|\sum_{i,j}u_i v_j(2J_{ij}X_j t_i(\bm X))) \bigg|+ \bigg|\sum_{i,j}u_i v_j (c_{ij}(\bm X)-2J_{ij}X_j t_i(\bm X)) \bigg|\\
    & \leq 2\bigg(\sum_{i}(u_i t_{i}(\bm X))^2 \bigg)^{\frac{1}{2}}\bigg(\sum_{j}(v_j X_j)^2 \bigg)^{\frac{1}{2}} + 4\sum_{i,j}|u_i v_j|D_{\lambda}  \beta^2 J_{ij}^2\\
    & \leq 2(C_\lambda \beta+ 2 D_\lambda\beta^2))\|\bm u\|\|\bm v\|.
\end{aligned}
\end{equation}
Also, from Lemma \ref{f_bound}, we have $|c_{ij}(\bm X)| \leq  2C_\lambda\beta|J_{ij} |$. Thus, we have:
\begin{equation}\label{Jcxy}
    \bigg|\sum_{i,j}u_i v_j J_{ij} c_{ij}(\bm X) \bigg|\leq 2C_\lambda \beta\sum_{i,j}|u_i v_j| J_{ij}^2 \leq 2C_\lambda\beta \|\bm u\|\|\bm v\|.
\end{equation}

With \ref{fmp}, \ref{cxy} and \ref{Jcxy}, and noticing that $\sum a_i(\bm X)^2 m_i(\bm X)^2 \leq 4N$, we get:
\begin{equation}\label{combeq24}
\begin{aligned}
    &\bigg|\frac{1}{N}\sum_{j=1}^N T_{4j}f_j(\bm X)m_j(\bm X)X_j p_j(\bm X)\bigg| \\
    =\ & \frac{1}{N^2}\bigg|\sum_{i,j=1}^N m_i(\bm X^{(j)})a_i(\bm X)c_{ij}(\bm X)f_j(\bm X)m_j(\bm X) X_j p_j(\bm X)\bigg|\\
    =\ &  \frac{1}{N^2}\bigg|\sum_{i,j=1}^N a_i(\bm X)(m_i(\bm X)-2J_{ij}X_j)c_{ij}(\bm X) f_j(\bm X)m_j(\bm X) X_j p_j(\bm X)\bigg|\\
    \leq \ & \frac{K_{1,\lambda,\beta}}{N^2}\sqrt{\sum_{i=1}^N a_i(\bm X)^2m_i(\bm X)^2} \sqrt{ \sum_{j=1}^N f_j(\bm X)^2m_j(\bm X)^2 p_j(\bm X)^2}\\
    \leq \ & \frac{K_{\lambda,\beta}}{N}
\end{aligned}
\end{equation}
for some constants $K_{1,\lambda,\beta}$ and $K_{\lambda,\beta}$ depending only on $\lambda$ and $\beta$.

Finally we bound the $T_{5j}$ term. 
Since for any $1 \leq i \leq N$, $\sum_{j=1}^N J_{ij}^2\leq 1$, we have:
\begin{equation}\label{J31}
    \sum_{j=1}^N |J_{ij}|^3 \leq \sum_{j=1}^N J_{ij}^2\leq 1.
\end{equation}
We define the matrix $\bm J_3:=(J_{ij}^3)_{1\leq i,j\leq N}$. Note that:
\begin{equation}\label{J32}
    \|\bm J_3\|\leq \max_{1\leq i\leq N}\sum_{j=1}^N |J_{ij}|^3 \leq 1.
\end{equation}

Also, we have $|b_{ij}(\bm X)|\leq 2\beta |J_{ij}|$ and $|c_{ij}(\bm X)| \leq L_{\lambda}\beta|J_{ij} |$ for some constant $L_\lambda$ depending only on $\lambda$. Together with the bound on the operator norm of $\bm J_3$, we get:
\begin{equation}\label{bigeq51}
\begin{aligned}
    &\ \bigg|\sum_{i,j=1}^N J_{ij}X_j b_{ij}(\bm X)c_{ij}(\bm X)f_j(\bm X)m_j(\bm X) X_j p_j(\bm X)\bigg|\\
    \leq &\ 2L_\lambda \beta^2\sum_{i,j=1}^N\bigg|J_{ij}^3 f_j(\bm X)m_j(\bm X) p_j(\bm X) \bigg|\\
    \leq &\ 2L_\lambda \beta^2 \|\bm J_3\| \sqrt{N}\sqrt{\sum_{j=1}^N f_j^2(\bm X) m_j^2(\bm X) p_j^2(\bm X)}\\ \leq & 2^{\lambda+1} L_\lambda \beta^2 N.
\end{aligned}
\end{equation}

Similarly, using the bound on the operator norm of $\bm J_2$, we have:
\begin{equation}\label{bigeq52}
\begin{aligned}
    &\ \bigg|\sum_{i,j=1}^N m_i(\bm X) b_{ij}(\bm X)c_{ij}(\bm X)f_j(\bm X)m_j(\bm X) X_j p_j(\bm X)\bigg|\\
    \leq &\ 2L_\lambda \beta^2\sum_{i,j=1}^N \bigg|m_i(\bm X)f_j(\bm X)m_j(\bm X) X_j p_j(\bm X)\bigg| J_{ij}^2\\
    \leq &\ 2^{\lambda+1} L_\lambda \beta^2 N.
\end{aligned}
\end{equation}
Combining \eqref{bigeq51} and \eqref{bigeq52}, we have:
\begin{equation}\label{combeq25}
\begin{aligned}
    &\bigg|\frac{1}{N}\sum_{j=1}^N T_{5j}f_j(\bm X)m_j(\bm X)X_j p_j(\bm X)\bigg| \\
    = \ &  \frac{1}{N^2}\bigg|\sum_{i,j=1}^N m_i(\bm X^{(j)})b_{ij}(\bm X)c_{ij}(\bm X)f_j(\bm X)m_j(\bm X) X_j p_j(\bm X)\bigg|\\
    = \ &  \frac{1}{N^2}\bigg|\sum_{i,j=1}^N (m_i(\bm X)-2J_{ij}X_j)b_{ij}(\bm X)c_{ij}(\bm X) f_j(\bm X)m_j(\bm X) X_j p_j(\bm X)\bigg|\\
    \leq \ & \frac{2^{\lambda+3} L_\lambda \beta^2}{N}.
\end{aligned}
\end{equation}

Combining \eqref{combeq21}, \eqref{combeq22}, \eqref{combeq23}, \eqref{combeq24} and \eqref{combeq25}, we have:
\begin{eqnarray}
    & &\E[g(X)^2] \nonumber\\
    &=&  \frac{1}{N}\sum_{j=1}^N (T_{1j}+T_{2j}+T_{3j}+T_{4j}+T_{5j})f_j(X)m_j(X)X_j p_j(X)\nonumber\\
    &= & O_{\lambda,\beta}\left(\frac{1}{N}\right).
\end{eqnarray}
This completes the proof of Lemma \ref{sx}. \qed

\section{Proof of Lemma \ref{sbeta1}}\label{prooflem2}
Note that both $S_{\bm X,\lambda}(\beta)$ and $S_{\bm X,\lambda}'(\beta)_1$ are of the following form:
$$
S_{\bm X,\lambda}^{\prime}(\beta)_1=\frac{1}{N}\sum_{i=1}^N \bigg[\phi(\beta,m_i(\bm X)) m_i(\bm X)(X_i - \tanh(\beta m_i(\bm X)))\bigg].
$$
for $\phi(\beta,t) := f(\beta,t)$ and $\partial f(\beta, t)/\partial \beta$, respectively. The only property of the function $\phi$ that was required for the proof of Lemma \ref{sx} to go through, was the boundedness of $\phi, \partial \phi/\partial x$ and $\partial^2\phi/\partial x^2$. Hence, to prove Lemma \ref{sbeta1}, we need the same three properties for the new function $$\phi(\beta,t) := \frac{\partial f(\beta, t)}{\partial \beta} = \frac{t}{\beta} \frac{\partial f(\beta, t)}{\partial \beta}\quad \text{(see \eqref{symmbt})},$$ which are shown in Lemma \ref{f_bound2}. This proves Lemma \ref{sbeta1}. \qed

\section{Additional Technical Lemmas}
\begin{lem}\label{f_bound}
    For any $\beta>0$ and $\lambda>0$, we have
    \begin{enumerate}[label=(\roman*)]
        \item $0<f(\beta,x) \leq 2^\lambda$, 
        \item $|\partial f(\beta,x)/\partial x|\leq C_\lambda \beta$, and for any $h \in \R$,
        \item $\left|\frac{\partial^2 f(\beta,x)}{\partial x^2}\right|\leq D_\lambda \beta^2$.
    \end{enumerate}
    where $C_{\lambda}$ and $D_\lambda$ are constants dependent only on $\lambda$. 
\end{lem}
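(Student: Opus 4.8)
The plan is to reduce everything to a single-variable problem. Since $f(\beta,x)$ depends on $(\beta,x)$ only through the product $\beta x$, I would write $f(\beta,x)=\phi(\beta x)$ with
\[
\phi(u):=\frac{\cosh((\lambda-1)u)}{\cosh^{\lambda+1}(u)}.
\]
Then $\partial f/\partial x=\beta\,\phi'(\beta x)$ and $\partial^2 f/\partial x^2=\beta^2\,\phi''(\beta x)$, so parts (ii) and (iii) reduce immediately to the uniform bounds $\sup_u|\phi'(u)|\leq C_\lambda$ and $\sup_u|\phi''(u)|\leq D_\lambda$, while part (i) is exactly $0<\phi(u)\leq 2^\lambda$ for all $u\in\R$. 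Positivity is clear since $\cosh>0$ everywhere, which also keeps the denominators away from zero and makes $\phi$ smooth.

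For the upper bound in (i), note that $\phi$ is even, so I may take $u\geq 0$ and set $s:=e^u\geq 1$. Clearing the factors of $2$ from the hyperbolic cosines gives
\[
\phi(u)=2^\lambda\cdot\frac{s^{\lambda-1}+s^{1-\lambda}}{(s+s^{-1})^{\lambda+1}},
\]
so it suffices to show that the fraction is at most $1$, i.e.\ $s^{\lambda-1}+s^{1-\lambda}\leq (s+s^{-1})^{\lambda+1}$. This follows by chaining two elementary facts: superadditivity of $t\mapsto t^{\lambda+1}$ for $\lambda+1\geq 1$ yields $(s+s^{-1})^{\lambda+1}\geq s^{\lambda+1}+s^{-(\lambda+1)}$, and monotonicity of $\cosh$ in $|\cdot|$ (using $\lambda+1\geq|\lambda-1|$ for $\lambda\geq 0$) yields $s^{\lambda+1}+s^{-(\lambda+1)}\geq s^{\lambda-1}+s^{1-\lambda}$. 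Together these give $\phi(u)\leq 2^\lambda$.

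For (ii) and (iii) I would differentiate $\phi$ explicitly. The first derivative follows from \eqref{symmbt} and the relation $\partial f/\partial t=(\beta/t)\,\partial f/\partial\beta$, giving
\[
\phi'(u)=\frac{\lambda\sinh((\lambda-2)u)-\sinh(\lambda u)}{\cosh^{\lambda+2}(u)},
\]
and a second differentiation (quotient rule followed by product-to-sum identities) writes $\phi''$ in the form $N_2(u)/\cosh^{\lambda+3}(u)$, where $N_2$ is a finite linear combination, with coefficients depending only on $\lambda$, of terms $\cosh(au)$ and $\sinh(au)$ with $|a|\leq\lambda+3$. In general each of $\phi'$ and $\phi''$ has the shape $N_j(u)/\cosh^{\lambda+1+j}(u)$ ($j=1,2$) with every frequency $|a|$ in the numerator bounded by $\lambda+1+j$. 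The bounds then drop out of the elementary estimates $|\cosh(au)|,|\sinh(au)|\leq e^{|a||u|}\leq e^{(\lambda+1+j)|u|}$ together with $\cosh(u)\geq\tfrac12 e^{|u|}$, which give $\cosh^{\lambda+1+j}(u)\geq 2^{-(\lambda+1+j)}e^{(\lambda+1+j)|u|}$; the exponential factors cancel exactly, leaving a constant (the sum of absolute coefficients times a power of $2$) depending only on $\lambda$. This yields, for instance, $C_\lambda=(\lambda+1)2^{\lambda+2}$ for (ii), and an analogous $D_\lambda$ for (iii).

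The routine part is the bookkeeping in (ii)--(iii); the only point requiring care there is verifying that no differentiation produces a hyperbolic frequency exceeding the denominator's exponent $\lambda+1+j$, since that is precisely what makes the exponential cancellation exact and keeps the constants independent of $u$. The main (though still elementary) obstacle is the \emph{sharp} constant $2^\lambda$ in (i): estimating $\cosh(au)\leq e^{|a||u|}$ crudely only delivers $2^{\lambda+1}$, and recovering the stated $2^\lambda$ is exactly what the superadditivity-plus-monotonicity argument above is designed to achieve.
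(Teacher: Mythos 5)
Your proof is correct. Part (ii) coincides with the paper's argument almost verbatim: the same closed form $\phi'(u)=\bigl(\lambda\sinh((\lambda-2)u)-\sinh(\lambda u)\bigr)/\cosh^{\lambda+2}(u)$ (the paper's \eqref{coshbound}, with the $\beta$ factors restored by the chain rule), the same exponential cancellation via $\cosh^{\lambda+2}(u)\ge 2^{-(\lambda+2)}e^{(\lambda+2)|u|}$, and even the same constant $C_\lambda=(\lambda+1)2^{\lambda+2}$. You genuinely diverge on (i) and (iii). For (i), the paper uses the subtraction identity $\cosh((\lambda-1)u)=\cosh(\lambda u)\cosh(u)-\sinh(\lambda u)\sinh(u)$ to write $\phi(u)=\cosh(\lambda u)/\cosh^{\lambda}(u)-\bigl(\sinh(\lambda u)/\cosh^{\lambda}(u)\bigr)\tanh(u)$, observes that the subtracted term is even and nonnegative, and bounds the first term by $2^\lambda$; your route of substituting $s=e^u$ and chaining superadditivity of $t\mapsto t^{\lambda+1}$ (valid since $\lambda+1\ge1$) with monotonicity of $a\mapsto s^a+s^{-a}$ in $|a|$ (valid since $\lambda+1\ge|\lambda-1|$) reaches the same sharp constant $2^\lambda$ without any hyperbolic identity, and is arguably more self-contained — you are also right that the crude estimate only gives $2^{\lambda+1}$, which is exactly the loss the paper's decomposition avoids. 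For (iii), the paper computes $q''$ explicitly as $\beta^2$ times a combination of $\cosh(\lambda\beta x)/\cosh^{\lambda+2}(\beta x)$ and $\bigl(\sinh(\lambda\beta x)/\cosh^{\lambda}(\beta x)\bigr)\tanh(\beta x)$ multiplied by a polynomial in $\tanh(\beta x)$, bounding each factor by $2^\lambda$ or $1$ to get an explicit $D_\lambda$; your structural argument — that $\phi''=N_2(u)/\cosh^{\lambda+3}(u)$ with every hyperbolic frequency in $N_2$ at most $\lambda+3$, so the exponentials cancel exactly — checks out (product-to-sum yields frequencies $\lambda+1$, $|\lambda-1|$, $|\lambda-3|$, all at most $\lambda+3$) and has the advantage of extending mechanically to higher derivatives, at the mild cost of a less explicit constant. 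Note also that your appeal to \eqref{symmbt} for the formula of $\phi'$ is dispensable, since direct differentiation plus product-to-sum gives it in two lines.
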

\begin{proof}
To ease notations, let us fix $\beta>0$, and denote $f(\beta,x)$  by $q(x)$. That $q(x)>0$, is trivial. We notice that
\begin{equation}\label{qxexprc}
    q(x)=  \frac{\cosh((\lambda-1)\beta x)}{\cosh^{\lambda+1}(\beta x)} \\
    =  \frac{\cosh(\lambda \beta x)}{\cosh^{\lambda}(\beta x)}- \frac{\sinh(\lambda \beta x)}{\cosh^{\lambda}(\beta x)}\cdot \tanh(\beta x).
\end{equation}
Now, we have the following: 
\begin{equation*}
    \frac{\cosh(\lambda \beta x)}{\cosh^{\lambda}(\beta x)}  = \frac{e^{\lambda \beta x}+e^{-\lambda \beta x}}{(e^{ \beta x}+e^{- \beta x})^\lambda}\cdot 2^{\lambda-1}\\
     \leq \frac{e^{\lambda \beta x}+e^{-\lambda \beta x}}{e^{\lambda \beta x}+e^{-\lambda \beta x}}\cdot 2^{\lambda}\\
     = 2^{\lambda},
\end{equation*}
Since $q(x)$ is non-negative, and the second term in \eqref{qxexprc} is an even function, part (i) of Lemma \ref{f_bound} follows. An elementary calculation gives:
\begin{equation}\label{coshbound}
    q^\prime(x) =  \frac{\beta\lambda \sinh((\lambda-2)\beta x)-\beta\sinh(\lambda \beta x)}{\cosh^{\lambda+2}(\beta x)}
\end{equation}
\textcolor{black}{The numerator of \eqref{coshbound} can be upper-bounded by:
\[
\left| \lambda \sinh((\lambda - 2)\beta x) - \sinh(\lambda \beta x) \right|
\leq \lambda e^{|(\lambda - 2)\beta x|} + e^{\lambda \beta |x|} \le \lambda e^{(\lambda +2)\beta |x|} + e^{\lambda \beta |x|}.
\]
On the other hand, the denominator of \eqref{coshbound} can be lower-bounded by:
\[
\cosh^{\lambda + 2}(\beta x) \geq \left( \frac{e^{|\beta x|}}{2} \right)^{\lambda + 2} = 2^{-(\lambda + 2)} e^{(\lambda + 2)\beta |x|}.
\]
Combining these two bounds, we get:
\[
|q'(x)| \le \beta 2^{\lambda+2}\left(\lambda + e^{-2\beta|x|}\right) \le \beta 2^{\lambda+2}\left(\lambda + 1\right),
\]}
which proves part (ii).
\textcolor{black}{Another straightforward but tedious calculation shows: 
\begin{equation*}
\begin{aligned}
    q^{\prime\prime}(x) = & \beta^2\bigg(\frac{\lambda(\lambda-3)\cosh(\lambda \beta x)}{\cosh^{\lambda+2}(\beta x)}\\
    & + \frac{\sinh(\lambda \beta x)}{\cosh^{\lambda}(\beta x)}\cdot \tanh(\beta x)(2+3\lambda-3\lambda^2+(\lambda+1)(\lambda+2)\tanh^2(\beta x)) \bigg)\\
\end{aligned}
\end{equation*}}
and hence, we have:
\begin{equation*}
    |q^{\prime\prime}(x)|\leq \beta^2 2^{\lambda}(|\lambda(\lambda-3)|+|2+3\lambda-3\lambda^2|+(\lambda+1)(\lambda+2)).
\end{equation*}
Part (iii) now follows from Taylor's theorem, which completes the proof of Lemma \ref{f_bound}.
\end{proof}

\begin{lem}\label{f_bound2}
    Define $p(x):=x q^\prime(x)$ where $q(x) := f(\beta,x)$. Then, for any $\beta>0$ and $\lambda>0$, there exist constants $U_\lambda, V_\lambda$ and $W_\lambda$, such that for all $x$,
    $$|p(x)| \leq U_\lambda, ~|p^{\prime}(x)|  \leq V_\lambda, ~\text{and}~|p^{\prime\prime}(x)| \leq W_\lambda.$$
\end{lem}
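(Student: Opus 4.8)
The plan is to reduce everything to a single $\beta$-free function and then invoke an exponential-decay argument. Using \eqref{coshbound} we have $q'(x)=\partial f(\beta,x)/\partial x=\beta[\lambda\sinh((\lambda-2)\beta x)-\sinh(\lambda\beta x)]/\cosh^{\lambda+2}(\beta x)$, so after the substitution $u:=\beta x$ (which cancels the explicit $\beta$, since $x=u/\beta$),
\begin{equation*}
p(x)=x\,q'(x)=\frac{u\bigl(\lambda\sinh((\lambda-2)u)-\sinh(\lambda u)\bigr)}{\cosh^{\lambda+2}(u)}=:P(u),
\end{equation*}
a function depending on $\lambda$ alone. By the chain rule $p^{(k)}(x)=\beta^{k}P^{(k)}(u)$, so it suffices to bound $P$, $P'$ and $P''$ uniformly on $\R$; the resulting powers of $\beta$ in the bounds on $p',p''$ are harmless, being absorbed into the final $\beta$-dependent constant of Lemma \ref{sbeta1}, exactly as the powers of $\beta$ in Lemma \ref{f_bound} are absorbed there.

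The key tool is a decay lemma: for any real $a$ and real $b>0$ with $|a|<b$, and any polynomial $\pi$, each of the functions $\pi(u)\cosh(au)/\cosh^{b}(u)$ and $\pi(u)\sinh(au)/\cosh^{b}(u)$ is bounded on $\R$. Indeed, since $\cosh\geq 1$ each such function is continuous, and using $\cosh(u)\geq\tfrac12 e^{|u|}$ together with $|\cosh(au)|,|\sinh(au)|\leq e^{|a||u|}$ gives a bound of the form $2^{b}|\pi(u)|\,e^{-(b-|a|)|u|}$, which tends to $0$ as $|u|\to\infty$ because $b-|a|>0$; a continuous function vanishing at $\pm\infty$ is bounded. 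Note it is precisely this argument, and not the cruder estimate of Lemma \ref{f_bound}(ii), that is needed here: the bound $|q'(x)|\le C_\lambda\beta$ does not decay, whereas boundedness of $p=x\,q'$ requires the genuine exponential decay of $q'$ to beat the factor $x$.

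It then remains to check that $P$, $P'$, $P''$ are finite linear combinations of such admissible terms. For $P$ itself this is immediate, the two base frequencies satisfying $|\lambda-2|<\lambda+2$ and $\lambda<\lambda+2$ for every $\lambda>0$. For the derivatives I would argue that differentiation preserves the admissibility condition \emph{frequency strictly below the $\cosh$-power in the denominator}: differentiating $\sinh(au)/\cosh^{b}(u)$ produces $a\cosh(au)/\cosh^{b}(u)$ together with $-b\,\sinh(au)\sinh(u)/\cosh^{b+1}(u)$, and the product-to-sum identity $\sinh(au)\sinh(u)=\tfrac12[\cosh((a+1)u)-\cosh((a-1)u)]$ replaces the frequency $a$ (with $|a|<b$) by $a\pm1$ over the new power $b+1$, where $|a\pm1|\leq|a|+1<b+1$. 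The $\cosh$-type terms are handled analogously, and the extra monomial factors arising from $P=u\,g$ are irrelevant to the decay lemma. Hence $P,P',P''$ are each bounded by a constant depending only on $\lambda$, which yields the stated uniform-in-$x$ bounds on $p,p',p''$ (up to the harmless powers of $\beta$ noted above).

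I expect the main obstacle to be the bookkeeping in this last step: verifying that neither differentiation nor the product-to-sum rewriting ever pushes a frequency up to or past the denominator exponent, so that the strict inequality $|a|<b$, and hence the exponential decay, is maintained throughout. An alternative, more computational route would be to differentiate $P$ explicitly twice, as is done for $q''$ in the proof of Lemma \ref{f_bound}, and bound each resulting term by hand; this avoids the induction but is considerably more tedious.
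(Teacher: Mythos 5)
Your proposal is correct, but it takes a genuinely different route from the paper's. After the common first step --- substituting $z=\beta x$ so that $p(x)=P(z)$ with $P$ depending only on $\lambda$ --- the paper proceeds by an explicit factorization $P(z)=p_1(z)\bigl(p_2(z)-p_3(z)\bigr)$, with $p_1(z)=\sinh(\lambda z)/\cosh^{\lambda}(z)$, $p_2(z)=2\lambda z\bigl(1-\tanh(z)/\tanh(\lambda z)\bigr)$ and $p_3(z)=(\lambda+1)z\,\sech^2(z)$, and then bounds each factor and its first two derivatives by hand (via elementary estimates on $z/(e^{2z}+1)$, $z/(e^{2\lambda z}-1)$ and their derivatives, and by recycling the bounds on $q$ and $q'$ from Lemma \ref{f_bound} for $p_1'$ and $p_1''$). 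You instead prove a single structural decay lemma --- any $\pi(u)\cosh(au)/\cosh^{b}(u)$ or $\pi(u)\sinh(au)/\cosh^{b}(u)$ with $\pi$ polynomial and $|a|<b$ is bounded on $\R$ --- and observe that the class of finite linear combinations of such terms is closed under differentiation, the product-to-sum identities replacing a frequency $a$ over power $b$ by $a\pm1$ over power $b+1$ with $|a\pm1|\le|a|+1<b+1$, so strictness is preserved; since $P$ itself is admissible ($|\lambda-2|<\lambda+2$ and $\lambda<\lambda+2$ for every $\lambda>0$), boundedness of $P$, $P'$, $P''$ follows without ever computing them. Your route buys generality and eliminates the paper's tedious explicit differentiations (it would give all higher derivatives for free), while the paper's route yields concrete constants for most pieces, e.g.\ $|p_1|\le 2^{\lambda}$ and $|p_3|\le 2(\lambda+1)$, which your vanishing-at-infinity argument does not produce. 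A point in your favor: you make the chain-rule factors explicit, $p^{(k)}(x)=\beta^{k}P^{(k)}(\beta x)$, and correctly note that the $\beta$-free constants claimed in the lemma statement can only refer to derivatives in the rescaled variable; the paper's proof tacitly adopts the same convention (all of its derivative bounds are in $z$), and the extra powers of $\beta$ are exactly what the application in Lemma \ref{sbeta1} tolerates, matching the pattern $|q'|\le C_\lambda\beta$, $|q''|\le D_\lambda\beta^2$ of Lemma \ref{f_bound}.
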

\begin{proof}
From \eqref{coshbound}, we have
\begin{equation*}
\begin{aligned}
    p(x) = xq^\prime(x) = & \frac{\lambda \sinh((\lambda-2)\beta x)-\sinh(\lambda \beta x)}{\cosh^{\lambda+2}(\beta x)}\cdot\beta x\\
    = & \bigg(\frac{2\lambda\sinh(\lambda \beta x)}{\cosh^{\lambda}(\beta x)}\left(1-\frac{\tanh(\beta x)}{\tanh(\lambda \beta x)}\right)-\frac{(\lambda+1)\sinh(\lambda \beta x)}{\cosh^{\lambda}(\beta x)}\sech^2(\beta x)\bigg) \beta x\\
    = & \frac{\sinh(\lambda\beta x)}{\cosh^{\lambda}(\beta x)}\bigg(2\lambda\left(1-\frac{\tanh(\beta x)}{\tanh(\lambda \beta x)}\right) -(\lambda+1)\sech^2(\beta x)\bigg) \beta x.
\end{aligned}
\end{equation*}
Let us define:
$$
p_1(z) := \frac{\sinh(\lambda  z)}{\cosh^{\lambda}(z)},\quad p_2(z):=2\lambda z\left(1-\frac{\tanh z}{\tanh(\lambda z)}\right),\quad \text{and}\quad p_3(z):= (\lambda+1) z \sech^2(z),
$$
Clearly, we have:
\begin{equation}\label{sinh}
\begin{aligned}
    |p_1(z)|=\bigg|\frac{\sinh(\lambda z)}{\cosh^{\lambda}(z)}\bigg| & = \bigg|\frac{e^{\lambda z}-e^{-\lambda z}}{(e^{ z}+e^{- z})^\lambda}\cdot 2^{\lambda-1}\bigg|\\
    & \leq \frac{e^{\lambda z}+e^{-\lambda z}}{e^{\lambda z}+e^{-\lambda z}}\cdot 2^{\lambda}\\
    & = 2^{\lambda}.
\end{aligned}
\end{equation}
We may also notice that
\begin{equation*}
    |p_1^\prime(z)|=\lambda \bigg| \frac{\cosh(\lambda z)}{\cosh^{\lambda}(z)}-\frac{\sinh(\lambda z)}{\cosh^{\lambda}(z)}\tanh(z)  \bigg| = \lambda|q(z/\beta)|\leq \textcolor{black}{\lambda} 2^{\lambda}.
\end{equation*}
It follows from the bound of $|q'(x)|$ in the proof of Lemma \ref{f_bound}, that: $$|p_1^{\prime\prime}(z)|=\lambda\left|\frac{1}{\beta}q^\prime(z/\beta)\right|\leq \textcolor{black}{\lambda (\lambda+1) 2^{\lambda+2}}~.$$

Next, note that:
\begin{equation*}
\begin{aligned}
    p_2(z) & = 2\lambda z \left(1-\frac{e^{2z}-1}{e^{2z}+1}\cdot \frac{e^{2\lambda z}+1}{e^{2\lambda z}-1}\right)\\
    & = \frac{4\lambda z\cdot (e^{2\lambda z}-e^{2z})}{(e^{2z}+1)(e^{2\lambda z}-1)}\\
    & = 4\lambda \bigg(\frac{z}{e^{2z}+1}-\frac{z}{e^{2\lambda z}-1}\cdot\tanh(z)\bigg).
\end{aligned}
\end{equation*}
First, suppose that $z\ge 0$. Since $e^{z}\geq 1+z$ we have:
$$
     \frac{z}{e^{2z}+1}\leq \frac{z}{2z+2}\leq \frac{1}{2}.
$$
It also follows from straightforward algebra, that:
$$
   \textcolor{black}{\left|\left(\frac{z}{e^{2z}+1}\right)^\prime\right|  = \left|\frac{1}{2} +\frac{1-e^{4z}-4ze^{2z}}{2(e^{2z}+1)^2}\right| \le A}
   $$
    and
    $$
    \textcolor{black}{\left|\left(\frac{z}{e^{2z}+1}\right)^{\prime\prime}\right| = \left|\frac{4 e^{2z} \left[ (1 + z)(e^{2z} + 1) - 2z e^{2z} \right]}{(e^{2z} + 1)^3}\right| \le B}$$
for some finite, positive constants $A$ and $B$. 
Hence,
$$    
     \left|\frac{z}{e^{2\lambda z}-1}\right|\leq \frac{1}{2\lambda}, \quad \left|\left(\frac{z}{e^{2\lambda z}-1}\right)^{\prime}\right|\leq \frac{A}{\lambda}\quad \text{and}\quad \left|\left(\frac{z}{e^{2\lambda z}-1}\right)^{\prime\prime}\right|\leq \frac{B}{\lambda}~.
$$
Since $\sech^2(z) \in (0,1)$ and $\tanh(z) \in (-1,1)$, we can also conclude that $p_2$ and its first two derivatives are bounded by some constant (depending on $\lambda$).

We may also notice that
\begin{equation*}
\begin{aligned}
    |p_3(z)|=|(\lambda+1)\sech^2(z)\cdot z| & = \frac{4(\lambda+1)|z|}{(e^{z}+e^{-z})^2}\\
    & \leq \frac{4(\lambda+1)|z|}{e^{2|z|}}\\
    & \leq \frac{4(\lambda+1)|z|}{2|z|+1}\\
    & \leq 2(\lambda+1),
\end{aligned}
\end{equation*}
and
\begin{equation*}
    |p_3^{\prime}(z)|=|(\lambda+1)(\sech^2(z)-2z\cdot\sech^2(z)\tanh(z))|\leq 5(\lambda+1),\quad  |p_3^{\prime\prime}(z)|\leq 16(\lambda+1).
\end{equation*}
This completes the proof of Lemma \ref{f_bound2}.
\end{proof}

\begin{lem}\label{prokh}
Suppose that $M, u > 0$ are fixed, and $u_N$ is a sequence of non-negative real numbers, satisfying:
$$\sqrt{N}|e^{-u_N}- e^{-u}| \le M$$ for some $N\ge M^2/(e^{-u}-e^{-2u})^2$. Then, $$|u_n-u| \le \frac{Me^{2u}}{\sqrt{N}}~.$$
\end{lem}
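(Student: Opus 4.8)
The plan is to convert the bound on $|e^{-u_N} - e^{-u}|$ into a bound on $|u_N - u|$ via the mean value theorem applied to the function $x \mapsto e^{-x}$. Writing $e^{-u_N} - e^{-u} = -e^{-\xi}(u_N - u)$ for some $\xi$ lying between $u_N$ and $u$, we immediately obtain
\begin{equation*}
|u_N - u| = e^{\xi}\,\big|e^{-u_N} - e^{-u}\big| \leq \frac{M e^{\xi}}{\sqrt{N}}.
\end{equation*}
Thus it suffices to show $e^{\xi} \leq e^{2u}$, which will follow once we establish the a priori upper bound $u_N \leq 2u$, since then $\xi \leq \max\{u_N, u\} \leq 2u$.

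First I would use the lower bound on $N$ to control $u_N$ from above. The hypothesis $N \geq M^2/(e^{-u} - e^{-2u})^2$ is equivalent (note $u > 0$ makes $e^{-u} - e^{-2u} > 0$) to $M/\sqrt{N} \leq e^{-u} - e^{-2u}$. Combined with the given $|e^{-u_N} - e^{-u}| \leq M/\sqrt{N}$, this yields $e^{-u} - e^{-u_N} \leq e^{-u} - e^{-2u}$, hence $e^{-u_N} \geq e^{-2u}$; since $u_N \geq 0$, monotonicity of $x \mapsto e^{-x}$ then gives the desired bound $u_N \leq 2u$.

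Combining the two steps, $\xi \leq 2u$, so $e^{\xi} \leq e^{2u}$, and the first display becomes $|u_N - u| \leq M e^{2u}/\sqrt{N}$, completing the proof. The only delicate point is the a priori control of $u_N$: the mean value factor $e^{\xi}$ is unbounded in general, and it is precisely the threshold imposed on $N$ that prevents $u_N$ from escaping toward $+\infty$ and pins $\xi$ into the safe range $[0, 2u]$. Everything else is a routine application of the mean value theorem together with the monotonicity of $x \mapsto e^{-x}$.
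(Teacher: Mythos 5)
Your proof is correct and follows essentially the same route as the paper's: both derive the a priori bound $e^{-u_N}\geq e^{-2u}$ (hence $u_N\leq 2u$) from the threshold $N\geq M^2/(e^{-u}-e^{-2u})^2$, and then apply the mean value theorem to $x\mapsto e^{-x}$ with the intermediate point pinned in $[0,2u]$. Your write-up is in fact slightly cleaner, since the paper's displayed condition ``$\xi_N\geq e^{-2u}$'' is a typo for $e^{-\xi_N}\geq e^{-2u}$, which you state correctly.
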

\begin{proof}
    First, note that $$e^{-u_N} \ge e^{-u} - \frac{M}{\sqrt{N}} \ge e^{-2u}~.$$
    Hence, we have for some sequence $\xi_N \ge e^{-2u}$,
    $$\frac{M}{\sqrt{N}} \ge |e^{-u_N}- e^{-u}| = |u_N-u|e^{-\xi_N} \ge |u_N-u| e^{-2u}~,$$
    which completes the proof of Lemma \ref{prokh}.
\end{proof}

\end{document}